\newtheorem{theorem}{Theorem}
\newtheorem{proposition}{Proposition}
\newtheorem{corollary}{Corollary}
\newtheorem{lemma}{Lemma}
\newtheorem*{proof}{Proof}
\newtheorem{definition}{Definition}
\newcommand{\vect}[1]{\boldsymbol{#1}}
\newcommand{\eat}[1]{}
\newcommand{\techrep}[2]{#1}
\begin{document}

\title{A Game-Theoretic Analysis of Adversarial Classification}

\author{Lemonia~Dritsoula, Patrick~Loiseau, and~John~Musacchio
\thanks{Lemonia Dritsoula was with the University of California, Santa Cruz (UCSC), USA and is now at Google. Patrick Loiseau is with EURECOM, France and MPI-SWS, Germany. John Musacchio is with the Department of Technology Management at UCSC. E-mail: lenia@soe.ucsc.edu, patrick.loiseau@eurecom.fr, johnm@soe.ucsc.edu. This work was supported by AFOSR grant FA9550-09-1-0049 and NSF grants CNS-0910711 and CNS-0953884 and we acknowledge funding from the Alexander von Humboldt Foundation. }
}

\maketitle

\begin{abstract}
Attack detection is usually approached as a classification problem. However, standard classification tools often perform poorly because an adaptive attacker can shape his attacks in response to the algorithm. This has led to the recent interest in developing methods for \emph{adversarial classification}, but to the best of our knowledge, there have been very few prior studies that take into account the attacker's tradeoff between adapting to the classifier being used against him with his desire to maintain the efficacy of his attack. Including this effect is key to derive solutions that perform well in practice.

In this investigation we model the interaction as a game between a defender who chooses a classifier to distinguish between attacks and normal behavior based on a set of observed features and an attacker who chooses his attack features (class 1 data). Normal behavior (class 0 data) is random and exogenous. The attacker's objective balances the benefit from attacks and the cost of being detected while the defender's objective balances the benefit of a correct attack detection and the cost of false alarm. We provide an efficient algorithm to compute all Nash equilibria and a compact characterization of the possible forms of a Nash equilibrium that reveals intuitive messages on how to perform classification in the presence of an attacker. We also explore qualitatively and quantitatively the impact of the non-attacker and underlying parameters on the equilibrium strategies.
\end{abstract}

\begin{IEEEkeywords}
adversarial classification, game theory, security, Nash equilibrium, threshold strategies, randomization
\end{IEEEkeywords}

\IEEEpeerreviewmaketitle

\section{Introduction}

Classification is one of the most used tools from machine learning. In its simplest instance, a classification algorithm trains a model from a set of labeled data samples of two different classes (class 0 and class 1) and then uses this model to predict the class of new data samples. Many classification algorithms (Support Vector Machines, Logistic Regression, Naive Bayes, etc.) were developed over the past decades and successfully used in applications ranging from computer vision to biology or marketing~\cite{Barber12a, Sebe05a, citeulike:1442986}. 

One of the most prominent applications of classification is security \cite{Tsai09a}, where a defender typically aims to classify a system's usage into normal/non-attack (class 0) or malicious/attack (class 1). In this framework, \emph{attacks} can range from spams received at a user's inbox (class 1) that must be distinguished from regular emails (class 0) to more serious attacks such as DoS or malicious infiltrations on a server (class 1) that must be distinguished from benign traffic (class 0). Although standard classification algorithms have been used to perform such tasks relatively successfully for some time, recent experimental studies showed that a smart adaptive attacker can easily shape his attacks to render those algorithms inefficient \cite{Nelson09a, Sommer10outsidethe, Thomas:2013:TFA:2534766.2534784, Wang14a}. This leads to the crucial question of how to perform \emph{adversarial classification}, that is classification in a setting where (part of) the data is not i.i.d. but rather generated by an adversary trying to fool the classifier. 

In the last decade, a significant body of work from the machine learning and security communities appeared on adversarial classification (see Section~\ref{sec:related_work}). The literature defines two types of attacks: poisoning and evasion attacks. In poisoning attacks, the attacker can alter the training set. The adversarial classification literature mainly analyzes the vulnerability of standard algorithms in this case and sometimes proposes more robust algorithms. Most of these solutions, however, are either too simplistic in that they do not account for the attacker's adaptive nature, or too pessimistic because they rely on the worst-case (zero-sum) assumption that the attacker and defender have opposite objectives; leading to sub-optimal performance in both cases. In evasion attacks, the classifier is fixed and the attacker simply attempts to reverse-engineer it to find an attack that suits his goal while being classified in class 0. Interestingly, there, the literature recently proposed to use randomized classifiers as a defense; but without any formalism to justify the need for randomization and to optimize the distribution of classifiers used for defense. Overall, the problem of finding a classifier that works optimally against an adaptive attacker with a realistic (nonzero-sum) objective remains open. 

In parallel, a set of works appeared on applying game theory to security scenarios in order to compute optimal defenses against an adaptive adversary, in particular in the contexts of intrusion detection \cite{alpcan2010network} and defense resource allocation \cite{Tambe11a} (see Section~\ref{sec:related_work}). In most games studied there, there exists a unique Nash equilibrium in mixed strategies, which justifies randomization and provides a framework to compute the defense distribution. However, none of these works studied a game with a payoff that captures the full complexity of the adversarial classification problem. 

In this paper, we use a game-theoretic approach to tackle the question of how to perform classification in an adversarial setting where an adaptive attacker whose objective is not opposite to the defender's generates the data of class 1. Specifically, we propose to model the system as a game between an attacker and a defender, where the attacker chooses his attack patterns (corresponding to class 1 data) and the defender chooses a classification strategy. In our game, normal usage (class 0 data) is random and exogenous. The attacker's objective balances the benefit from attacks and the cost of being detected while the defender's objective balances the benefit of a correct attack detection and the cost of false alarm. 
Then, we give a complete analysis of the game's Nash equilibria that provides both an algorithm to compute all Nash equilibria and a characterization of the possible forms of a Nash equilibrium. Our characterization is very compact and reveals several important and intuitive messages on classification in the presence of an attacker. 
\begin{enumerate}
\item First, we find that the defender must randomize the classification rule. 
\item Second, the defender should mix between classifiers only from a small subset of simple classifiers that correspond to applying a threshold on the attacker's benefit from the attack. If the attack pattern is a multi-dimensional quantity (i.e., the classification is based on several features), this is in contrast with known algorithms such as logistic regression which have a predefined shape of the boundary independently of the attacker's goal. We also find that the weight assigned to each threshold classifier is mainly proportional to the marginal reward increase at that point. 
\item Third, the attacker should essentially mimic the distribution of the normal behavior but only on a subset of the support with patterns that yield the highest payoff.  
\item Finally, our results allow us to analyze the investment tradeoffs that a strategic defender is facing when she needs to decide whether or not to acquire more data about the attacker (e.g., investing in a new sensor).
\end{enumerate}
We provide numerical experiments that exemplify these results and their applications. 
Overall, our results provide a first step towards building classification algorithms that work well in general adversarial settings.

\subsection{Related Work}
\label{sec:related_work}

\paragraph{Adversarial classification}

The problem of adversarial classification has inspired a lot of research from the machine learning and security communities (see surveys in \cite{Barreno10a, Huang11a}). 

In a seminal paper, Dalvi et al.~\cite{Dalvi04a} tackle the problem of classifying a malicious intruder in the presence of an innocent user using a setting close to ours where the malicious intruder can perturb his behavior to confuse the classifier. However, they compute only the best response, which means that each player can adapt only once. Instead, we consider a fully adaptive attacker and defender and compute the Nash equilibrium. A number of papers (Globerson and Roweis \cite{Globerson06a}, Zhou et al. \cite{Zhou12a, Zhou14a}) study a similar type of attacks and attempt to propose robust classifiers but always using a worst-case assumption which is overly pessimistic in practice. 

To the best of our knowledge, the only works that analyze adversarial classification using nonzero-sum games are the studies by Br\"{u}ckner et al. \cite{Bruckner09a, Bruckner11a, Bruckner12a}, and the studies by Lis\'y et al. \cite{Lisy14a} and Samusevich \cite{Samusevich16a}. The setting of Br\"{u}ckner et al. \cite{Bruckner09a, Bruckner11a, Bruckner12a}, however, is quite different from ours. They restrict the classifier to particular forms (e.g., logistic regression, where the defender's choice is the set of weights) and identify conditions under which a unique pure Nash equilibrium exists. In contrast, we do not restrict the classifiers a priori and we derive the set of classifiers used at equilibrium (which, as we show, depends on the attacker's utility and can therefore not be fixed without considering this); and we focus on mixed strategy Nash equilibria which are the only ones that make sense in most instances of our model. The setting of Lis\'y et al. \cite{Lisy14a} is closer to ours. They assume that the attacker and the defender both choose a threshold in $[0, 1]$ and that the attacker is detected if his chosen threshold is higher than the defender's. Using nonzero-sum payoffs, they give several properties of the Nash equilibrium (in mixed strategies) and propose a method based on discretization to compute an approximate equilibrium. Samusevich \cite{Samusevich16a} derives extensions based on the same model, notably to take into account multiple types of attackers and bounded rationality. This model, however, abstracts away the complexity of multi-feature classification by assuming that players choose a threshold and that the classification result is computed through a ROC curve. In contrast, our model starts with all possible classifiers in the multi-feature case and shows that well-defined threshold classifiers are sufficient. Also, using a discrete model, we propose an efficient procedure to compute the exact Nash equilibrium. 

All the aforementioned papers study poisoning attacks where the attacker can alter the training set. The study of evasion attacks was pioneered by Lowd and Meek \cite{Lowd05a} in the case of linear classifiers and extended by Nelson et al. \cite{Nelson10a} for convex-inducing classifiers. A recent study by Vorobeychik and Li~\cite{Vorobeychik14a, Li15a} recognizes the inefficiency of deterministic classifiers in this context. The authors compare the efficiency between two classifiers and investigate under which assumptions it is better to deterministically select a single classifier or uniformly randomize between them. In this paper, we formally justify the need for randomization using game theory and we consider a more general form of interaction (nonzero-sum game). We also start by considering an exhaustive set of all possible classifiers and analytically derive the subset of classifiers that the defender uses (threshold classifiers on the attacker's reward) and the distribution on this subset. 

Some forms of thresholds are implemented ad hoc in most spam filtering algorithms to balance false alarm costs and detection gains, using standard deterministic classifiers. Our work provides a formal framework that takes into account the existence of strategic attackers willing to change how they attack to evade detection. Then, randomized classifiers are derived as an optimal and stable solution; no player has an economic benefit to unilaterally deviate. Our analysis shows that the defender should use threshold classifiers on the attack reward, contrasting with known algorithms like logistic regression which has a predefined shape of boundary independent of the attacker's goal.  

In parallel to theoretical research aimed at proposing robust classifiers, several empirical works have studied the vulnerability of standard machine learning algorithms to various kinds of attacks \cite{Nelson09a, Sommer10outsidethe, Thomas:2013:TFA:2534766.2534784, Wang14a}. In particular, Sommer and Paxson~\cite{Sommer10outsidethe} identified reasons why machine learning algorithms do not work well in practical adversarial settings and found that the most common pitfall is that attackers adjust their activity in practice to avoid detection.

\paragraph{Game theory and intrusion detection}

Researchers in the ``game theory for security'' community have also tackled the problem of detecting an attacker using game models, see surveys in \cite{Roy10a, Manshaei13a, Lunt, Lye05a, Christin11_NetworkSecurityGamesCombiningGameTheoryBehavioralEconomics, alpcan2010network}. 

Chen and Leneutre~\cite{DBLP:journals/tifs/ChenL09} address the intrusion detection problem in heterogeneous networks consisting of nodes with different non-correlated security assets. Our model is similar in that different attack vectors can be thought of as distributions over targets of different values. The authors of \cite{DBLP:journals/tifs/ChenL09}, however, assume that the detection probability is the same for any attack and defense strategies, yielding the utility function to be the sum of the utilities on each target. On the contrary, we involve different detection and false alarm rates, which makes the set-up more realistic for adversarial classification, but makes the analysis more challenging. We also assume asymmetric information, since the defender is not aware of whether she faces an attacker or a non-attacker. Other works on game theory for intrusion detection (Alpcan and Ba\c{s}ar \cite{Alpcan03a}, Liu et al. \cite{Liu06a}) have similar restrictive assumptions as \cite{DBLP:journals/tifs/ChenL09} that differentiate them from our work. Lye and Wing~\cite{Lye05a} have also investigated a security problem with multiple targets. They analyze a stochastic game between an attacker and administrator and compute the best response strategies of the players using a non-linear program, while we analyze the Nash equilibria of a classification game, which is a stronger solution concept.

The games mentioned above are nonzero-sum games. A number of other nonzero-sum games have been discussed in the literature under the term ``security games'', see \cite{Korzhyk11a}; but they share the restriction of \cite{DBLP:journals/tifs/ChenL09} that the payoff is the sum of the payoffs on each targets, which does not model the problem of adversarial classification well. Let us finally mention that many applications of nonzero-sum games consider Stackelberg equilibria in which the defender chooses first its mixed action \cite{Korzhyk11a, Tambe11a}. In our model, we consider a simultaneous move game and find the Nash equilibria. We find that our nonzero-sum game is strategically equivalent to a zero-sum game, and thus we can analyze our nonzero-sum game using a zero-sum game.

Barni and Tondi~\cite{Barni13a, Barni14a} use a game-theoretic approach to solve a problem of source identification. Their setting differs from ours in that the attacker chooses a distribution of attacks on top of which randomization is exogenously added, whereas we suppose that the attacker directly chooses his attack vector (using a mixed strategy). More importantly, they consider a zero-sum game and derive an asymptotic Nash equilibrium with numerical computations, while we focus on a nonzero-sum game and derive closed-form expressions for the equilibria. Stamm et al. \cite{Stamm12a} also use game theory to find optimal defense strategies in the context of multimedia forensics. They only derive best responses for the defender given a fixed attack distribution though, and do not characterize the Nash equilibria. 

The classification game we investigate is similar in nature to the inspection game, a multi-stage game between a customs inspector and a smuggler, proposed and studied by Dresher~\cite{Dresher1962} and Maschler~\cite{Maschler1966}. Avenhaus et al.~\cite{Inspection_games2002} find the equilibrium of the general nonzero-sum game by using an auxiliary zero-sum game in which the inspectee chooses a violation procedure and the inspector chooses a statistical test with a given false alarm probability. We do not separate the general nonzero-sum game into two games but show the equivalence to a zero-sum game and provide structure to the equilibrium strategies of a single-shot simultaneous-move game.

A subset of our results appeared in earlier versions of this paper~\cite{DLM_cdc_2012, DLM_gamesec_2012}, in a simplified setting. Indeed, in \cite{DLM_cdc_2012, DLM_gamesec_2012} we consider that the attack is characterized by a scalar quantity rather than a vector and impose the use of threshold strategies for the defender. Instead, here, we allow more flexibility by considering attack vectors of arbitrary dimension and considering the set of all possible classifiers and we prove that, at Nash equilibrium, the defender uses only threshold strategies on the attacker's reward (which is one of our main result). We also extend \cite{DLM_cdc_2012, DLM_gamesec_2012} by considering all possible Nash equilibria and associated defender/attacker strategy form, by identifying when the Nash equilibrium is unique, and by showing numerically how our results can be applied to analyzing the variations of the equilibrium strategies with the model parameters and the trade-off in investing in a new sensor to increase the features used for classification. 

The remainder of the paper is organized as follows. In Section~\ref{sec:game_model}, the classification game we consider is presented.  In Section~\ref{sec:thresholds} we justify the selection of threshold strategies for the defender and the proportionality of the attacker's equilibrium strategy to the non-attacker's distribution (Theorem~\ref{threshold_thm}). Section~\ref{sec:analysis}  provides a Nash equilibrium analysis that gives insights on the structure of the players' equilibrium strategies (Theorem~\ref{aggregate_theorem}). Section~\ref{sec:sims} contains the simulation results, and the paper concludes with remarks in Section~\ref{sec:concl}. To improve the flow of the paper, longer or more technical proofs are relegated \techrep{to the Appendix}{to our technical report \cite{techreport}}.

\vspace{0.05in}
\noindent \textbf{Notational Conventions:}
Throughout the paper, we will use the following notational conventions.
All vectors are assumed to be
column vectors and are denoted by bold lowercase letters (e.g., $\vect{\alpha}$, $\vect{\beta}$).
The inner product of two vectors $\vect
\alpha$, $\vect \beta$ of size $n$ is denoted by $\vect \alpha \cdot \vect \beta =
\sum_{i =1}^n \alpha_i \beta_i$.
 For matrices we use capital greek
letters (e.g., $\Lambda$). We denote matrix elements in row $i$ and column $j$ by $\Lambda(i,j)$.
 We use the \textit{prime} sign (${}^{\prime}$) for
transpose of matrices and vectors. 
We use ``$\min \vect \alpha
$" to denote the minimum element of a vector $\vect \alpha$, and
``$\text{minimize}$" when we minimize a specific expression over some
constraints.
 The indicator function is denoted by
$\mathds{1}_{\text{cond}}$; it is equal to 1 if ``cond" holds and is
equal to 0 otherwise. The column vector of ones of length $N$ is
denoted by $\mathbf{1}_{N}$ and the matrix of ones of dimensions $N
\times M$ is denoted by $1_{N \times M}$. The norm of a vector
$\vect{x}$ of length $N$, denoted by $\|\vect x\|$, always refers to
the $L_1$-norm, i.e, $\|\vect{x}\| = |x_1| + |x_2| + \ldots + |x_N|$,
while the cardinality of a set $\mathcal{S}$ is denoted by $|\mathcal{S}|$.
The probability given to strategy $s$ is denoted by $\alpha_s$.
\vspace{0.05in}

\section{The Classification Game}
\label{sec:game_model}

In this section, we present our game-theoretic model of adversarial classification. 

We consider a strategic situation between a \emph{defender} and an agent that may either be an \emph{attacker} with probability $p$ or a \emph{non-attacker} with probability $1-p$. Throughout the paper, we use the generic term non-attacker to designate a ``normal user'' of the system at stake, that is a non-strategic user without any particular adversarial objective (we will specify later the behavior of a non-attacker). The defender seeks to classify the agent as class 0 (non-attacker) or class 1 (attacker). The strategic attacker seeks to exploit the uncertainty of the defender (about the type of the agent) by attacking in such a way as to avoid being classified as an attacker. This scenario encompasses many applications. For instance, in spam classification, the spammer (the attacker) might change the frequency of words included in an email to evade spam filters. In that case, the non-attacker is the sender of a regular email. In another setting, the owner of fraudulent twitter accounts (the attacker) might try to acquire more followers or publish more posts, so that he will get misclassified as a normal user (the non-attacker here)~\cite{Thomas:2013:TFA:2534766.2534784,stringhini13followers}.

Formally, the agent selects an \emph{attack vector} $v$ in $\mathcal{V}$, where $\mathcal{V}$ is the set of all possible attack vectors. The attack vector contains all features used by the defender for classification; e.g., average number of followers, number of retweets, and others (in social networks fraud), number of initiated connections (in portscanner detection~\cite{Jung04a}), path on a graph among nodes in a network or number of accesses to different targets (in intrusion detection). Therefore, although we use the term ``attack vector,'' our setting is not restricted to pure attack scenarios but instead covers any setting in which a defender needs to detect a malicious user, who seeks to evade detection by gaming. 
The defender selects a \emph{classifier} in $\mathcal{C}$, where $\mathcal{C} \subseteq 2^{|\mathcal{V}|}$ is the set of all possible classifiers. A classifier corresponds to a classification rule that determines the class to which the agent is assigned upon observing his attack vector:
\begin{definition}[Classifier]
A classifier $c$ is a function
$c: \mathcal{V} \rightarrow \{ 0,
1\}$, with 
\begin{equation*}
c(v) =
\left\{
	\begin{array}{ll}
		1  & \mbox{for ``attacker" classification}, \\
		0 & \mbox{for ``non-attacker" classification.} 
	\end{array}
\right.
\end{equation*}
\end{definition} 
We assume that $\mathcal{V}$ is finite, thus $\mathcal{C}$ is also a finite set.

If the agent is a non-attacker, he picks $v \in \mathcal{V}$ according to a distribution  $P_\text{N}(\cdot)$ over $\mathcal{V}$, known to both players. 
If the agent is an attacker, the choice $v$ is strategic: the attacker seeks to maximize the payoff function
\begin{equation}
\label{eq.UA}
U^\text{A}(v, c) = R(v) - c_\text{d}  \mathds{1}_ {c(v)=1},
\end{equation}
where $R: \mathcal{V} \rightarrow  \mathbb{R}_+$ is the reward function, and $c_\text{d}$ is the cost in case of detection.  
We refer to $R(v)$ as the ``reward" (to the attacker) for the attack vector $v$, which is granted to the attacker even in case of detection. In contrast, his ``payoff" is the reward minus the cost if detected.
One can view the reward term as the immediate benefit from the attack, while the detection cost can be interpreted as the attacker's ``opportunity cost'' from losing the opportunity to extract value in future attacks after having been exposed. There are many real world scenarios in which the reward is granted to the attacker even upon detection. In most such scenarios, classification does not occur real-time leaving some exploitation window for the attacker. For instance, consider online app stores that contain malware. By the time these apps are classified as malicious and removed from the stores, the owners of the apps (attackers) will already have benefited, e.g., from online purchases, from compromising users' information, paid ads, and others.

The defender's payoff has two additive components. The first is the expected loss to the attacker. When the attacker is present, the loss to the defender is assumed to be minus the gain of the attacker, $-U^\text{A}(v,c)$. Recall that the attacker's utility $U^\text{A}(v,c)$ is composed of a reward term $R(v)$ minus a detection cost term. Thus the defender is in a sense earning a detection reward that matches the attacker's detection cost. This ``reward'' can be seen as the future costs the defender avoids by detecting the attacker now. Since the defender interacts with an attacker with chance $p$, the expected loss to the attacker is $-pU^\text{A}(v,c)$. The second component captures the expected loss due to false alarms. Since the non-attacker is present with chance $1-p$, the expected false alarm cost is $1-p$ times the chance that a non-attacker would pick a $v$ that gets classified as an attacker. Finally, the whole payoff function is scaled by the constant $1/p$ for the convenience of having the term $U^\text{A}(v,c)$ appear unscaled in the payoff. Note that scaling a player's payoff function by a constant has no strategic effect on a game since the player retains the same preferences among outcomes. The resulting payoff function is

\begin{equation}
U^\text{D}(v, c) = - U^\text{A}(v,c) - \displaystyle \frac{1-p}{p} c_\text{fa} \sum_{v' \in \mathcal{V}} P_N(v') \mathds{1}_ {c(v')=1},
\label{eq.UDhat}
\end{equation}
where $c_\text{fa}$ is a constant that captures the cost of false alarms.
Note that for simplicity of the exposition, we assume that the defender has a cost $R(v)$ (in the $U^\text{A}(v,c)$ term) upon attack vector $v$, equal to the reward to the attacker. Our results, however, can be straightforwardly extended to the case where an attack vector $v$ yields a cost to the defender $D(v)\neq R(v)$. Indeed, this would correspond to a payoff $\hat{U}^\text{D}(v, c)  = U^\text{D}(v, c)  + R(v) - D(v)$ which has no strategic effect since, for any $v \in \mathcal{V}$, maximizing  $\hat{U}^\text{D}$ with respect to $c$ is equivalent to maximizing $U^\text{D}$. Note though, that the payoff of the defender will be changed.

To summarize our model, we define the following game.
\begin{definition}[Classification game $\mathcal{G}$]
The classification game $\mathcal{G} = (\mathcal{V}, \mathcal{C}, p, c_\text{d}, c_\text{fa}, P_\text{N}, R(\cdot))$ is the two-player game between the attacker and the defender, where the strategy space of the attacker is $\mathcal{V}$, the strategy space of the defender is $\mathcal{C} \subseteq 2^{\mathcal{V}}$, and the payoffs are given by \eqref{eq.UA} and \eqref{eq.UDhat}, parameterized by
\begin{itemize}
\item $p \in [0,1]:$ probability that the agent is an attacker;
\item $c_\text{d} \in \mathbb{R}_+:$ cost of detection;
\item $c_\text{fa} \in \mathbb{R}_+:$ cost of false alarm;
\item $P_\text{N}:\mathcal{V} \rightarrow [0,1]:$ probability measure that describes the non-attacker's distribution on $\mathcal{V}$;
\item $R: \mathcal{V} \rightarrow  \mathbb{R}_+:$ the reward function.
\end{itemize}
\end{definition}
Throughout the paper, we analyze this game as a simultaneous, complete information game. In particular we assume that the payoffs' parameters are common knowledge. We are also able to formulate our game in normal (matrix) form since the non-attacker is not strategic. At time 0, the attacker selects the attack vector and the defender selects the classifier. The uncertainty is coming from the presence or not of a strategic player. The expected utility of the defender incorporates this uncertainty through the probability $p$. The game would differ significantly if we supposed that there were two or more different types of strategic attackers. In such a setting we would not be able to reduce our model to a matrix game.

\subsection{Model discussion and limitations} 

Before analyzing the game defined above, let us note that, as every model, it makes simplifying assumptions that limit its applicability and there are interesting aspects of security problems that it does not capture. Nevertheless, through its simplicity, we believe our model provides useful insights about the structure of Nash equilibria and the driving factors and intuition behind the players' equilibrium strategies that can be useful in some application scenarios. These insights would be difficult to obtain through a more complex, mathematically less tractable model. Moreover, the analysis of our simple model yields original and non-trivial mathematical results and derivations that can be useful to study more complex models. We leave this as future work but we clarify here the main limitations of our present model, the scenarios in which our assumptions may hold (or not) and some of the ways in which our results could be useful as a basis for further studies.

One of the main assumptions in our model is that the attacker is granted the reward $R(v)$ even in case of detection, and that the detection cost is the same for the attacker and for the defender. This assumption is technically important as we will see because it gives the ``almost zero-sum'' nature of the game that supports our analysis; but it limits the potential applicability of our model. For the reader to get a better idea of the implication of this assumption, we provide here three example scenarios, two where the assumption is reasonable and one where it is not. 

Consider first the task of classifying an incoming email as spam or not. If the email is marked as spam, then the user never sees it so the attacker is granted the reward only upon no detection. Such a scenario cannot be captured by our model, but there are other examples in which the assumption make sense, such as the following two.
Consider the problem of classifying fake news. There is an immediate reward to the publisher who spreads out the fake news (popularity, re-shares, advertising money). If the publisher gets classified as spreading out fake news, then the publisher loses credibility and/or ability to post more news. The detection cost can be seen as the lost opportunity cost from future posts of news (fake or real) which is also the detection bonus for the defender. Finally, consider the problem of classifying malware in apps. Most apps nowadays monetize ads shown to users but malicious apps can trick users to click on ads or gain important personal information of the user who has installed the app. When a malicious app is detected, some harm has already been done that cannot be reversed. The detection cost for the attacker would be the lost opportunity cost from being banned to show future ads. This cost can be seen as the detection bonus for the defender because in the place of this bad actor, another legitimate app can show ads with some expected lifetime reward. (Essentially, the impressions that would be allocated to this app can be diverted to be shown to other similar apps.)

Our model also assumes that parameter $c_d$ is a constant independent of $v$, i.e., the defender gets a constant bonus independent of the attack for detecting the attack. This assumption is instrumental in obtaining the results that the defender only mixes amongst threshold strategies on the attacker's reward. Again, this assumption is reasonable in many cases but not in others. For the fake news or malware apps scenarios above, it is reasonable to assume that this cost is independent from the actual attack intensity, since it can be seen as lost opportunity cost from monetizing ads in the future. For cases such as audits from IRS though, there are usually fines imposed to tax fraudsters which makes the cost symmetric (whatever the fraudster pays, the IRS wins), but the fine usually depends on the amount misreported. Building on our results for the simpler case, however, we believe that it would be possible to study the game where $c_d$ depends on $v$. One would need to find a ranking of attack vectors (depending on both functions $R$ and $c_d$) such that an equivalent of our Lemma~\ref{monotone_pd} holds. 

Finally, our model is specified as a strategic-form game in which players choose their actions once, without knowing how their opponent has played. While this is a simplification of reality, the framework does not prevent us from considering players that take into account the future ramifications of the immediate outcome. For instance, attackers may perceive an opportunity cost of being less able to profit from future attacks by being detected now. As in models of economic competition and market entry, players can incorporate these opportunity costs into their payoff functions that describe their preferences over the outcomes of the ``one-shot" game at hand~\cite{Tirole_IO}. That said, there are phenomena in repeated strategic interactions that are exposed only when the repeated interaction is modeled explicitly, such as in repeated games in which cooperative outcomes are enforced by the threat of future punishment~\cite{Tirole}. If one were to develop and analyze an adversarial classification game that explicitly models the dynamics of an attacker and defender interacting on multiple occasions over an extended period, that model would need the analysis of the one-shot game to construct the needed value functions (see e.g., \cite{Forges92a}). Thus our present results, in particular through the exponential reduction of the strategy space of the defender (Theorem~\ref{threshold_thm}) can be seen as a stepping-stone for constructing such a dynamic game model.

\section{Justification of threshold strategies}
\label{sec:thresholds}

In this section, we show that in equilibrium, the defender's strategy space can be reduced to threshold classifiers on the attacker reward. The sufficiency of threshold classifiers is a useful result as it allows us to compute the Nash equilibria of the game analytically and efficiently while simultaneously providing intuition about the players' equilibrium strategies. In particular, the number of classifiers can be as large as $2^{|\mathcal{V}|}$ while the number of threshold classifiers is of size $|\mathcal{V}|+1$, so this is a great reduction in the defender's strategy space. We first introduce a number of useful definitions for the analysis of our game. Then in Section~\ref{sec.reduction-defender}, we show that the defender's strategy space can be reduced to the set of threshold classifiers and in Section~\ref{sec.reduction-attacker}, we show that the attacker's strategy space can also be reduced.

We will be interested in mixed strategy equilibria, in which the attacker randomizes across multiple attack vectors with a probability distribution $\vect \alpha$ on $\mathcal{V}$ and the defender randomizes across multiple classifiers with a distribution $\vect \beta$ on $\mathcal{C}$. 
The expected attacker and defender payoffs are then given by 
  \begin{equation}
  U^\text{A}(\vect \alpha, \vect \beta) = \sum_{v \in \mathcal{V}} \sum_{c \in
  \mathcal{C}} \alpha_v U^\text{A}(v, c) \beta_c,
  \label{eq:attackerua}
  \end{equation}
  \begin{equation}
U^\text{D}(\vect \alpha, \vect \beta) = \sum_{v \in \mathcal{V}} \sum_{c \in
  \mathcal{C}} \alpha_v U^\text{D}(v, c) \beta_c.
  \label{eq:defenderud}
\end{equation}
Note that a pure strategy is a special case of mixed strategies in which that particular pure strategy is selected with probability $1$ and every other strategy with probability $0$.\footnote{For most instances of interest no pure-strategy equilibrium exists. If one player chooses deterministically a pure strategy, the opponent would switch to a strategy to either evade detection completely or to guarantee detecting the attacker.} 

Recall the definition of Nash equilibrium~\cite{Tirole}:
\begin{definition}[Nash equilibrium]
The pair of probability measures $(\vect \alpha, \vect \beta)$ on $\mathcal{V}$ and $\mathcal{C}$ respectively is a Nash equilibrium (NE) of game $\mathcal{G}$ if each player's mixed strategy is a best response to the other player's mixed strategy, i.e., 
\begin{equation}
U^\text{A}(\vect \alpha, \vect \beta) \geq U^\text{A}(\vect{\hat{\alpha}}, \vect \beta), \label{eq:ua}
\end{equation}
for every probability distribution $\vect{\hat{\alpha}}$ over $\mathcal{V}$, and
\begin{equation}
U^\text{D}(\vect \alpha, \vect \beta) \geq U^\text{D}(\vect \alpha, \vect{\hat{\beta}}) \label{eq:ub}
\end{equation}
for every probability distribution  $\vect{\hat{\beta}}$  over $\mathcal{C}$.
\end{definition}

We define the notion of best-response equivalent games in the same way as in \cite{Rosenthal_1974}:

\begin{definition}
Two games are {\bf best-response equivalent} if the sets of best response strategies of a player in both games coincide for any 
strategy of the other player. 
\eat{\commentlenia{}{Use it somewhere? If not, omit it!}
Two games $\mathcal{G}_1$ and $\mathcal{G}_2$ are \textbf{payoff equivalent} iff the set of payoff pairs (payoffs realized by the attacker and the defender) in $\mathcal{G}_1$, coincides with that of $\mathcal{G}_2$ in the payoff space.}
\label{equivalence}
\end{definition}
Note that for best-response equivalent games, the strategy spaces of each player in both games need to be the same. 

We now define the reduced strategy space for the defender that consists of threshold rules on all possible attack rewards. 
\begin{definition}[Set of threshold classifiers]
\label{def.CT}
$$\mathcal{C}^T = \{c \in \mathcal{C} : c(v) =\mathds{1}_{R(v) \ge t}, \forall v \in \mathcal{V}  \text{ for some } t \in \mathbb{R}\}.$$
\end{definition}
When using a threshold classifier, the defender compares what the attack reward would have been from the observed attack vector to a threshold instead of computing a mapping from any possible attack vector to a detection probability.  
We assume that $\mathcal{C}^T \subseteq \mathcal{C}$, which holds for any reasonable $\mathcal{C}$, in particular for $\mathcal{C}=2^{\mathcal{V}}$. The outcome of the analysis of the paper is that in most cases the defender should not use a single optimal value of $t$. Instead, the threshold should be chosen randomly from a particular range of values with a particular distribution. This range and distribution are found using the analysis of this paper.
   
We also define the probability of detection function as the probability of class 1 classification (or detection) given the attack vector $v$ and the defender's strategy $\vect \beta$. 
\begin{definition}[Probability of detection function]
The probability of detection for an attack vector $v$ and defender's strategy $\vect \beta$ is defined
as \begin{equation} \pi^{\vect \beta}_\text{d}(v) = \sum_{c\in \mathcal{C}} \beta_c \mathds{1}_
{c(v)=1}, \quad \forall v\in \mathcal{V} \label{eq:prob_det}. \end{equation}
\end{definition}

\subsection{Defender's reduced strategy space}\label{sec.reduction-defender}

Threshold strategies are simple and intuitive, but their optimality is not guaranteed a priori. Yet, in the classification game, we show that threshold strategies are optimal in a sense formalized in the following theorem. 
\begin{theorem}
For any NE  ($\vect \alpha, \vect \beta$) of $\mathcal{G} = (\mathcal{V}, \mathcal{C}, p, c_\text{d}, c_\text{fa}, P_\text{N}, R(\cdot))$, there exists a NE of $\mathcal{G}^T =(\mathcal{V}, \mathcal{C}^T, p, c_\text{d}, c_\text{fa}, P_\text{N}, R(\cdot))$ with
the same $\vect \alpha$ and equilibrium payoff pair and the same $\pi_\text{d}$ in the support of the non-attacker's distribution.
\label{threshold_thm}
\end{theorem}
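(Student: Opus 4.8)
The plan is to exploit that neither payoff depends on the defender's mixed strategy $\vect\beta$ directly, but only through the detection profile $\pi^{\vect\beta}_\text{d}(\cdot)$ of \eqref{eq:prob_det}. First I would rewrite both payoffs in these terms: from \eqref{eq.UA}, $U^\text{A}(v,\vect\beta)=R(v)-c_\text{d}\,\pi^{\vect\beta}_\text{d}(v)$, and substituting \eqref{eq.UDhat} and averaging over $\vect\alpha$ gives $U^\text{D}(\vect\alpha,\vect\beta)=-\sum_{v}\alpha_v R(v)+\sum_{v}\big(c_\text{d}\alpha_v-\tfrac{1-p}{p}c_\text{fa}P_\text{N}(v)\big)\pi^{\vect\beta}_\text{d}(v)$. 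The crucial observation is that the coefficient of $\pi^{\vect\beta}_\text{d}(v)$ vanishes unless $v\in S:=\mathrm{supp}(\vect\alpha)\cup\mathrm{supp}(P_\text{N})$. Hence if I can produce a threshold mixture $\vect\beta^T$ whose detection profile agrees with $\pi^{\vect\beta}_\text{d}$ on $S$, both payoffs are automatically preserved, and so is $\pi_\text{d}$ on $\mathrm{supp}(P_\text{N})$.

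Second, I would characterize which detection profiles are realizable within $\mathcal{C}^T$. A mixture over threshold classifiers (Definition~\ref{def.CT}) yields $\pi_\text{d}(v)=\Pr_{t\sim\vect\beta^T}[t\le R(v)]$, which is a non-decreasing function of $R(v)$ and assigns equal-reward vectors equal detection probability; conversely, listing the distinct reward values $r_1<\dots<r_m$, any such non-decreasing $[0,1]$-valued profile is realized by the $m+1$ threshold classifiers with weights equal to the successive increments of the profile across the $r_k$. Thus the whole question reduces to showing that the equilibrium profile $\pi^{\vect\beta}_\text{d}$, restricted to $S$, factors through $R$ and is monotone non-decreasing in $R$.

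Third, I would establish this monotonicity. Attacker optimality \eqref{eq:ua} forces indifference on $\mathrm{supp}(\vect\alpha)$ at the common value $K:=U^\text{A}(\vect\alpha,\vect\beta)$, so $\pi^{\vect\beta}_\text{d}(v)=(R(v)-K)/c_\text{d}$ there (strictly increasing in $R$), while the no-deviation inequalities give $\pi^{\vect\beta}_\text{d}(v)\ge(R(v)-K)/c_\text{d}$ for every $v$. The remaining ingredient---that $\pi^{\vect\beta}_\text{d}$ is monotone in $R$ on all of $S$ and consistent at tied reward values---is exactly Lemma~\ref{monotone_pd}. Granting it, I would define $\tilde\pi_\text{d}$ to coincide with $\pi^{\vect\beta}_\text{d}$ on $S$ and extend it monotonically to the remaining reward levels (by the value at the next-higher occupied level, and $1$ above the top level); the bound $\pi^{\vect\beta}_\text{d}(v)\le 1$ together with $\pi^{\vect\beta}_\text{d}(v)\ge (R(v)-K)/c_\text{d}$ guarantees this keeps $\tilde\pi_\text{d}$ inside $[0,1]$, non-decreasing, and $\ge(R(v)-K)/c_\text{d}$ everywhere, and I let $\vect\beta^T$ be the threshold mixture realizing $\tilde\pi_\text{d}$.

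Finally, I would verify that $(\vect\alpha,\vect\beta^T)$ is a NE of $\mathcal{G}^T$ with the claimed properties. For the attacker, $U^\text{A}(v,\vect\beta^T)=K$ on $\mathrm{supp}(\vect\alpha)$ and $\le K$ elsewhere by the lower bound, so $\vect\alpha$ satisfies \eqref{eq:ua} and its payoff is unchanged. For the defender, since the coefficient above vanishes off $S$ and $\tilde\pi_\text{d}=\pi^{\vect\beta}_\text{d}$ on $S$, we get $U^\text{D}(\vect\alpha,\vect\beta^T)=U^\text{D}(\vect\alpha,\vect\beta)$; because $\mathcal{C}^T\subseteq\mathcal{C}$, attaining the unrestricted best-response value with a strategy in $\mathcal{C}^T$ makes $\vect\beta^T$ a best response within $\mathcal{C}^T$, giving \eqref{eq:ub}. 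The equilibrium payoff pair and the values of $\pi_\text{d}$ on $\mathrm{supp}(P_\text{N})$ are then preserved by construction. The \textbf{main obstacle} is the monotonicity/tie-consistency claim (Lemma~\ref{monotone_pd}): one must rule out that, at equilibrium, the general classifier assigns different detection probabilities to equal-reward vectors, or detects a lower-reward vector more than a higher-reward one. An exchange argument on the defender's best response---shifting detection mass toward higher-reward (hence weakly more attacked) vectors cannot decrease $U^\text{D}$---is the natural route, and the delicate point is to make it valid for an arbitrary $\mathcal{C}\supseteq\mathcal{C}^T$ rather than only for $\mathcal{C}=2^{\mathcal{V}}$.
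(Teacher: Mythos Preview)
Your proposal is correct and follows essentially the same approach as the paper: reduce to the detection profile via Lemma~\ref{same_pd}, establish monotonicity of $\pi_\text{d}$ in $R$ at equilibrium (Lemma~\ref{monotone_pd}, built on Lemmas~\ref{lemma:achievability} and~\ref{cases_lemma}), and then realize the monotone profile by a mixture of threshold classifiers. The one presentational difference is that the paper handles vectors outside $\mathrm{supp}(P_\text{N})$ by an explicit case split (Steps~2--3 of its proof), whereas you absorb this into the single observation that the coefficient $c_\text{d}\alpha_v-\tfrac{1-p}{p}c_\text{fa}P_\text{N}(v)$ of $\pi_\text{d}(v)$ in $U^\text{D}$ vanishes off $S=\mathrm{supp}(\vect\alpha)\cup\mathrm{supp}(P_\text{N})$, so only the values on $S$ matter; your monotone extension to $\mathcal{V}\setminus S$ plays exactly the role of the paper's ad hoc modifications of $\pi_\text{d}$ at $v^*$ with $P_\text{N}(v^*)=0$.

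Two small points worth tightening. First, Lemma~\ref{monotone_pd} as stated covers only pairs with $P_\text{N}>0$; to get monotonicity and tie-consistency on all of $S$ you must combine it with the attacker-indifference formula $\pi_\text{d}(v)=(R(v)-K)/c_\text{d}$ on $\mathrm{supp}(\vect\alpha)$ and check the cross case (which does go through: if $\alpha_{v_1}=0$, $P_\text{N}(v_1)>0$ and $\alpha_{v_2}>0$ with $R(v_1)=R(v_2)$, then Lemma~\ref{cases_lemma} forces $\pi_\text{d}(v_1)=0$ and the no-deviation inequality forces $R(v_2)=K$, hence $\pi_\text{d}(v_2)=0$ too). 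Second, your worry about ``arbitrary $\mathcal{C}\supseteq\mathcal{C}^T$'' is legitimate but applies equally to the paper: Lemma~\ref{cases_lemma} (and hence Lemma~\ref{monotone_pd}) invokes Lemma~\ref{lemma:achievability}, whose proof needs $\mathcal{C}$ rich enough to realize an arbitrary detection profile, essentially $\mathcal{C}=2^{\mathcal{V}}$; the paper tacitly works under that assumption.
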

Theorem~\ref{threshold_thm} shows in particular that, when restricted to using only threshold classifiers, the defender achieves the same equilibrium payoff. Hence, although there may exist Nash equilibria where the defender uses other classifiers, he does not lose anything by using only threshold classifiers.

At a high level, the intuition behind Theorem~\ref{threshold_thm} is as follows. First, we can show that the utilities depend on the defender's strategy $\vect \beta$ only through the probability of detection function $\pi^{\vect \beta}_\text{d}$. Then, we show that, at any NE, $\pi^{\vect \beta}_\text{d}$ is non-decreasing in the attacker's reward (i.e., higher rewarding vectors have a higher probability of being detected). Finally, we show that any probability of detection function that is non-decreasing in the attacker's reward can be achieved by a mix of threshold classifiers.
The proof of Theorem~\ref{threshold_thm} is provided at the end of this section. We first establish a series of lemmas that give information about the game structure and will be used in the proof of Theorem~\ref{threshold_thm}. 

\begin{lemma}
For any strategy profile $(\vect \alpha, \vect \beta)$ of $\mathcal{G} =
(\mathcal{V},\mathcal{C}, p, c_\text{d}, c_\text{fa}, P_\text{N}, R(\cdot))$,
the expected payoffs of the players depend on $\vect \beta$ only through the probability of detection function 
$\pi^{\vect \beta}_\text{d}(\cdot)$:

\begin{align}
\!\!U^\text{A}({\vect \alpha, \vect \beta})& \! = \! \sum_{v \in \mathcal{V}} \Big(\alpha_v R(v) - c_d \alpha_v
\pi^{\vect \beta}_\text{d}(v) \Big) \label{eq:u_na} ,\\
\!\!U^\text{D}({\vect \alpha, \vect \beta})& \! = \! - U^\text{A}({\vect \alpha, \vect \beta}) \! -\!  \displaystyle  \frac{1-p}{p} c_{\text{fa}} \sum_{v^{\prime} \in \mathcal{V}} \!\! \left(
P_\text{N}(v^{\prime}) \pi^{\vect \beta}_\text{d}(v^{\prime}) \right) \label{eq:u_nd}.
\end{align}
 \label{same_pd}
\end{lemma}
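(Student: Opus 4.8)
The plan is a direct computation starting from the bilinear definitions of the expected payoffs in \eqref{eq:attackerua} and \eqref{eq:defenderud}, substituting the per-outcome payoffs, and collapsing each inner sum over classifiers into the probability-of-detection function $\pi^{\vect \beta}_\text{d}(\cdot)$. Once the closed forms are obtained, the ``depends only through $\pi^{\vect \beta}_\text{d}$'' claim is immediate by inspection.

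First, for the attacker, I would start from \eqref{eq:attackerua} and substitute $U^\text{A}(v, c) = R(v) - c_\text{d}\mathds{1}_{c(v)=1}$ from \eqref{eq.UA}. Distributing the weight $\alpha_v \beta_c$ splits the double sum into a reward term $\sum_{v}\sum_{c}\alpha_v R(v)\beta_c$ and a detection term $-c_\text{d}\sum_{v}\sum_{c}\alpha_v \beta_c \mathds{1}_{c(v)=1}$. In the reward term, $R(v)$ is independent of $c$, so it factors out and the normalization $\sum_{c \in \mathcal{C}}\beta_c = 1$ leaves $\sum_{v}\alpha_v R(v)$. In the detection term, I recognize $\sum_{c}\beta_c \mathds{1}_{c(v)=1}$ as exactly $\pi^{\vect \beta}_\text{d}(v)$ by the definition \eqref{eq:prob_det}. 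Combining the two gives \eqref{eq:u_na}.

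Second, for the defender, I would start from \eqref{eq:defenderud} and substitute \eqref{eq.UDhat}, which expresses $U^\text{D}(v, c)$ as $-U^\text{A}(v, c)$ minus the false-alarm term $\frac{1-p}{p}c_\text{fa}\sum_{v'}P_\text{N}(v')\mathds{1}_{c(v')=1}$. The contribution of the $-U^\text{A}(v,c)$ part reproduces $-U^\text{A}(\vect \alpha, \vect \beta)$, already handled in the previous step. For the false-alarm contribution, the inner sum over $v'$ does not depend on $v$, so $\sum_{v}\alpha_v = 1$ factors out entirely, leaving $-\frac{1-p}{p}c_\text{fa}\sum_{c}\beta_c \sum_{v'}P_\text{N}(v')\mathds{1}_{c(v')=1}$. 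Interchanging the two finite sums and again invoking \eqref{eq:prob_det} collapses $\sum_{c}\beta_c \mathds{1}_{c(v')=1}$ into $\pi^{\vect \beta}_\text{d}(v')$, which yields \eqref{eq:u_nd}.

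Finally, the qualitative statement that both payoffs depend on $\vect \beta$ only through $\pi^{\vect \beta}_\text{d}(\cdot)$ follows at once, since in the closed forms \eqref{eq:u_na} and \eqref{eq:u_nd} the strategy $\vect \beta$ appears exclusively inside the function $\pi^{\vect \beta}_\text{d}$. I do not expect any genuine obstacle in this lemma: the only points requiring care are the two applications of the normalization identities $\sum_{c}\beta_c = 1$ and $\sum_{v}\alpha_v = 1$, and the interchange of summation order in the false-alarm term, all of which are valid because $\mathcal{V}$ and $\mathcal{C}$ are finite.
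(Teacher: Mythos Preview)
Your proposal is correct and follows essentially the same direct computation as the paper's proof: substitute \eqref{eq.UA} and \eqref{eq.UDhat} into the bilinear expectations \eqref{eq:attackerua}--\eqref{eq:defenderud}, use $\sum_c \beta_c = 1$ and $\sum_v \alpha_v = 1$ to collapse the inert sums, and identify $\sum_c \beta_c \mathds{1}_{c(v)=1}$ with $\pi^{\vect \beta}_\text{d}(v)$. There is nothing to add.
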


By abuse of notation, we denote the probability of detection function by $\pi_\text{d}$, instead of  $\pi^{\vect \beta}_\text{d}$, when it brings no ambiguity.

\begin{lemma}
For any function $f\!:\!\mathcal{V}  \!\!\to\!\! [0,1]$, there exists a probability measure 
$\vect \beta$ over $\mathcal{C} \!=\! 2^{|\mathcal{V}|}$ s.t. $\pi^\beta_\text{d}(v) = f(v), \forall v \in \mathcal{V}$.
\label{lemma:achievability}
\end{lemma}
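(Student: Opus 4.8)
The plan is to exploit the fact that a classifier $c$ is nothing more than the indicator of a subset $S_c = \{v\in\mathcal{V} : c(v)=1\}\subseteq\mathcal{V}$, so that choosing a measure $\vect\beta$ over $\mathcal{C}=2^{\mathcal{V}}$ is the same as choosing a \emph{random subset} $S$ of $\mathcal{V}$. Under this identification, $\pi^{\vect\beta}_\text{d}(v) = \sum_{c\in\mathcal{C}}\beta_c \mathds{1}_{c(v)=1}$ is exactly the probability that $v\in S$. The lemma thus reduces to the following question: given prescribed inclusion probabilities (marginals) $f(v)\in[0,1]$, can one build a random subset of $\mathcal{V}$ whose inclusion probabilities are precisely the $f(v)$? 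I would answer this by an explicit single-parameter coupling.

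Concretely, I would let $U$ be a uniform random variable on $[0,1]$ and define the random classifier $c_U$ through the super-level set of $f$, namely $c_U(v) = \mathds{1}_{f(v)\ge U}$, i.e. $S_{c_U} = \{v : f(v)\ge U\}$. The key one-line computation is then $\Pr[c_U(v)=1] = \Pr[U\le f(v)] = f(v)$, using that $U$ is uniform; crucially this holds simultaneously for every $v\in\mathcal{V}$ because all the inclusions are driven by one common variable $U$. Taking $\vect\beta$ to be the law of $c_U$ (the pushforward of the uniform measure on $[0,1]$ under the map $u\mapsto c_u$) yields $\pi^{\vect\beta}_\text{d}(v)=f(v)$ for all $v$, as required.

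It then remains only to check that this $\vect\beta$ is a legitimate probability measure on the finite set $\mathcal{C}$. Since $\mathcal{V}$ is finite, $f$ takes finitely many distinct values, so as $u$ ranges over $[0,1]$ the super-level sets $\{v : f(v)\ge u\}$ range over only finitely many subsets of $\mathcal{V}$. Hence $\vect\beta$ is supported on at most $|\mathcal{V}|+1$ classifiers and is a well-defined distribution over $\mathcal{C}=2^{\mathcal{V}}$; moreover each super-level-set classifier lies in $2^{\mathcal{V}}$, so no classifier outside the allowed set is used.

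There is essentially no hard step here: this is a pure existence/coupling claim, and the single-parameter construction makes the marginal computation immediate. An equivalent route is independent rounding, where one includes each $v$ in $S$ independently with probability $f(v)$, producing a product measure on $2^{\mathcal{V}}$ with the same marginals; I prefer the single-$U$ construction because its finite support is transparent and its nested super-level-set structure foreshadows the threshold classifiers that become central later. The only point I would flag as deserving explicit care is the verification that the pushforward is a genuine, finitely supported measure on $\mathcal{C}$, and this follows at once from the finiteness of $\mathcal{V}$.
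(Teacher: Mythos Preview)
Your proof is correct and is essentially the same construction as the paper's, just packaged probabilistically: the paper reindexes so that $f$ is non-decreasing and then explicitly assigns weight $f(v_i)-f(v_{i-1})$ to the classifier with detection set $\{v_i,\dots,v_{|\mathcal{V}|}\}$ (plus weight $1-f(v_{|\mathcal{V}|})$ to the empty classifier), which is exactly the pushforward of your uniform $U$ under $u\mapsto\{v:f(v)\ge u\}$. Your single-parameter coupling is a slicker way to say the same thing, and you correctly anticipate that the resulting nested super-level-set classifiers are precisely the threshold classifiers used downstream.
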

\begin{proof}
Let $f$ be an arbitrary function from $\mathcal{V}$ to $[0,1]$. Without loss of generality, we reindex strategies $v$ such that $f$ is non-decreasing, i.e., $\forall v_i, v_j \in \mathcal{V}$, with $i < j$, $f(v_i) \leq f(v_j)$.
Starting with the attack vector $v_1$ with the lowest value of $f$, we assign probability $\beta_{c_1} = f(v_1)$ to the classifier $c_1 \in \mathcal{C}$ with $c_1(v) = 1, \forall v \in \mathcal{V}$. 
We then assign $\beta_{c_2} = f(v_2) -  f(v_1)$ to the classifier $c_2$ with $c_2(v) = 1, \forall v \in \mathcal{V}~\backslash~\{v_1\}$. We continue this process until we reach the last vector $v_{|\mathcal{V}|}$. 
We assign probability $f(v_{|\mathcal{V}|}) -  f(v_{|\mathcal{V}|-1})$
 to the classifier that classifies only $v_{|\mathcal{V}|}$ as coming from an attacker.
The remaining weight $1 -f(v_{|\mathcal{V}|})$ (if positive) is given to the classifier that never classifies the agent as an attacker. 
The strategy $\vect \beta$ derived with the above procedure is guaranteed by construction to have elements in $[0,1]$ with unit sum. 
Moreover $\pi^{\vect \beta}_\text{d}(v_1) = \beta_{c_1} = f(v_1)$ since $v_1$ is classified as coming from an attacker only by $c_1$, 
 $\pi^{\vect \beta}_\text{d}(v_2) = \beta_{c_1}  + \beta_{c_2}= f(v_2)$ since $v_2$ is classified as coming from an attacker only by $c_1$ and $c_2$, and so on until $\pi^{\vect \beta}_\text{d}(v_{|\mathcal{V}|}) =\sum_{i = 1}^{|\mathcal{V}|} \beta_{c_i} = f(v_{|\mathcal{V}|})$ since $v_{|\mathcal{V}|}$ is classified as coming from an attacker by all classifiers $c_1$ through $c_{v_{|\mathcal{V}|}}$.
Thus we have constructed a valid probability measure $\vect \beta$ over $\mathcal{C}$, with $\pi^{\vect \beta}_\text{d}(v) =  f(v), \forall v \in \mathcal{V}$. \qed
\end{proof}

Without loss of generality, we now order the attack vectors in increasing attacker reward, i.e., $R(v_i) \leq R(v_{i+1}),~ \forall i \in \{1, \ldots, |\mathcal{V}|-1\}$. The following lemma, illustrated in Fig.~\ref{fig:regions}, establishes results on the players supports.

\begin{figure}[t]
  \centering \includegraphics[scale = 0.7]{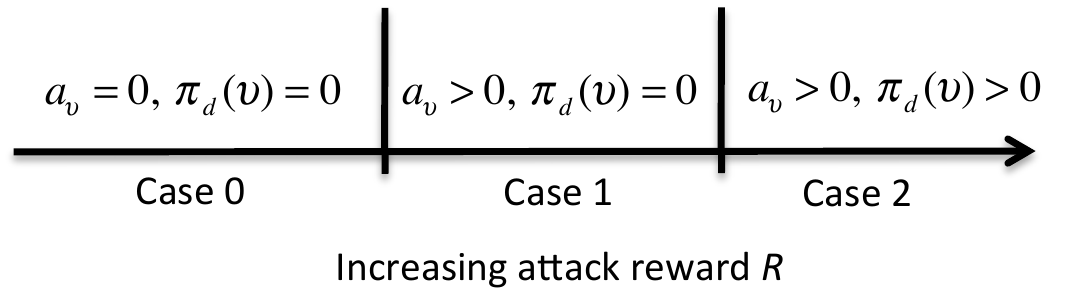}
    \caption{Three cases for strategies selected in NE (see Lemma~\ref{cases_lemma}).}
        \label{fig:regions}
\end{figure}

\begin{lemma}
If $(\vect \alpha, \vect \beta)$ is a NE of $\mathcal{G} = (\mathcal{V}, \mathcal{C}, p, c_\text{d}, c_\text{fa}, P_\text{N}, R(\cdot))$ that yields a probability of detection function $\pi_\text{d}$,  then $\forall v \in \mathcal{V}$ such that $P_\text{N}(v) > 0$, one of the following three cases hold.
\begin{itemize}
\item Case 0: $\alpha_v = 0$ and $\pi_\text{d}(v) = 0$,
\item Case 1: $\alpha_v > 0$ and $\pi_\text{d}(v) = 0$,
\item Case 2: $\alpha_v > 0$ and $\pi_\text{d}(v) > 0$.
\end{itemize}
Furthermore $R(v_0) \leq R(v_1) < R(v_2)$, for any strategies $v_0, v_1, v_2$ in Cases 0, 1, and 2 respectively. 
\label{cases_lemma}
\end{lemma}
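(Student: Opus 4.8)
The plan is to translate everything into statements about the detection function $\pi_\text{d}$ using Lemma~\ref{same_pd}, and then read off the two halves of the claim from the two players' best-response conditions separately. From \eqref{eq:u_na} the attacker maximizes $\sum_{v} \alpha_v\big(R(v) - c_\text{d}\pi_\text{d}(v)\big)$, so at a NE every vector in the support of $\vect\alpha$ attains the common maximal value $U^* := \max_{v\in\mathcal{V}}\big(R(v) - c_\text{d}\pi_\text{d}(v)\big)$, while $R(v)-c_\text{d}\pi_\text{d}(v)\le U^*$ for every $v$. Rearranging \eqref{eq:u_nd} and dropping the $\vect\beta$-independent term $-\sum_{v}\alpha_v R(v)$, the defender maximizes $\sum_{v}\pi_\text{d}(v)\,g(v)$ with $g(v):=c_\text{d}\alpha_v-\frac{1-p}{p}c_\text{fa}P_\text{N}(v)$. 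These are exactly the two characterizations I would use.

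First I would establish that the three listed cases are exhaustive, i.e. that the remaining combination $\alpha_v=0,\ \pi_\text{d}(v)>0$ cannot occur when $P_\text{N}(v)>0$. Here $g(v)=-\frac{1-p}{p}c_\text{fa}P_\text{N}(v)<0$ (using $0<p<1$ and $c_\text{fa}>0$). By Lemma~\ref{lemma:achievability} the defender can realize any modified detection function, in particular the one equal to $\pi_\text{d}$ everywhere except reset to $0$ at $v$; since $\pi_\text{d}(v)$ enters the defender's objective with a strictly negative coefficient and the contributions of distinct vectors are additive, this modification strictly increases $U^\text{D}$, contradicting the NE condition \eqref{eq:ub}. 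Hence on the support of $P_\text{N}$ we must have that $\pi_\text{d}(v)>0$ forces $\alpha_v>0$, leaving exactly Cases 0, 1 and 2.

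The ordering then follows purely from the attacker's support condition. For a Case-1 vector $v_1$ we have $\alpha_{v_1}>0$ and $\pi_\text{d}(v_1)=0$, so $R(v_1)=R(v_1)-c_\text{d}\pi_\text{d}(v_1)=U^*$; every Case-1 vector therefore has reward exactly $U^*$. For a Case-2 vector $v_2$ we have $\alpha_{v_2}>0$, so $R(v_2)-c_\text{d}\pi_\text{d}(v_2)=U^*$, and since $c_\text{d}>0$ and $\pi_\text{d}(v_2)>0$ this gives $R(v_2)=U^*+c_\text{d}\pi_\text{d}(v_2)>U^*$. For a Case-0 vector $v_0$ we have $\pi_\text{d}(v_0)=0$ and the general bound $R(v_0)-c_\text{d}\pi_\text{d}(v_0)\le U^*$ gives $R(v_0)\le U^*$. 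Combining, $R(v_0)\le U^*=R(v_1)<R(v_2)$, which is the claimed chain of inequalities.

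I expect the delicate step to be the exhaustiveness argument rather than the ordering: it is where one must actually use that the defender can lower the detection probability at a single vector without disturbing the others, which relies on the achievability of arbitrary detection functions (Lemma~\ref{lemma:achievability}, i.e. a rich enough $\mathcal{C}$) so that the objective $\sum_{v}\pi_\text{d}(v)g(v)$ can be optimized coordinatewise. The ordering part is essentially bookkeeping once the common support value $U^*$ is isolated, the only subtlety being the nondegeneracy assumptions $c_\text{d}>0$, $c_\text{fa}>0$ and $p\in(0,1)$, without which the strict inequality $R(v_1)<R(v_2)$ (and the relevance of the false-alarm penalty) can collapse.
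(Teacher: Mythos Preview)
Your proof is correct and follows essentially the same approach as the paper: you use Lemma~\ref{lemma:achievability} to construct a strictly improving deviation for the defender to rule out the case $\alpha_v=0,\ \pi_\text{d}(v)>0$, and you use the attacker's support/indifference conditions to derive the ordering. The only cosmetic difference is that you package the attacker's optimality through the single scalar $U^*=\max_v\big(R(v)-c_\text{d}\pi_\text{d}(v)\big)$ and read off $R(v_0)\le U^*=R(v_1)<R(v_2)$ directly, whereas the paper argues pairwise via $U^\text{A}(v_1,\vect\beta)=U^\text{A}(v_2,\vect\beta)$ and $U^\text{A}(v_0,\vect\beta)\le U^\text{A}(v_1,\vect\beta)$; these are the same computation.
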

\begin{proof}
Suppose that $(\vect \alpha, \vect \beta)$ is a NE and that there exists $v^*\in \mathcal{V}$ with $P_N(v^*)>0$ such that $\alpha_{v^*} = 0$ and $\pi_\text{d}^{\vect \beta}(v^*) >0$. Let $\vect{\hat{\beta}}$ be a mixed strategy of the defender assigning zero detection probability on $v^*$ and leaving the probability of detection unchanged for other attack vectors, i.e., such that $\pi_\text{d}^{\vect{\hat{\beta}}} (v^*) = 0$ and $\pi_\text{d}^{\vect{\hat{\beta}}} (v) = \pi_\text{d}^{\vect \beta} (v)$, for all $v \neq v^*$. By Lemma~\ref{lemma:achievability}, strategy $\vect{\hat{\beta}}$ exists. By Lemma~\ref{same_pd}, we have 
\begin{equation*}
U^\text{D}(\vect \alpha, \vect{\hat{\beta}}) = U^D(\vect \alpha, \vect \beta) + \frac{1-p}{p} c_{\text{fa}} P_\text{N} (v^*) \pi_\text{d}^{\vect{\beta}}(v^*) > U^\text{D}(\vect \alpha, \vect \beta),
\end{equation*}
which contradicts the fact that $(\vect{\alpha}, \vect{\beta})$ is a NE. 

We now show that $R(v_0) \leq R(v_1) < R(v_2), \forall v_0, v_1, v_2$ in Cases 0, 1, and 2 respectively.  Since both pure strategies $v_1, v_2$ are included in the attacker's equilibrium mixed strategy, 
$U^A(v_1, \vect \beta) = U^A(v_2, \vect \beta)$. Since $\pi_\text{d}(v_1) = 0$ and $\pi_\text{d}(v_2) > 0$, 
$$R(v_1) - c_\text{d} \cdot 0 = R(v_2) - c_\text{d} \pi_\text{d}(v_2) \Rightarrow R(v_1) < R(v_2).$$
Moreover, since $\alpha_{v_0} = 0, \alpha_{v_1} > 0$, $U^A(v_0,\vect  \beta) \le U^A(v_1, \vect \beta)$. Since $\pi_\text{d}(v_0) = \pi_\text{d}(v_1) = 0$, this implies $R(v_0) - c_d \cdot 0 \le R(v_1) - c_d \cdot 0$, hence $R(v_0) \le R(v_1)$. \qed
\end{proof}

Lemma~\ref{cases_lemma} shows that, under some assumptions about the non-attacker's distribution, in NE: (1) the defender is never classifying as attacker upon seeing an attack vector that the attacker never uses, and (2) the attacker randomizes only amongst the most rewarding attack vectors and the defender randomizes only amongst classifiers that classify as attacker upon seeing the most rewarding attack vectors. This is illustrated in Section~\ref{sec:sims}, using numerical experiments.

We can also show the following corollary.

\begin{corollary}
If $(\vect \alpha, \vect \beta)$ is a NE of $\mathcal{G}= (\mathcal{V}, \mathcal{C}, p, c_\text{d}, c_\text{fa}, P_\text{N}, R(\cdot))$, for all $v_i, v_j$ in Case 1, $R(v_i) = R(v_j)$.
\end{corollary}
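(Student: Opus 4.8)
The plan is to exploit the standard indifference property of mixed-strategy Nash equilibria: any two pure strategies that an optimizing player places positive weight on must yield identical expected payoffs against the opponent's strategy. Here both $v_i$ and $v_j$ lie in Case 1, so by definition $\alpha_{v_i} > 0$ and $\alpha_{v_j} > 0$; both therefore belong to the support of the attacker's equilibrium mixed strategy $\vect \alpha$. The argument mirrors, and is slightly simpler than, the support-ranking argument already carried out in the proof of Lemma~\ref{cases_lemma}.

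First I would invoke the indifference condition. Since $(\vect \alpha, \vect \beta)$ is a NE and both $v_i$ and $v_j$ receive positive probability under $\vect \alpha$, each must be a best response to $\vect \beta$, so they attain the common maximal attacker payoff. In particular $U^\text{A}(v_i, \vect \beta) = U^\text{A}(v_j, \vect \beta)$. (The reasoning is exactly the one used to obtain $U^A(v_1,\vect\beta)=U^A(v_2,\vect\beta)$ in Lemma~\ref{cases_lemma}.)

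Next I would write out each side using Lemma~\ref{same_pd}, which gives the pure-strategy attacker payoff as $U^\text{A}(v, \vect \beta) = R(v) - c_\text{d}\, \pi_\text{d}(v)$. Because $v_i$ and $v_j$ are both in Case 1, their detection probabilities vanish: $\pi_\text{d}(v_i) = \pi_\text{d}(v_j) = 0$. Substituting into the indifference equation collapses the detection-cost terms, leaving $R(v_i) - c_\text{d}\cdot 0 = R(v_j) - c_\text{d}\cdot 0$, i.e. $R(v_i) = R(v_j)$, as claimed.

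There is no genuine obstacle here; the result is an immediate corollary of the equilibrium indifference condition combined with the defining feature of Case 1 (zero detection probability), which is precisely what decouples the reward terms from the detection-cost terms. The only point worth stating carefully is the justification of the indifference condition itself, namely that in a Nash equilibrium every pure strategy in a player's support yields the same (maximal) expected payoff; this follows directly from the best-response inequality~\eqref{eq:ua}, since otherwise the attacker could strictly improve by shifting all weight from the lower-payoff vector to the higher one, contradicting optimality of $\vect\alpha$.
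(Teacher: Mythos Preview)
Your proof is correct and follows essentially the same approach as the paper's: both invoke the indifference condition for pure strategies in the attacker's support, substitute the explicit payoff $U^\text{A}(v,\vect\beta)=R(v)-c_\text{d}\,\pi_\text{d}(v)$, and use $\pi_\text{d}(v_i)=\pi_\text{d}(v_j)=0$ from the definition of Case~1 to conclude $R(v_i)=R(v_j)$.
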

\begin{proof}
Let $v_1, v_2$ be in Case 1. Then $\alpha_{v_1}>0, \alpha_{v_2} > 0$ and 
\begin{align}
\pi_\text{d}(v_1) = \pi_\text{d}(v_2) = 0.\label{eq:zero_prob_det}\end{align}
Since the attacker mixes among both pure strategies $v_1, v_2$, 
these give the same expected utility to the attacker, we have:
\begin{align*}
U^\text{A}&(v_1,\vect \beta) = U^\text{A}(v_2,\vect \beta) \\
&\Rightarrow R(v_1) - c_\text{d} \cdot 0  = R(v_2) -  c_\text{d} \cdot 0, ~~\text{(using \eqref{eq:zero_prob_det})} 
\end{align*}
hence $R(v_1) = R(v_2)$.
\qed
\end{proof}

What we show next is that, under certain assumptions on the non-attacker's behavior, vectors of higher attacker reward have higher or equal probability of getting detected. 

\begin{lemma}
Let $v_1, v_2 \in \mathcal{V}$ be such that $P_N(v_1), P_N(v_2)>0$. In any NE $(\vect \alpha, \vect \beta)$  of $\mathcal{G} = (\mathcal{V}, \mathcal{C} , p, c_\text{d}, c_\text{fa}, P_\text{N}, R(\cdot))$ that yields a probability of detection function $\pi_d$, we have $R(v_1) \leq R(v_2) \Rightarrow \pi_\text{d}(v_1)\leq \pi_\text{d}(v_2)$. 

Hence, if $P_\text{N}(v) >0, \forall v \in \mathcal{V}$, $\pi_\text{d}(v)$ is non-decreasing in the attack reward $R(v)$. 
\label{monotone_pd}
\end{lemma}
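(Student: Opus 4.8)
The plan is to prove $R(v_1) \le R(v_2) \Rightarrow \pi_\text{d}(v_1) \le \pi_\text{d}(v_2)$ by combining the three-way classification of Lemma~\ref{cases_lemma} with the attacker's indifference condition at a NE. Since $P_\text{N}(v_1), P_\text{N}(v_2) > 0$, each of $v_1, v_2$ falls into exactly one of Cases 0, 1, 2, and the decisive features are that $\pi_\text{d}$ vanishes in Cases 0 and 1 but is strictly positive in Case 2, while the reward ordering of Lemma~\ref{cases_lemma} places all Case 0 rewards below all Case 1 rewards, which in turn lie strictly below all Case 2 rewards. Given the hypothesis $R(v_1) \le R(v_2)$, this ordering restricts which cross-case configurations of $(v_1, v_2)$ are admissible, and I would organize the argument around it.

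Whenever $v_1$ is in Case 0 or Case 1 the conclusion is immediate, since then $\pi_\text{d}(v_1) = 0 \le \pi_\text{d}(v_2)$. So it remains to treat $v_1$ in Case 2. I would first rule out $v_2$ being in Case 0 or 1: a Case 2 vector has reward strictly larger than any Case 0 or Case 1 vector, so $v_1$ in Case 2 together with $v_2$ in Case 0 or 1 would give $R(v_1) > R(v_2)$, contradicting $R(v_1) \le R(v_2)$. Hence $v_2$ must also be in Case 2, and the only configuration left to handle is $v_1, v_2$ both in Case 2.

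This is the one substantive step. Here $\alpha_{v_1}, \alpha_{v_2} > 0$, so both pure strategies lie in the attacker's support and the best-response condition \eqref{eq:ua} forces indifference, $U^\text{A}(v_1, \vect \beta) = U^\text{A}(v_2, \vect \beta)$. By \eqref{eq.UA} and the definition \eqref{eq:prob_det} of $\pi_\text{d}$, the per-vector payoff is $U^\text{A}(v, \vect \beta) = R(v) - c_\text{d}\,\pi_\text{d}(v)$, so indifference reads $R(v_1) - c_\text{d}\,\pi_\text{d}(v_1) = R(v_2) - c_\text{d}\,\pi_\text{d}(v_2)$. Rearranging yields $c_\text{d}\big(\pi_\text{d}(v_2) - \pi_\text{d}(v_1)\big) = R(v_2) - R(v_1) \ge 0$, whence $\pi_\text{d}(v_1) \le \pi_\text{d}(v_2)$ when $c_\text{d} > 0$. (The boundary case $c_\text{d} = 0$ is degenerate: detection then does not enter the attacker's payoff, the defender strictly prefers $\pi_\text{d} = 0$ on the support of $P_\text{N}$, so no vector with $P_\text{N}(v) > 0$ lies in Case 2 and $\pi_\text{d}$ vanishes on that support, making the claim trivial.)

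The final ``Hence'' clause follows by applying the implication just proved to every pair of vectors: if $P_\text{N}(v) > 0$ for all $v \in \mathcal{V}$, then any two vectors ordered by reward have correspondingly ordered detection probabilities, which is precisely monotonicity of $\pi_\text{d}(v)$ in $R(v)$. I expect the only delicate point to be the bookkeeping of the reward separation across cases needed to exclude the inverted configuration $v_1$ in Case 2, $v_2$ in Case 0 or 1; the genuine mathematical content reduces to the single indifference computation above.
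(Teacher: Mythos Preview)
Your proof is correct and follows essentially the same approach as the paper: invoke Lemma~\ref{cases_lemma} to classify $v_1, v_2$ into Cases 0/1/2, dispose of the cases where $\pi_\text{d}(v_1)=0$ trivially, rule out the inverted configuration via the reward ordering, and use the attacker's indifference $R(v)-c_\text{d}\pi_\text{d}(v)=\text{const}$ on the support for the remaining Case 2--Case 2 pairing. The only differences are organizational (you group by the case of $v_1$ rather than listing case pairs) and that you explicitly note the degenerate $c_\text{d}=0$ situation, which the paper omits.
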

\begin{proof}
Let $v_1, v_2 \in \mathcal{V}$ be such that $P_N(v_1), P_N(v_2)>0$ and $R(v_1) \leq R(v_2)$. By Lemma~\ref{cases_lemma}, $v_1, v_2$ are in Case 0, 1, or 2. If both $v_1, v_2$ are in either Case 0 or 1, then $\pi_\text{d}(v_1) = \pi_\text{d}(v_2) = 0$ so that $\pi_\text{d}(v_1)\leq
\pi_\text{d}(v_2)$ holds. Similarly the result holds if $v_1, v_2$ are in Cases 0 or 1 and 2 respectively,  since $\pi_\text{d}(v_1) = 0$ and $\pi_\text{d}(v_2) > 0$.

The only remaining case is when both $v_1$ and $v_2$ are in Case 2. 
By Lemma~\ref{cases_lemma}, we have $\alpha_{v_1} > 0, \alpha_{v_2} > 0$. Since the attacker mixes among both $v_1, v_2$, then $U^A(v_1,\vect \beta) = U^A(v_2,\vect \beta)$. 
But, from \eqref{eq:u_na} we have 
$$U^A({v_2,\vect \beta}) \!=\! U^A({v_1,\vect \beta}) + R(v_2) -  R(v_1) + c_\text{d} \left(\pi_\text{d}(v_1) \!-\! \pi_\text{d}(v_2)\right),$$
so that $R(v_2) \ge R(v_1)$ implies $\pi_\text{d}(v_1) \le \pi_\text{d}(v_2)$.
\qed
\end{proof}

Recall that Lemma~\ref{lemma:achievability} (and its proof) give a way to construct, for any probability of detection function targeted, a defender's strategy $\vect \beta$ that does yield this probability of detection function. The first step in the proof of Lemma~\ref{lemma:achievability} was to reindex attack vectors so that they have non-decreasing detection probability.
By Lemma~\ref{monotone_pd}, vectors ranked in non-decreasing reward already satisfy this property (if $P_\text{N}(v) >0, \forall v \in \mathcal{V}$). We can thus skip the step of reindexing and describe the nature of classifiers $c \in \mathcal{C}$ that are given positive weight. 
Classifier $c_1$ detects all attack vectors and is equivalent to a threshold classifier with threshold equal to the reward of the vector with the smallest reward (or smallest detection probability). Classifier $c_2$ detects all vectors except the one with the smallest reward and is equivalent to a threshold classifier with threshold equal to the second smallest attack reward, and so on until we reach classifier $c_{|\mathcal{V}|}$ that detects only the attack vector with highest reward (threshold equal to the highest reward). The remaining weight (if any) is given to classifier  $c_{|\mathcal{V}|+1}$ that always classifies the agent as a non-attacker, which is a threshold classifier with threshold larger than the highest reward $R(v_{|\mathcal{V}|})$. The above procedure leads to the following corollary of Lemma~\ref{lemma:achievability}:

\begin{corollary}
For any NE
$(\vect \alpha, \vect \beta)$ of $\mathcal{G} = (\mathcal{V}, \mathcal{C}, p, c_\text{d}, c_\text{fa}, P_\text{N}, R(\cdot))$ that results in a non-decreasing probability of detection $\pi^{\vect \beta}_\text{d}$ there exists a NE $(\vect \alpha, \vect{\hat{\beta}})$ of $\mathcal{G}$, where $\vect{\hat{\beta}}_c = 0, \forall c \in \mathcal{C} - \mathcal{C}^T$ and $\pi^{\vect \beta}_\text{d}(v) = \pi^{\vect{\hat{\beta}}}_\text{d}(v), ~\forall v \in \mathcal{V}$. 
\label{corollary:mix_thresholds}
\end{corollary}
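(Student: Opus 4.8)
The plan is to produce $\vect{\hat\beta}$ by running the explicit construction in the proof of Lemma~\ref{lemma:achievability} on the target function $f = \pi^{\vect\beta}_\text{d}$, and then to argue that (i) the resulting strategy places weight only on threshold classifiers, and (ii) it is again a best response, so that $(\vect\alpha, \vect{\hat\beta})$ is a NE. The starting point is that, because the attack vectors are ordered by non-decreasing reward and $\pi^{\vect\beta}_\text{d}$ is non-decreasing in the attacker reward by hypothesis (this is exactly the conclusion of Lemma~\ref{monotone_pd} when $P_\text{N}(v)>0$ everywhere, but the corollary only needs the monotonicity itself), the reindexing step of Lemma~\ref{lemma:achievability} is unnecessary: the reward order already lists $f$ in non-decreasing order. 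Applying the construction therefore yields $\vect{\hat\beta}$ with $\pi^{\vect{\hat\beta}}_\text{d}(v) = f(v) = \pi^{\vect\beta}_\text{d}(v)$ for every $v\in\mathcal{V}$, which is the equality of detection functions asserted in the corollary.

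Next I would show $\vect{\hat\beta}_c = 0$ for all $c \notin \mathcal{C}^T$. The only classifiers receiving positive weight in the construction are the nested classifiers $c_1, \ldots, c_{|\mathcal{V}|+1}$, where for $2 \le i \le |\mathcal{V}|$ the classifier $c_i$ detects exactly the suffix $\{v_i, \ldots, v_{|\mathcal{V}|}\}$ and gets incremental weight $f(v_i) - f(v_{i-1})$, while $c_1$ detects all of $\mathcal{V}$ and $c_{|\mathcal{V}|+1}$ detects nothing (both clearly in $\mathcal{C}^T$). When $R(v_{i-1}) < R(v_i)$, the detected suffix equals $\{v : R(v) \ge R(v_i)\}$, so $c_i = \mathds{1}_{R(v)\ge R(v_i)} \in \mathcal{C}^T$. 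The delicate point, and the one I expect to be the crux, is the case of ties $R(v_{i-1}) = R(v_i)$: here $c_i$ splits a block of equal-reward vectors and is \emph{not} a threshold classifier, so I must show it carries no weight. This is where monotonicity is used a second time: $R(v_{i-1}) = R(v_i)$ forces $\pi^{\vect\beta}_\text{d}(v_{i-1}) = \pi^{\vect\beta}_\text{d}(v_i)$, i.e.\ $f(v_{i-1}) = f(v_i)$, so the incremental weight $\hat\beta_{c_i} = f(v_i) - f(v_{i-1}) = 0$. Hence every classifier with positive weight sits at a reward jump and lies in $\mathcal{C}^T$, giving $\vect{\hat\beta}_c = 0$ off $\mathcal{C}^T$.

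Finally I would verify the NE property. By Lemma~\ref{same_pd}, both $U^\text{A}$ and $U^\text{D}$ depend on the defender's strategy only through its detection function, and we have arranged $\pi^{\vect{\hat\beta}}_\text{d} = \pi^{\vect\beta}_\text{d}$. Consequently $U^\text{A}(\vect{\hat\alpha}, \vect{\hat\beta}) = U^\text{A}(\vect{\hat\alpha}, \vect\beta)$ for every attacker strategy $\vect{\hat\alpha}$ and $U^\text{D}(\vect\alpha, \vect{\hat\beta}) = U^\text{D}(\vect\alpha, \vect\beta)$, so the two NE inequalities \eqref{eq:ua} and \eqref{eq:ub} for $(\vect\alpha, \vect\beta)$ transfer verbatim to $(\vect\alpha, \vect{\hat\beta})$: the attacker still has no profitable deviation (its payoff is unchanged against $\vect{\hat\beta}$), and since $\vect\beta$ maximized the defender's payoff against $\vect\alpha$ in $\mathcal{G}$, the equal-valued $\vect{\hat\beta}$ does too. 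Thus $(\vect\alpha, \vect{\hat\beta})$ is a NE of $\mathcal{G}$ with the claimed properties. The remaining work is purely bookkeeping on the construction; the single substantive step is the tie-handling in the second paragraph.
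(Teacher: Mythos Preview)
Your proof is correct and follows the same construction the paper sketches in the paragraph leading up to the corollary: apply the Lemma~\ref{lemma:achievability} construction to $f=\pi^{\vect\beta}_\text{d}$ without reindexing, observe that the resulting nested classifiers are threshold classifiers, and conclude via Lemma~\ref{same_pd} that payoffs are unchanged so the NE property carries over. Your treatment is in fact more careful than the paper's on two points it leaves implicit --- the tie case $R(v_{i-1})=R(v_i)$ (where monotonicity in both directions forces $\pi_\text{d}(v_{i-1})=\pi_\text{d}(v_i)$, so the non-threshold suffix classifier receives zero weight) and the explicit verification of the NE inequalities for $(\vect\alpha,\vect{\hat\beta})$.
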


We now give the proof of Theorem~\ref{threshold_thm}.
\begin{proof}[Proof of Theorem~\ref{threshold_thm}]
Let $(\vect \alpha, \vect \beta)$ be a NE of $\mathcal{G} = (\mathcal{V}, \mathcal{C}, p, c_\text{d}, c_\text{fa}, P_\text{N}, R(\cdot))$, that results in a probability of detection function $\pi^{\vect \beta}_\text{d}$. 

\underline{Step 1.} First assume that $P_\text{N}(v) > 0$, for all $v \in \mathcal{V}$. By Lemma~\ref{monotone_pd}, $\pi^{\vect \beta}_\text{d}$ is non-decreasing hence we can directly apply Corollary~\ref{corollary:mix_thresholds}, which gives a probability $\vect{\hat{\beta}}$ over $\mathcal{C}$ with positive weight only on $\mathcal{C}^T$ such that $(\vect \alpha, \vect{\hat{\beta}})$ is a NE of $\mathcal{G}$. Therefore $(\vect \alpha, \vect{\beta}^T)$ is also NE of $\mathcal{G}^T = (\mathcal{V}, \mathcal{C}^T, p, c_\text{d}, c_\text{fa}, P_\text{N}, R(\cdot))$, where $\vect \beta^T$ is a probability measure over $\mathcal{C}^T$ with $\beta^T_{c} = \hat{\beta}_{c},$ for all $c\in \mathcal{C}^T$ and  $\pi^{\vect \beta^T}_\text{d}(v) = \pi^{\vect{\hat{\beta}}}_\text{d}(v)= \pi^{\vect{\beta}}_\text{d}(v)$, for all $v \in \mathcal{V}$.

\underline{Step 2.} Second, assume that there exists a unique $v^* \in \mathcal{V}$ with $P_\text{N}(v^*) = 0$. Note that, if $\alpha_{v^*} > 0$, then $\pi_\text{d}(v^*) = 1$. Indeed, suppose $\pi_\text{d}(v^*) < 1$. Consider $\vect{\hat{\beta}}$ that results in $\pi_\text{d}^{\vect{\hat{\beta}}} (v^*) = 1$ and $\pi_\text{d}^{\vect{\hat{\beta}}} (v) = \pi_\text{d}^{\vect \beta} (v)$, for all $v \neq v^*$. By Lemma~\ref{lemma:achievability}, such $\vect{\hat{\beta}}$ exists. By Lemma~\ref{same_pd}, 
\begin{equation*}
U^\text{D}(\vect \alpha, \vect{\hat{\beta}}) = U^\text{D}(\vect \alpha, \vect \beta) + c_\text{d}  \left(1 - \pi_\text{d}^{\vect \beta}(v^*) \right) > U^\text{D}(\vect \alpha, \vect \beta),
\end{equation*}
which contradicts the fact that $(\vect{\alpha}, \vect{\beta})$ is a NE. 
We distinguish two sub-cases: 
\begin{enumerate}[(a)]
\item $v^*$ is not the highest reward vector, i.e., there exists $\hat{v} \in \mathcal{V}$, with $R(\hat{v}) > R({v^*})$. In that case, $\alpha_{v^*} = 0$. Indeed, suppose that $\alpha_{v^*} > 0$. By the previous analysis, $\pi_\text{d}(v^*) = 1$. 
By Lemma~\ref{same_pd} we have
\begin{align*}
 U^A(\hat{v}, \vect \beta) &= U^A({v^*}, \vect \beta) +  R(\hat{v}) - R(v^*) +c_\text{d} \cdot (1 - \pi_\text{d}(\hat{v})),\\
  &> U^A({v^*}, \vect \beta),
\end{align*}
since $R(\hat{v})>R(v^*)$ and $\pi_\text{d}(\hat{v}) \leq 1$. Contradiction.

Let $\vect{\tilde{\beta}}$ be such that $\pi_\text{d}^{\vect{\tilde{\beta}}} (v^*) = \pi_\text{d}^{\vect{{\beta}}} (v^*_\text{next})$ (where $v^*_\text{next}$ is the next rewarding strategy after $v^*$) and $\pi_\text{d}^{\vect{\tilde{\beta}}} (v) = \pi_\text{d}^{\vect \beta} (v)$, for all $v \neq v^*$. By Lemma~\ref{lemma:achievability}, such $\vect{\tilde{\beta}}$ exists. Since $\alpha_{v^*} = 0$, $(\vect \alpha, \vect{\tilde{\beta}})$ is still a NE of $\mathcal{G}$, with the same pair of payoffs as $(\vect \alpha, \vect \beta)$, but with probability of detection function $\pi_\text{d}^{\vect{\tilde{\beta}}}$ non-decreasing in the attack reward. We can then apply Corollary~\ref{corollary:mix_thresholds} and conclude in the same way as in Step 1; and $\pi^{\vect \beta^T}_\text{d}(v)$ and $\pi^{\vect{\beta}}_\text{d}(v)$ will differ only for $v=v^*$.

\item $v^*$ is the highest reward vector, i.e., for all $v \in \mathcal{V}, R(v^*)>R(v)$. In that case, either $\alpha_{v^*}>0$ and then $\pi_\text{d}(v^*) = 1$ so that $\pi_\text{d}$ is non-decreasing in the attack reward and we conclude as in Step 1; or $\alpha_{v^*}=0$ and then we define $\vect{\tilde{\beta}}$ such that $\pi_\text{d}^{\vect{\tilde{\beta}}} (v^*) = 1$ and $\pi_\text{d}^{\vect{\tilde{\beta}}} (v) = \pi_\text{d}^{\vect \beta} (v)$ for all $v \neq v^*$ and conclude as in Step 2(a). 

\end{enumerate}

\underline{Step 3.} Finally, the procedure above generalizes straightforwardly if there exist several attack vectors with $P_N(v)=0$, hence concluding the proof. 
\qed
\end{proof}

\subsection{Reduced attacker's strategy space and equilibrium structure}\label{sec.reduction-attacker}

We now turn to the attacker's equilibrium strategy. We first show Lemma~\ref{scaled_spam}, which gives the attacker's equilibrium strategy for most attack vectors. Then we show that this result, together with the reduction of the defender's strategy space, allow us to reduce the attacker's strategy space as well.
 \begin{lemma}
If $(\vect \alpha, \vect \beta)$ is a NE of   $\mathcal{G} =
(\mathcal{V}, \mathcal{C}, p, c_\text{d}, c_\text{fa}, P_\text{N}, R(\cdot))$, then for all
$v \in \mathcal{V}$ such that $0 < \pi_\text{d}(v) < 1$,
\begin{equation} 
\alpha_{v} =  \frac{1-p}{p} \displaystyle \frac{c_{\text{fa}}}{c_\text{d}}
P_\text{N}(v).
\label{eq:proportional}
\end{equation}
\label{scaled_spam}
\end{lemma}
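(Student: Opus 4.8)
The plan is to exploit Lemma~\ref{same_pd}, which tells us that the defender's payoff depends on $\vect \beta$ only through the detection function $\pi_\text{d}$ and is, moreover, \emph{separable} in the values $\pi_\text{d}(v)$. First I would substitute the expression for $U^\text{A}$ into that for $U^\text{D}$ and collect the terms multiplying each $\pi_\text{d}(v)$, obtaining
\[
U^\text{D}(\vect \alpha, \vect \beta) = -\sum_{v \in \mathcal{V}} \alpha_v R(v) + \sum_{v \in \mathcal{V}} \pi_\text{d}(v)\left[ c_\text{d}\, \alpha_v - \frac{1-p}{p} c_\text{fa} P_\text{N}(v) \right].
\]
The first sum does not depend on $\vect \beta$, so for a fixed attacker strategy $\vect \alpha$ the defender's best response reduces to choosing, for each $v$ separately, the value $\pi_\text{d}(v) \in [0,1]$ that maximizes its corresponding coefficient term.

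The crux is then to turn the interior hypothesis $0 < \pi_\text{d}(v) < 1$ into an indifference condition. Since $\pi_\text{d}(v)$ lies strictly inside $[0,1]$, I would consider two admissible defender deviations: one raising the detection probability at $v$ to $1$ and one lowering it to $0$, in both cases leaving $\pi_\text{d}(v')$ unchanged for every $v' \neq v$. Lemma~\ref{lemma:achievability} guarantees that a strategy $\vect \beta$ realizing each modified detection function exists. Feeding each deviation into the defender's best-response inequality \eqref{eq:ub} and using the separable form above, the upward deviation yields $c_\text{d} \alpha_v - \frac{1-p}{p} c_\text{fa} P_\text{N}(v) \le 0$ (its multiplier $1 - \pi_\text{d}(v)$ being strictly positive), while the downward deviation yields the reverse inequality (its multiplier $-\pi_\text{d}(v)$ being strictly negative). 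Together these force the coefficient to vanish,
\[
c_\text{d}\, \alpha_v - \frac{1-p}{p} c_\text{fa} P_\text{N}(v) = 0,
\]
and dividing by $c_\text{d} > 0$ gives exactly \eqref{eq:proportional}.

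The main obstacle I anticipate is justifying that the detection probability at a single attack vector can be perturbed independently of the others; this independence is precisely what Lemma~\ref{lemma:achievability} supplies (for $\mathcal{C} = 2^{|\mathcal{V}|}$), and it is what lets me exploit the separable structure of $U^\text{D}$ vector-by-vector rather than only through the pooled deviations that a restricted classifier set would permit. A secondary point to keep in mind is the tacit assumption $c_\text{d} > 0$, without which the concluding division is invalid and, indeed, the interior condition would carry no information about $\alpha_v$.
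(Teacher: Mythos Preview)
Your proposal is correct. Both your argument and the paper's rest on the same underlying idea---exploiting the defender's equilibrium optimality at the single attack vector $v$, using Lemma~\ref{lemma:achievability} to perturb $\pi_\text{d}(v)$ in isolation---but the packaging differs. The paper argues via the indifference principle: since $0<\pi_\text{d}(v_i)<1$, there exist classifiers $c_i, c^*$ in the defender's support with $c_i(v_i)=1$ and $c^*(v_i)=0$; after a WLOG construction (itself appealing to Lemma~\ref{lemma:achievability}) making them agree on every $v\neq v_i$, the equality $U^\text{D}(\vect\alpha,c_i)=U^\text{D}(\vect\alpha,c^*)$ yields the coefficient identity directly. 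Your route sidesteps that construction by writing out the separable form of $U^\text{D}$ in the detection probabilities and applying the no-profitable-deviation inequality~\eqref{eq:ub} twice (once pushing $\pi_\text{d}(v)$ up, once down). This is arguably cleaner, though it extracts the same conclusion from the same ingredients. Your caveat that Lemma~\ref{lemma:achievability} is stated for $\mathcal{C}=2^{|\mathcal{V}|}$ is well taken and applies equally to the paper's own proof.
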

\begin{proof}
Consider $v_i \in \mathcal{V}$ with $\pi_\text{d}(v_i) \in (0, 1)$. Since $\pi_\text{d}(v_i) \neq 0$, there exists $c_i \in \mathcal{C}$ s.t. $c_i(v_i) = 1$ with $\beta_{c_i} > 0$.
 Since $\pi_\text{d}(v_i) \neq 1$, there exists $c^* \in \mathcal{C}$ s.t. $c^*(v_i) = 0$ with $\beta_{c^*} > 0$.
Now suppose that $c_i(v) = c^*(v), \forall v \in \mathcal{V} - \{ v_i \}$. This is without loss of generality. Indeed, even if we cannot find in the support of $\vect \beta$ two such classifiers, we can construct another defender's strategy $\vect{\tilde{\beta}}$ that contains two such classifiers and has the same probability of detection function. We do that using a construction similar to the one in the proof of Lemma~\ref{lemma:achievability}. If other vectors have the same reward as $v_i$, we include separately each classifier that detects those vectors as attacks. Since $\vect \beta$ and $\vect{\tilde{\beta}}$ have the same probability of detection function, $(\vect \alpha, \vect{\tilde{\beta}})$ is also a NE.

Finally, since $\beta_{c_i}>0, \beta_{c^*} >0$, $U^\text{D}(\vect \alpha, c_i)= U^\text{D}(\vect \alpha, c^*)$, that is
\begin{align*}
& \sum_{v \in \mathcal{V}} \!\alpha_v \!\left( R(v) \!-\! c_d \mathds{1}_ {c_i(v)=1} \right) + \frac{1\!-\!p}{p} c_{\text{fa}} \!\!\sum_{v' \in \mathcal{V}} P_\text{N}(v')\mathds{1}_ {c_i(v')=1} \\
= & \sum_{v \in \mathcal{V}} \!\alpha_v \!\left( R(v) \!-\! c_d \mathds{1}_ {c^*(v)=1} \right) + \frac{1\!-\!p}{p} c_{\text{fa}} \!\!\sum_{v' \in \mathcal{V}} P_\text{N}(v')\mathds{1}_ {c^*(v')=1}.
\end{align*}
This yields
\begin{equation*}
- c_d \alpha_{v_i} + \frac{1-p}{p} c_{\text{fa}} P_\text{N}(v_i) = 0,
\end{equation*}
which immediately gives the result.
\qed
\end{proof}

Lemma~\ref{scaled_spam} shows that for attack vectors $v$ with some uncertainty of getting detected, the attacker mixes proportionally to the non-attacker's distribution. Note that the result clearly also applies to any Nash equilibrium of $\mathcal{G}^T$.
Yet, the attacker's strategy space could be large and complex.  
For instance, consider a game in which there are $M$ different targets and the attacker chooses at each time step from 1 to $N$ a target to attack in $\{ 0, \ldots, M\}$ (counting the no-attack case). An attack $v$ is a sequence of $N$ attacks and $\mathcal{V}$ is the set of all such sequences which is of cardinality ${(M+1)}^N$.
If the defender's classification rule is a threshold on the attack reward, however, permutations of attacks that yield the same attack reward will have the same probability of getting detected. Hence, as we explain below, they can be collapsed into a single strategy.
For instance, if only the total number of times each target is hit (and not the order of the attacks) matters for the reward of the attacker, there are only $N+M+1\choose N$ combinations of attack rewards, and then at most that many unique values of the attacker's reward. Hence, we can exploit the fact that the defender uses only threshold classifiers in $\mathcal{G}^T$ and reduce the cardinality (and complexity) of the attacker's strategy space. As $N$ or $M$ increase, the benefits from this reduction become more profound.

The intuition behind the reduction in the attacker's strategy space comes from the following observation: 
If the attacker includes in his equilibrium support one attack vector of a certain reward, he should include all vectors of the same reward since all of them will have the same probability of detection. By Lemma~\ref{scaled_spam}, the attacker's equilibrium weight on each one is proportional to the non-attacker's distribution. Since the defender's classification is based on the reward of the attack vectors (and not on the actual vector), a game in which the attacker mixes on attack rewards (instead of attack vectors) does not influence the defender's equilibrium strategy. Furthermore, such a game does not give any more or less freedom to the attacker, but reduces the complexity (cardinality) of the strategy space of the attacker.

\begin{figure}[t]
  \centering
    \includegraphics[scale=0.5]{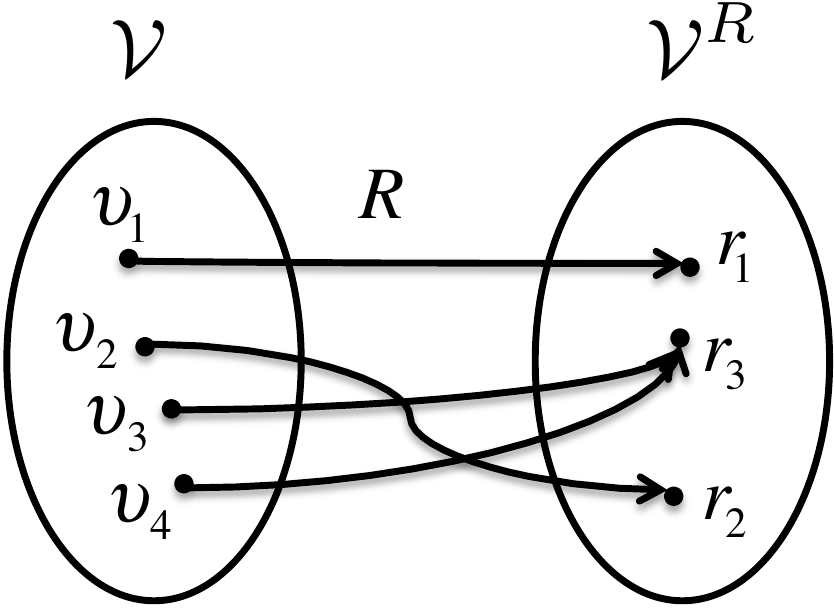}
      \caption{Attacker's initial and reduced strategy spaces.}
    \label{fig:v_r}
\end{figure}

We now formalize the reduction of the game $\mathcal{G}^T$ to a game $\mathcal{G}^{R, T}$. The new attacker's strategy space is defined as the set of all images of the reward function $R: \mathcal{V} \rightarrow  \mathbb{R}_+$ (see an illustration in Fig.~\ref{fig:v_r}):
\begin{definition}[Reduced strategy space of the attacker]
\begin{equation}
\mathcal{V}^R = \{ r \in  \mathbb{R}_+: r = R(v),~\text{for some } v \in \mathcal{V}\}.
\label{eq:v_r}
\end{equation}
\label{reduced_att}
\end{definition}
Note that, although $\mathcal{V}^R$ is not rigorously a subset of $\mathcal{V}$, $\mathcal{V}^R$ is a reduced strategy space in the sense that $R$ is clearly a surjection from $\mathcal{V}$ to $\mathcal{V}^R$.
The non-attacker's probability measure can be similarly reduced to a
probability measure that describes the
non-attacker's distribution on $\mathcal{V}^R$ with
\begin{equation}
P^R_\text{N} (r) = \displaystyle \sum_{v^{\prime}} P_\text{N} (v^{\prime} \in \mathcal{V})  \mathds{1}_ {R(v^{\prime})=r}.
\label{p_nr}
\end{equation}

Finally, since in $\mathcal{G}^T$, $\vect \beta$ is a probability on $\mathcal{C}^T$, we have that $c(v)$ is the same for all $v$ with the same reward and that any two attack vectors with the same reward have the same probability of detection. Hence, by abuse of notation, we can define $c(r)$ as $c(v)$ for any $v$ such that $R(v)=r$, and we can define the probability of detection function as a function of the reward by $\pi_\text{d}(r) := \pi_\text{d}(v)$, where $r = R(v)$. The reduced game $\mathcal{G}^{R, T} = (\mathcal{V}^R, \mathcal{C}^T, p, c_\text{d}, c_\text{fa}, P^R_\text{N})$ is then defined as the game between the attacker choosing $r\in \mathcal{V}^R$ and the defender choosing $c\in \mathcal{C}^T$ where the utilities adapt \eqref{eq.UA}-\eqref{eq.UDhat} in the obvious way: 
\begin{align*}
U^\text{A}(r, c) & = r - c_\text{d}  \mathds{1}_ {c(r)=1},\\
U^\text{D}(r, c) & = - U^\text{A}(r,c) - \displaystyle \frac{1-p}{p} c_\text{fa} \sum_{r' \in \mathcal{V}^R} P^R_N(r') \mathds{1}_ {c(r')=1}.
\end{align*}
The expected utilities in mixed strategies also adapt \eqref{eq:u_na}-\eqref{eq:u_nd} in the obvious way.

The following proposition formalizes the relationship between the NE of $\mathcal{G}^T$ and $\mathcal{G}^{R, T}$.
\begin{proposition}
If $(\vect \alpha, \vect \beta)$ is a NE of $\mathcal{G}^T =
(\mathcal{V}, \mathcal{C}^T, p, c_\text{d}, c_\text{fa}, P_\text{N}, R(\cdot))$, then $(\vect \alpha^*, \vect \beta)$ is a NE of $\mathcal{G}^{R, T} =(\mathcal{V}^R, \mathcal{C}^T, p, c_\text{d}, c_\text{fa}, P^R_\text{N})$ with the same equilibrium payoff pair
where $\alpha^*_{r_i} = \displaystyle \sum_{v_j \in \mathcal{V}, R(v_j) = r_i} \alpha_{v_j}, ~\forall r_i \in \mathcal{V}^R.$
\label{attacker_reduced_prop}
\end{proposition}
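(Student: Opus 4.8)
The plan is to show that the aggregation map $\vect \alpha \mapsto \vect \alpha^*$ preserves \emph{all} payoffs, not merely at equilibrium, and then to transfer both best-response conditions from $\mathcal{G}^T$ to $\mathcal{G}^{R, T}$ through this payoff identity. The structural fact I would exploit first is that in $\mathcal{G}^T$ the defender uses only threshold classifiers, so every $c \in \mathcal{C}^T$ satisfies $c(v) = \mathds{1}_{R(v) \ge t}$ for some $t$; consequently, for any $\vect{\hat{\beta}}$ over $\mathcal{C}^T$, both $c(v)$ and the detection probability $\pi^{\vect{\hat{\beta}}}_\text{d}(v)$ depend on $v$ only through $R(v)$, and we may write $\pi^{\vect{\hat{\beta}}}_\text{d}(r)$ unambiguously. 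This is precisely what makes the reduced utilities $U^\text{A}(r,c)$ and $U^\text{D}(r,c)$ well-defined and is the key property enabling rewards to replace vectors without loss of information.

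First I would check that $\vect \alpha^*$ is a legitimate probability measure on $\mathcal{V}^R$: its entries are nonnegative and $\sum_{r \in \mathcal{V}^R} \alpha^*_r = \sum_{r} \sum_{v: R(v) = r} \alpha_v = \sum_{v \in \mathcal{V}} \alpha_v = 1$, since $R$ partitions $\mathcal{V}$ into the preimages of the distinct reward values. Next, for an arbitrary defender strategy $\vect{\hat{\beta}}$ over $\mathcal{C}^T$, I would establish the payoff identities by regrouping the finite sums in \eqref{eq:u_na}--\eqref{eq:u_nd} according to reward value. Using $\pi^{\vect{\hat{\beta}}}_\text{d}(v) = \pi^{\vect{\hat{\beta}}}_\text{d}(r)$ for $r = R(v)$ together with the definitions of $\alpha^*_r$ and of $P^R_\text{N}(r)$ in \eqref{p_nr}, the inner sums collapse to give $U^\text{A}(\vect \alpha, \vect{\hat{\beta}}) = U^\text{A}(\vect \alpha^*, \vect{\hat{\beta}})$ and $U^\text{D}(\vect \alpha, \vect{\hat{\beta}}) = U^\text{D}(\vect \alpha^*, \vect{\hat{\beta}})$, where the left sides are evaluated in $\mathcal{G}^T$ and the right sides in $\mathcal{G}^{R, T}$. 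Taking $\vect{\hat{\beta}} = \vect \beta$ then yields the claimed equality of equilibrium payoff pairs.

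With the payoff identity in hand, the two best-response conditions follow. For the defender the strategy spaces of both games coincide ($\mathcal{C}^T$), so the identity $U^\text{D}(\vect \alpha, \vect{\hat{\beta}}) = U^\text{D}(\vect \alpha^*, \vect{\hat{\beta}})$ holding for every $\vect{\hat{\beta}}$ immediately transfers optimality of $\vect \beta$ against $\vect \alpha$ in $\mathcal{G}^T$ into optimality against $\vect \alpha^*$ in $\mathcal{G}^{R, T}$. For the attacker I would use that $R$ is a surjection from $\mathcal{V}$ onto $\mathcal{V}^R$: any candidate deviation $\vect{\hat{\alpha}}^*$ on $\mathcal{V}^R$ can be lifted to a distribution $\vect{\hat{\alpha}}$ on $\mathcal{V}$ whose aggregate equals $\vect{\hat{\alpha}}^*$ (e.g.\ place all mass $\hat{\alpha}^*_r$ on a single representative $v$ with $R(v) = r$). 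The attacker-side identity then gives $U^\text{A}(\vect{\hat{\alpha}}^*, \vect \beta) = U^\text{A}(\vect{\hat{\alpha}}, \vect \beta) \le U^\text{A}(\vect \alpha, \vect \beta) = U^\text{A}(\vect \alpha^*, \vect \beta)$, where the middle inequality is the equilibrium property of $(\vect \alpha, \vect \beta)$ in $\mathcal{G}^T$; hence $\vect \alpha^*$ is a best response to $\vect \beta$ in $\mathcal{G}^{R, T}$.

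The step requiring the most care is the attacker's direction, precisely because $\mathcal{V}^R$ is not literally a subset of $\mathcal{V}$, so the two games do not share a strategy space and this is \emph{not} a best-response equivalence in the sense of Definition~\ref{equivalence}. The lifting argument via surjectivity of $R$ is what bridges this gap, and I would state explicitly that every reduced deviation arises as the aggregate of some original deviation with identical payoff. The remaining content is the routine regrouping of finite sums, which is well-behaved since $\mathcal{V}$, and hence $\mathcal{V}^R$, is finite.
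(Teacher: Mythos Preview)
Your proof is correct and follows essentially the same approach as the paper: both exploit that threshold classifiers depend on $v$ only through $R(v)$, then regroup the payoff sums by reward value to show $U^\text{A}(\vect \alpha, \vect \beta) = U^\text{A}(\vect \alpha^*, \vect \beta)$ and likewise for $U^\text{D}$. Your treatment is in fact more complete than the paper's, which shows the payoff identity only at the specific pair $(\vect\alpha,\vect\beta)$ and then simply asserts the NE property, whereas you explicitly verify both best-response conditions --- in particular, your lifting argument for the attacker's direction fills a step the paper leaves implicit.
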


The proof of Proposition~\ref{attacker_reduced_prop} can be found \techrep{in the Appendix}{in our technical report \cite{techreport}}. 
The importance of Proposition~\ref{attacker_reduced_prop} comes from the fact that it is easier to compute the NE $(\vect \alpha^*, \vect \beta)$ of $\mathcal{G}^{R, T}$, in which the cost matrix of the attacker consists of non-identical rows. This NE is given in Theorem~\ref{aggregate_theorem} in Section~\ref{sec:analysis}. However, from the NE $(\vect \alpha^*, \vect \beta)$ of $\mathcal{G}^{R, T}$, we can easily recover a NE $(\vect \alpha, \vect \beta)$ of $\mathcal{G}^{T}$ as follows. 
By Proposition~\ref{attacker_reduced_prop}, $\vect \beta$ is unchanged.
Given $\vect \alpha^*$ on $\mathcal{V}^R$ we compute the attacker's strategy $\vect \alpha$ over $\mathcal{V}$, as follows:
For $r_i \in \mathcal{V}^R$ with $\pi_\text{d}(r_i) \in (0,1)$, $\alpha_{v_i}$ is given by \eqref{eq:proportional} $\forall v_i \in \mathcal{V}$ with $R(v_i) = r_i$, by Lemma~\ref{scaled_spam}. For $r_i$ with $\pi_\text{d}(r_i) \in \{ 0, 1\}$, any possible combination of weights is possible, as long as  $\sum_{v_j \in \mathcal{V}, R(v_j) = r_i} \alpha_{v_j} = \alpha^*_{r_i}$.
Hence, although we reduce the attacker's strategy space, we provide a roadmap to get all NE of $\mathcal{G}^T$ as well. Note that if all attack strategies in $\mathcal{V}$ yield distinct attack reward, then $|\mathcal{V}| = |\mathcal{V}^R|$ and there is no reduction in the attacker's strategy space.

\section{Nash Equilibrium Analysis}
\label{sec:analysis}

Our goal in this section is to characterize the structure of the NE of the classification game.
It is known that every finite game (finite number of players with finite number of strategies for each player) has a mixed-strategy NE~\cite{Nash50}.
Our game is finite, thus it admits a NE in mixed strategies. However, finding the Nash equilibrium has a high computational complexity in the general case \cite{Chen:2009:SCC:1516512.1516516}.

We consider the game  $\mathcal{G}^{R,T}= (\mathcal{V}^R, \mathcal{C}^T, p, c_\text{d}, c_\text{fa}, P^R_\text{N})$, in which the attacker's strategy space consists of distinct attack rewards $r \in \mathcal{V}^R$, the defender's strategy space consists of threshold classifiers $c \in \mathcal{C}^T$, and $P^R_\text{N}$ is the non-attacker's probability measure on $\mathcal{V}^R$ given by \eqref{p_nr}.
We denote by $r_i$, $i \in \{1, \ldots, |\mathcal{V}^R|\}$ the elements of $\mathcal{V}^R$ and, without loss of generality, we assume that they are ranked in increasing order, i.e., $r_{i}<r_{i+1}$ for all $i\in \{ 1, \ldots, |\mathcal{V}^R|-1\}$. 
Similarly, classifier $c_i$ corresponds to a threshold classifier with threshold equal to the attacker reward $r_i$. Recall (see Definition~\ref{def.CT}) that a threshold classifier with threshold $t$ classifies as attacker if the reward is $r\ge t$. Hence, $c_1$ corresponds to the ``always classify as attacker" strategy. By definition, $\mathcal{C}^T$ also includes the ``always classify as non-attacker" strategy, which is denoted by $c_{|\mathcal{V}^R|+1}$ and corresponds to a threshold $r_{|\mathcal{V}^R|} + \delta$ for any $\delta>0$. Hence we have $| \mathcal{C}^T| =  |\mathcal{V}^R| + 1$.

We can express the payoff functions of the players in compact form as matrices. We define $\tilde{\Lambda}$ to be the cost matrix of the attacker, with $\tilde{\Lambda}(i,j) = c_d \mathds{1}_{r_i \geq  r_j} - r_i$, $i \in \{1, \ldots, |\mathcal{V}^R|\}$, $j \in \{1, \ldots, |\mathcal{V}^R + 1|\}$:
\begin{equation*}
  \tilde{\Lambda} = c_\text{d} \begin{pmatrix} 1 & 0 & \cdot \cdot \cdot &
    \cdot \cdot \cdot & 0&0\\ \vdots & 1 & \ddots & & \vdots &\vdots
    \\ \vdots & & \ddots & \ddots & \vdots & \vdots \\ \vdots & &&
    \ddots & 0 &\vdots \\ 1 &\cdots &\cdots& \ldots & 1 & 0 \\
 \end{pmatrix}
 - \begin{pmatrix} r_1 \\ r_2 \\ r_3 \\ \vdots
   \\ r_{{|\mathcal{V}^R|-1}} \\ r_{{|\mathcal{V}^R|}}\\
 \end{pmatrix} \cdot \mathbf{1}'_{|\mathcal{V}^R| + 1}.
 \label{eq:lambda}
 \end{equation*}
Recall that we use the \textit{prime} sign (${}^{\prime}$) for transpose of matrices and vectors. There are $|\mathcal{V}^R|$ rows (strategies) for the attacker, ranked by increasing reward. 
Certain computations are simplified by using a matrix with only positive entries. Therefore, we define $\Lambda \triangleq \tilde{\Lambda} + (r_{{|\mathcal{V}^R|}}+ \epsilon)\cdot 1_{{|\mathcal{V}^R|} \times{|\mathcal{V}^R| + 1}}, \text{ where }
\epsilon >0$.
Adding a constant to the players' payoff does not affect their best responses, and hence does not change the equilibrium strategies. Thus, from here on, we will use matrix $\Lambda$ to define the players' payoffs.

In the following, the pair $(\vect \alpha, \vect \beta)$ will denote the probability measures of the attacker and defender on $\mathcal{V}^R, \mathcal{C}^T$ respectively. The attacker's expected cost is given by $\vect \alpha^{\prime} \Lambda \vect \beta$.
The defender's expected payoff is given by $\vect \alpha^{\prime} \Lambda^\text{eq} \vect
\beta$, with  \begin{equation}
\Lambda^\text{eq} = \Lambda - \vect 1_{|\mathcal{V}^R|} \cdot \vect
\mu^{\prime},
 \label{eq:def_payoff}
 \end{equation} where $\vect \mu$  represents
the false alarm penalty vector for the defender with elements $\mu_i =  \displaystyle \frac{1-p}{p}
c_\text{fa} \sum_{k \geq i} P^R_\text{N}(r_k)$.
 We assume that $\vect \mu$ is a strictly decreasing vector (component-wise): $\mu_i > \mu_{i+1}, \forall i \in \{1, \ldots, |\mathcal{V}^R|\}$. This assumption is equivalent to assuming that the non-attacker gives some positive weight to all strategies $r \in \mathcal{V}^R$, i.e.,  $P^R_\text{N}(r_i) > 0,  \forall i \in \{1, \ldots, |\mathcal{V}^R|\}$. Even if this property does not hold, we can still describe how both players behave, as shown in Theorem~\ref{threshold_thm} in Section~\ref{sec:thresholds}.
  
It is easy to show that $\mathcal{G}^{R,T} = (\mathcal{V}^R, \mathcal{C}^T, p, c_\text{d}, c_\text{fa}, P^R_\text{N})$ is best-response equivalent (see Definition~\ref{equivalence}) to a zero-sum game, in which the attacker's cost is given by $\vect \alpha^{\prime} \Lambda^\text{eq} \vect \beta$. Indeed, the two games have the same players with the same strategy spaces. Vector $\vect \mu$ depends only on the non-attacker's distribution and is fixed.  Adding constants to the columns of the cost matrix of the attacker (row player) in the original classification game yields the cost matrix of the attacker of the new game without changing the Nash equilibria of the game. Indeed, the defender's payoff matrix is unchanged, and, for any given $\vect \beta$, minimizing $\vect {\alpha^{\prime}} \Lambda \vect \beta$ and minimizing $\vect{ \alpha^{\prime}} \Lambda^\text{eq} \vect \beta$ with $\Lambda^\text{eq}(i,j) = \Lambda(i,j) - \mu_j$ give the same minimizing strategy for the attacker.  Thus the two games generate the same sets of best response functions and have the same equilibrium strategies. 
 
Note that the best-response equivalence of our game to a zero-sum game guarantees that in all NE the defender's expected payoff will be the same (and equal to the value of the zero-sum game), but the attacker's payoff is not always the same in each equilibrium of the original nonzero-sum game.
Indeed, equilibria of our original game could give different payoffs to the attacker after transforming his cost matrix $\Lambda^\text{eq}$ (adding constants to the columns) back to $\Lambda$. 
 
There exist polynomial algorithms to compute the Nash equilibria in zero-sum games via a transformation to a linear program (LP)~\cite{Karmarkar:1984:NPA:800057.808695}. These algorithms, however, do not provide structure
on the equilibrium strategies. In the remaining of this section, we aim to give
more intuition on the players' NE strategies than the solution derived
via a linear programming toolbox. Along the way, we also provide an algorithm to compute the NE, which can be faster
than solving the LP. 

Our results can be summarized in the following theorem. 
\begin{theorem}
Algorithm~\ref{algo.main} finds all NE of the classification game $\mathcal{G}^{R,T}$. Moreover, if $(\vect \alpha, \vect \beta)$ is a NE, then, there exists $k\in \{1, \ldots, |\mathcal{V}^R|\}$ such that 
\begin{align*}
\vect \beta  &= (0, \ldots, 0,  \beta_k, \ldots, \beta_{|\mathcal{V}^R|}, \beta_{|\mathcal{V}^R|+1}),\\
\vect \alpha  &= (0, \ldots, 0,  \alpha_k, \ldots, \alpha_{|\mathcal{V}^R|}),
\end{align*}
where 
\begin{align}
\beta_i & =  \displaystyle \frac{r_i - r_{i-1}}{c_\text{d}}, && \forall i \in \{k+1, \ldots, {|\mathcal{V}^R|} \}, \\ 
\label{eq:att_weight}  \alpha_i & = \frac{1-p}{p} \frac{c_{\text{fa}}}{c_\text{d}} P_N^R(r_i), && \forall i \in \{k+1, \ldots, {|\mathcal{V}^R|}-1 \},
\end{align}
and $\beta_k, \beta_{|\mathcal{V}^R|+1} \ge 0$ and $\alpha_k, \alpha_{|\mathcal{V}^R|} \ge 0$ are such that 
\begin{itemize}
\item[($i$)] $ \beta_k \in (0,  \frac{r_k - r_{k-1}}{c_\text{d}}), \beta_{|\mathcal{V}^R|+1} =0$, and\\ 
$\alpha_k$ satisfies  \eqref{eq:att_weight}, $\alpha_{|\mathcal{V}^R|} >0$; or

\item[($ii$)] $ \beta_k = 0, \beta_{|\mathcal{V}^R|+1} > 0$, and\\ 
$\alpha_k \in (0, \frac{1-p}{p} \frac{c_{\text{fa}}}{c_\text{d}} P_N^R(r_k))$, $\alpha_{|\mathcal{V}^R|}$ satisfies  \eqref{eq:att_weight}; or

\item[($iii$)] $ \beta_k = 0, \beta_{|\mathcal{V}^R|+1} = 0$, and\\ 
$\alpha_k \!\! \in \!\! [0, \min (\frac{1-p}{p} \frac{c_{\text{fa}}}{c_\text{d}} P_N^R(r_k),  1\!\!-\!\!\sum_{i = k+1}^{|\mathcal{V}^R|-1} \! \alpha_i)]$, $\alpha_{|\mathcal{V}^R|} \!\!\ge\!\! 0$; or

\item[($iv$)] $ \beta_k \in [0,  \frac{r_k - r_{k-1}}{c_\text{d}}], \beta_{|\mathcal{V}^R|+1} \ge 0$, and\\ 
$\alpha_k, \alpha_{|\mathcal{V}^R|}$ satisfy \eqref{eq:att_weight}.

\end{itemize}

\label{aggregate_theorem}
\end{theorem}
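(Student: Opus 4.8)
The plan is to leverage the best-response equivalence to the zero-sum game with payoff matrix $\Lambda^{\text{eq}}$, established just before the statement, so that the NE of $\mathcal{G}^{R,T}$ are exactly the saddle points of $\vect\alpha^{\prime}\Lambda^{\text{eq}}\vect\beta$ (attacker minimizing, defender maximizing). Each saddle point is characterized by complementary slackness: the attacker places positive weight only on rows $i$ attaining $\min_{i'}(\Lambda^{\text{eq}}\vect\beta)_{i'}$, and the defender only on columns $j$ attaining $\max_{j'}(\vect\alpha^{\prime}\Lambda^{\text{eq}})_{j'}$. Throughout I would write $n := |\mathcal{V}^R|$ and use that the threshold classifier $c_j$ detects reward $r_i$ exactly when $i \ge j$, so that $\pi_\text{d}(r_i) = \sum_{j \le i}\beta_j$ and hence $\beta_i = \pi_\text{d}(r_i) - \pi_\text{d}(r_{i-1})$.

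First I would establish the support structure. Since the rewards in $\mathcal{V}^R$ are distinct, Lemma~\ref{cases_lemma} together with the monotonicity of $\pi_\text{d}$ from Lemma~\ref{monotone_pd} implies that the low-reward strategies lie in Case~0 ($\alpha_i = \pi_\text{d}(r_i) = 0$) while the active strategies form a contiguous top block $\{k, \ldots, n\}$; this yields the leading zeros in $\vect\alpha$ and $\vect\beta$, with $k$ the first active index. Next I would derive the interior formulas. For two consecutive active indices the attacker's indifference $U^\text{A}(r_i,\vect\beta) = U^\text{A}(r_{i-1},\vect\beta)$, combined with $U^\text{A}(r_i,\vect\beta) = r_i - c_\text{d}\pi_\text{d}(r_i)$, gives $\beta_i = (r_i - r_{i-1})/c_\text{d}$ for $i \in \{k+1,\ldots,n\}$; and for the strictly interior indices $i \in \{k+1,\ldots,n-1\}$ one has $0 < \pi_\text{d}(r_i) < 1$, so Lemma~\ref{scaled_spam} gives $\alpha_i = \frac{1-p}{p}\frac{c_\text{fa}}{c_\text{d}}P_N^R(r_i)$ directly.

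The remaining and hardest step is the boundary analysis producing cases (i)--(iv). After fixing the interior weights, the only free quantities are $\beta_k, \beta_{n+1}$ (tied by normalization, which telescopes to $\beta_k + \beta_{n+1} = 1 - (r_n - r_k)/c_\text{d}$, equivalently $\pi_\text{d}(r_n) = 1 - \beta_{n+1}$) and $\alpha_k, \alpha_n$ (tied by $\sum_r \alpha_r = 1$). I would translate the NE inequalities at the two ends into constraints on these four numbers. At the bottom, the attacker's best-response inequality for the out-of-support reward $r_{k-1}$ forces $\pi_\text{d}(r_k) = \beta_k \le (r_k - r_{k-1})/c_\text{d}$; whether $r_k$ is detected ($\beta_k > 0$, Case~2) or not ($\beta_k = 0$, Case~1) then governs, via Lemma~\ref{scaled_spam}, whether $\alpha_k$ is pinned to the interior value or is free in a range. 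At the top, the defender's best-response inequality comparing the highest threshold classifier with the never-detect classifier $c_{n+1}$ determines whether $\pi_\text{d}(r_n) = 1$ ($\beta_{n+1} = 0$) or $\pi_\text{d}(r_n) < 1$ ($\beta_{n+1} > 0$), which dually fixes whether $\alpha_n$ is free and positive or pinned to the interior value. The four cases are precisely the mutually exclusive, exhaustive combinations of these two binary boundary alternatives.

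I expect the main obstacle to be making this boundary bookkeeping rigorous: verifying the coupling dictated by complementary slackness (that $\pi_\text{d}(r_k) \in (0,1)$ forces $\alpha_k$ to the interior value while $\pi_\text{d}(r_k) = 0$ leaves it free, and symmetrically at the top), deriving the exact range endpoints (in particular the $\min(\cdot,\cdot)$ in case (iii), which comes jointly from the defender's inequality and the normalization $\sum_r \alpha_r = 1$), and confirming that these combinations coincide exactly with (i)--(iv) without omitting or introducing spurious equilibria. For the claim that Algorithm~\ref{algo.main} finds all NE, I would then argue that the algorithm sweeps the candidate index $k$ and the finitely many boundary configurations, solves the resulting affine indifference and normalization system for the free weights, and retains exactly those candidates satisfying the nonnegativity and best-response inequalities above; by the structure just proved, the retained set coincides with the complete NE set.
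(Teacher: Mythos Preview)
Your approach via game-theoretic indifference and Lemmas~\ref{cases_lemma}, \ref{monotone_pd}, \ref{scaled_spam} is a genuinely different route from the paper's. The paper does not use those lemmas here at all; instead it works entirely inside the LP: it passes to the polyhedron $P$ (Lemma~\ref{def_lemma}), proves that at an optimal extreme point the tight inequality constraints form a contiguous block ending at $|\mathcal{V}^R|$ (Lemma~\ref{block_lemma}), classifies optimal extreme points into Type~I and Type~II (Lemma~\ref{forms_lemma}), and then exhaustively enumerates seven combinations of optimal basic feasible solutions, reading off $\vect\alpha$ from complementary slackness and the primal/dual degeneracy--multiplicity correspondence. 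Your route is more elementary in that it avoids the polyhedral machinery and obtains the contiguous support and the interior formulas for $\beta_i$ and $\alpha_i$ directly from indifference; the paper's route buys a cleaner bookkeeping of the boundary cases and, crucially, is what Algorithm~\ref{algo.main} is written against.

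Two concrete issues remain in your proposal. First, your sentence ``the four cases are precisely the mutually exclusive, exhaustive combinations of these two binary boundary alternatives'' is not the right decomposition. Case~($iv$) is \emph{not} the combination $\beta_k>0,\ \beta_{|\mathcal{V}^R|+1}>0$; it is the situation in which $\alpha_k$ and $\alpha_{|\mathcal{V}^R|}$ are \emph{both} pinned to the interior value \eqref{eq:att_weight}, which (by primal/dual degeneracy--multiplicity) is exactly when the defender's LP has multiple optima, and it allows $\beta_k=0$ or $\beta_{|\mathcal{V}^R|+1}=0$. The one-way implication you state (e.g., $\beta_k>0\Rightarrow\alpha_k$ pinned via Lemma~\ref{scaled_spam}) is correct, but the converse fails, so the $2\times 2$ grid on $(\beta_k,\beta_{|\mathcal{V}^R|+1})$ does not reproduce ($i$)--($iv$). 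Second, the first assertion of the theorem is that Algorithm~\ref{algo.main} finds all NE, and that algorithm is phrased in terms of Type~I versus Type~II solutions and the comparison $U^D_1$ versus $U^D_2$; your closing paragraph does not match what the algorithm actually does. Without establishing Lemma~\ref{forms_lemma} (the Type~I/II classification and the ``at most one Type~I, at most two adjacent Type~II'' bound), you have not verified the algorithm as written. Your indifference argument can be completed, but you will need either to rederive the extreme-point structure or to rewrite the boundary analysis around the $\vect\alpha$-side (pinned vs.\ free at $k$ and at $|\mathcal{V}^R|$) and then check separately that the algorithm's branches coincide with those cases.
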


The rest of this section is dedicated to proving Theorem~\ref{aggregate_theorem}.
As a direct consequence of best-response equivalence of our game to a zero-sum game, we have the following result.
\begin{lemma}
In NE, the defender's strategy $\vect \beta$ solves the following linear program (LP):
\begin{equation}
  \begin{aligned}
    & \underset{\vect \beta, z}{\text{maximize}} & & - \vect \mu^{\prime}
    \vect \beta + z\\ & \text{subject to} & & z \mathbf{1}_{|\mathcal{V}^R|} \leq
    \Lambda \vect \beta\\ & & & \mathbf{1}^{\prime}_{|\mathcal{V}^R|+1} \cdot \vect{\beta} =
    1, ~\vect \beta \geq \vect 0.\\
  \end{aligned}
  \label{def_LP}
\end{equation}
\label{lemma_defLP}
\end{lemma}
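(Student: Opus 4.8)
The plan is to read the defender's equilibrium strategy directly off the zero-sum structure that was just established. Since $\mathcal{G}^{R,T}$ has been shown to be best-response equivalent to the zero-sum game with payoff matrix $\Lambda^\text{eq}$, the two games share the same best-response correspondences and hence the same Nash equilibria; in particular the defender's NE strategies of $\mathcal{G}^{R,T}$ are exactly the defender's NE strategies in the zero-sum game. In that zero-sum game the defender (column player, choosing $\vect\beta$) maximizes $\vect\alpha^\prime \Lambda^\text{eq} \vect\beta$ while the attacker (row player, choosing $\vect\alpha$) minimizes it, so by the minimax theorem every equilibrium strategy $\vect\beta$ of the defender is a \emph{maximin} strategy, i.e.\ a maximizer of $\min_{\vect\alpha}\vect\alpha^\prime\Lambda^\text{eq}\vect\beta$ over the simplex. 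I would state this equivalence between NE and security strategies in finite zero-sum games as the first step.

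Next I would turn the maximin problem into a linear program. Because $\vect\alpha$ ranges over the probability simplex, the inner minimization of the linear form $\vect\alpha^\prime(\Lambda^\text{eq}\vect\beta)$ is attained at a vertex, so $\min_{\vect\alpha}\vect\alpha^\prime\Lambda^\text{eq}\vect\beta=\min_i(\Lambda^\text{eq}\vect\beta)_i$. Hence the defender solves $\max_{\vect\beta}\min_i(\Lambda^\text{eq}\vect\beta)_i$ subject to $\mathbf{1}^\prime\vect\beta=1$ and $\vect\beta\ge\vect 0$. Introducing an auxiliary scalar $z$ to capture the minimum over rows gives the standard epigraph reformulation: maximize $z$ subject to $z\mathbf{1}_{|\mathcal{V}^R|}\le\Lambda^\text{eq}\vect\beta$ together with the simplex constraints.

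Finally I would substitute $\Lambda^\text{eq}=\Lambda-\mathbf{1}_{|\mathcal{V}^R|}\vect\mu^\prime$ from \eqref{eq:def_payoff}. Since $\vect\mu^\prime\vect\beta$ is a scalar, $\Lambda^\text{eq}\vect\beta=\Lambda\vect\beta-(\vect\mu^\prime\vect\beta)\mathbf{1}_{|\mathcal{V}^R|}$, so the row constraint becomes $(z+\vect\mu^\prime\vect\beta)\mathbf{1}_{|\mathcal{V}^R|}\le\Lambda\vect\beta$. The affine change of variable $z\mapsto z+\vect\mu^\prime\vect\beta$ (absorbing the constant column shift into $z$) turns the objective $z$ into $-\vect\mu^\prime\vect\beta+z$ and the constraint into $z\mathbf{1}_{|\mathcal{V}^R|}\le\Lambda\vect\beta$, which is exactly \eqref{def_LP}; one checks along the way that the optimal value equals the value of the zero-sum game, i.e.\ the defender's equilibrium payoff.

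The only genuinely delicate point is the first step: arguing that an NE strategy of the nonzero-sum game $\mathcal{G}^{R,T}$ is a maximin strategy of the associated zero-sum game. This rests on two facts already in hand, namely that best-response equivalence preserves the defender's equilibrium strategy set, and that in a finite zero-sum game the set of a player's equilibrium strategies coincides with the set of its maximin (security) strategies. Everything after that is routine linear algebra and the standard reformulation of a max-min of linear forms, so I do not anticipate any obstruction there.
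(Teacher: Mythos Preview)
Your proposal is correct and follows essentially the same approach as the paper: pass to the best-response equivalent zero-sum game with matrix $\Lambda^{\text{eq}}$, invoke the equivalence of NE and maximin strategies there, observe that $\min_{\vect\alpha}\vect\alpha'\Lambda^{\text{eq}}\vect\beta=\min[\Lambda\vect\beta]-\vect\mu'\vect\beta$, and recast as the LP~\eqref{def_LP}. You simply spell out the epigraph reformulation and the change of variable more explicitly than the paper does.
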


\begin{proof}
If $(\vect \alpha, \vect \beta)$ is a NE of  $\mathcal{G}^{R,T}$ with attacker's cost matrix $\Lambda$, then $(\vect \alpha, \vect \beta)$ is a NE of the zero-sum, best-response equivalent game with cost matrix $\Lambda^{\text{eq}}$.
Therefore, $\vect \beta$ maximizes $\min_{\vect \alpha} \vect \alpha^{\prime} \Lambda^{\text{eq}} \vect \beta = \min [\Lambda_\text{} \vect \beta] - \vect \mu^{\prime}\vect \beta$. Transforming this optimization problem to an LP we get the program~\eqref{def_LP}.
\qed
\end{proof}

An important consequence of Lemma~\ref{lemma_defLP}, is that the defender's strategy $\vect \beta$ in Nash equilibrium maximizes her minimum payoff. Thus playing $\beta$ gives the defender the robustness property that her expected payoff will not be any worse than her expected Nash equilibrium payoff regardless of what the attacker chooses to play.

We now define the main entities used throughout the section.
\begin{definition}
A {\bf polyhedron} is the solution set of a finite number of linear inequality constraints.
An inequality constraint is {\bf tight} if it holds as an equality; otherwise, it is {\bf loose}. A point $\vect x = (x_1, \ldots, x_{|\mathcal{V}^R|+1})$ of a polyhedron is said to be {\bf extreme} if there is no $\vect {x}'$ whose set of tight constraints is a strict superset of the set of tight constraints of $\vect {x}$. 
For an $n$-dimensional linear program, a point is called a {\bf basic solution}, if $n$ linearly independent constraints are tight for that point.  A {\bf feasible solution} to a linear program is a solution that satisfies all constraints. A point is a {\bf basic feasible solution}, iff it is a basic solution that is also feasible. 
Two distinct basic feasible solutions to an $n$-dimensional linear program are {\bf adjacent} if we can find $n-1$ linear independent constraints that are tight at both of them.
 We say that a point $\vect x$ of a polyhedron {\bf corresponds} to strategy $\vect \beta$ (or strategy $\vect \beta$ corresponds to $\vect x$), if $\vect \beta = \vect x/ \| \vect x\|$.
\label{def_mutliple}
\end{definition}
Extreme point and basic feasible solution are equivalent terms~\cite[Chapter 2.5]{luenberger2003linear} and we will use them interchangeably.

\subsection{Form of optimal extreme points}

In this section, we gain intuition on the structure of the defender's NE strategy $\vect \beta$.  
We first show the following lemma.
\begin{lemma}
Any NE strategy $\vect{\beta}$ of the defender corresponds to an extreme point or a convex combination of extreme points of the polyhedron defined by 
\begin{equation}
P: ~\Lambda_\text{} \vect x \geq \mathbf{1}_{|\mathcal{V}^R|}, \vect x_{|\mathcal{V}^R|+1} \geq \vect 0.\label{eq:lp}
\end{equation}

\label{def_lemma}
\end{lemma}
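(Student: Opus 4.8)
The plan is to read the statement off the LP characterization of Lemma~\ref{lemma_defLP} and then transport the extreme-point structure of that LP to the polyhedron $P$ through the scaling $\vect x = \vect\beta / z$. First I would record two facts. By Lemma~\ref{lemma_defLP}, any NE strategy $\vect\beta$ is, together with $z = \min_i [\Lambda\vect\beta]_i$, an optimal solution of \eqref{def_LP}. Since $\Lambda$ has strictly positive entries and $\vect\beta\ge\vect0$ with $\mathbf 1'\vect\beta = 1$, every component of $\Lambda\vect\beta$ is at least $\min_{i,j}\Lambda(i,j)>0$, so $z>0$ at any optimum. Setting $\vect x = \vect\beta/z$ then gives $\Lambda\vect x = \Lambda\vect\beta/z \ge \mathbf 1_{|\mathcal V^R|}$ and $\vect x\ge\vect0$, i.e.\ $\vect x\in P$; moreover $\|\vect x\| = \mathbf 1'\vect x = (\mathbf 1'\vect\beta)/z = 1/z$ because $\vect x\ge\vect0$, so $\vect\beta = z\vect x = \vect x/\|\vect x\|$ and $\vect\beta$ \emph{corresponds} to $\vect x$ in the sense of Definition~\ref{def_mutliple}.

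Next I would use that the set of optimizers of \eqref{def_LP} is a face $F$ of its feasible region, and that $F$ is bounded: on $F$ the vector $\vect\beta$ lies in the simplex $\{\vect\beta\ge\vect0,\ \mathbf 1'\vect\beta=1\}$ and $z = V^\ast + \vect\mu'\vect\beta$ is therefore bounded as well, where $V^\ast$ is the optimal value. Hence $F$ is a polytope, equal to the convex hull of finitely many extreme points $(\vect\beta^{(j)}, z^{(j)})$, each of which is an extreme point of the feasible region of \eqref{def_LP} and has $z^{(j)}>0$.

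The key step is to show that the scaling map sends these extreme points to extreme points of $P$ and respects convex combinations. For the former I would use that the inverse map $\vect x\mapsto(\vect x/\|\vect x\|,\ 1/\|\vect x\|)$ sends $P$ back into the feasible region of \eqref{def_LP}: if some $\vect x^{(j)} := \vect\beta^{(j)}/z^{(j)}$ decomposed as a proper convex combination of two distinct points of $P$, pulling those points back and reweighting by their $L_1$-norms would exhibit $(\vect\beta^{(j)}, z^{(j)})$ as a proper convex combination of two distinct feasible points, contradicting its extremeness. For a general NE, writing $(\vect\beta, z) = \sum_j \lambda_j(\vect\beta^{(j)}, z^{(j)})$ and using the linearity of the $z$-coordinate, $z = \sum_j\lambda_j z^{(j)}$, yields $\vect x = \vect\beta/z = \sum_j (\lambda_j z^{(j)}/z)\,\vect x^{(j)}$; the coefficients $\lambda_j z^{(j)}/z$ are nonnegative and sum to $1$, so $\vect x$ is a convex combination of extreme points of $P$, reducing to a single extreme point when the decomposition is trivial.

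The main obstacle is precisely the nonlinearity of $\vect x = \vect\beta/z$: convex combinations in the $(\vect\beta, z)$-coordinates are not preserved verbatim under this projective (Charnes--Cooper-type) transformation. The technical heart of the argument is therefore the reweighting by $L_1$-norms that shows both that extremeness transfers between the two polyhedra and that a convex decomposition in $(\vect\beta, z)$ maps to a convex decomposition in $\vect x$; the positivity $z>0$ guaranteed by $\Lambda$ being entrywise positive is what makes this transformation well defined on all the relevant points.
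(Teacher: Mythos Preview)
Your proposal is correct and follows essentially the same route as the paper: both arguments pass from the LP \eqref{def_LP} to the polyhedron $P$ via the scaling $\vect x = \vect\beta/z$, using positivity of $\Lambda$ to guarantee $z>0$. Your treatment is in fact more complete than the paper's, which only argues that an extreme point of the feasible region of \eqref{def_LP} maps to an extreme point of $P$ (citing a general scaling fact) and leaves the convex-combination case implicit; your explicit Charnes--Cooper-style reweighting by $L_1$-norms fills in precisely the step the paper elides.
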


The proof of Lemma~\ref{def_lemma} can be found \techrep{in the Appendix}{in our technical report \cite{techreport}}. We call the first type of constraints  ``inequality constraints'' and the second type ``positivity constraints.'' There are $|\mathcal{V}^R|$ inequality constraints and $|\mathcal{V}^R|+1$ positivity constraints.
Writing down the inequality constraints, we get
\begin{align*}
  c_\text{d} \cdot  x_1 &+ \left( r_{|\mathcal{V}^R|} - r_1+ \epsilon\right) \cdot \|\vect x\|  &\geq  1 &\\ 
  c_\text{d} \cdot (x_1+x_2) &+\left(r_{|\mathcal{V}^R|}-r_2 + \epsilon\right) \cdot \|\vect x\| & \geq 1&\\ 
& ~\vdots \\
  c_\text{d} \cdot (x_1+x_2+\ldots+x_{|\mathcal{V}^R|}) & +~~~~~~~~~~ \epsilon~~~~~~~~ \cdot \|\vect x\| & \geq 1&.
\end{align*}

Searching for an extreme point of the polyhedron $P$ defined in~\eqref{eq:lp} is computationally straightforward and there are known algorithms that provide polynomial complexity. Our goal is to provide an algorithm that can run faster (though still polynomially) and in parallel to provide intuition on the structure of the extreme points. The main method used is to reduce the search space by eliminating suboptimal non-extreme points.

Combining properties of basic feasible solutions of an LP and of the structure of the defender's LP, we show the following lemma describing the set of tight inequality constraints. 
\begin{lemma}
  At an extreme point $\vect x$ that corresponds to a defender's NE strategy $\vect \beta$, there exists exactly one contiguous block (of indices) of inequality constraints that are tight and the last inequality constraint is tight.
  \label{block_lemma}
\end{lemma}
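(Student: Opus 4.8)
The plan is to reinterpret tightness of an inequality constraint as the attacker's best‑response condition and then play off the two players' mirror‑image indifference conditions against each other. Write $C_i := [\Lambda\vect\beta]_i$ for the (scaled) expected cost the attacker incurs by playing reward $r_i$ against $\vect\beta$. Using the form of $\Lambda$, the fact that under threshold classifiers $\pi_\text{d}(r_i)=\sum_{j\le i}\beta_j$, and $\sum_j\beta_j=1$, a short computation gives $C_i = c_\text{d}\,\pi_\text{d}(r_i) - r_i + \text{const}$. Since the NE strategy $\vect\beta$ solves the LP of Lemma~\ref{lemma_defLP} and (via Lemma~\ref{def_lemma}) corresponds to the relevant extreme point of $P$, the game value equals $\min_i C_i$ and the $i$‑th inequality constraint of $P$ is tight precisely when $r_i$ is an attacker best response, i.e.\ when $C_i=\min_i C_i$. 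Thus it suffices to show that the set $A:=\{i: C_i=\min_i C_i\}$ is a single interval of indices containing $|\mathcal{V}^R|$.

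On the defender's side I record the companion identity: up to an additive constant, the defender's payoff from the pure threshold classifier $c_j$ against $\vect\alpha$ is $V_j = c_\text{d}\sum_{i\ge j}\alpha_i-\mu_j$, so that
\begin{equation*}
V_{j+1}-V_j = -c_\text{d}\,\alpha_j + \tfrac{1-p}{p}c_\text{fa}\,P^R_\text{N}(r_j).
\end{equation*}
Because $\vect\beta$ puts weight only on defender best responses (maximizers of $V_j$) while $\vect\alpha$ puts weight only on attacker best responses, and because $P^R_\text{N}(r_j)>0$ for every $j$ under the standing assumption that $\vect\mu$ is strictly decreasing, this yields the key implication: if $\alpha_j=0$ then $V_{j+1}>V_j$, so $c_j$ is strictly suboptimal for the defender and hence $\beta_j=0$.

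To show the last constraint is tight, let $m$ be the largest index in the support of $\vect\alpha$. If $m<|\mathcal{V}^R|$ then $\alpha_j=0$ for all $j\in\{m+1,\dots,|\mathcal{V}^R|\}$, so the implication above forces $\beta_{m+1}=\cdots=\beta_{|\mathcal{V}^R|}=0$; hence $\pi_\text{d}(r_{|\mathcal{V}^R|})=\pi_\text{d}(r_m)$ and therefore $C_{|\mathcal{V}^R|}-C_m=-(r_{|\mathcal{V}^R|}-r_m)<0$, contradicting that $r_m$, being in the support, is a best response. Thus $m=|\mathcal{V}^R|$ and $|\mathcal{V}^R|\in A$. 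For contiguity I argue by contradiction: if $A$ is not an interval, pick consecutive elements $a<b$ of $A$ with $(a,b)\cap A=\emptyset$ and $b\ge a+2$. Then $C_j>\min_i C_i$, hence $\alpha_j=0$, for every $j\in\{a+1,\dots,b-1\}$, so the same implication forces $\beta_j=0$ across this range; consequently $\pi_\text{d}(r_{b-1})=\pi_\text{d}(r_a)$ and $C_{b-1}-C_a=-(r_{b-1}-r_a)<0$, again contradicting minimality of $C_a$. Combining the two facts gives $A=\{k,\dots,|\mathcal{V}^R|\}$: a single contiguous block whose last index is tight.

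The main obstacle is the contiguity claim, and the subtlety is that the tight‑constraint set is the attacker's \emph{best‑response} set, which a priori can be strictly larger than the attacker's support; monotonicity of $\pi_\text{d}$ (Lemma~\ref{monotone_pd}) alone does not rule out "valleys" in $C_i$ that would create gaps. The argument circumvents this by coupling the two indifference conditions: a gap in the attacker's best‑response set forces, through the payoff differences $V_{j+1}-V_j$, that the defender place no weight across the gap, which collapses the cumulative detection probability and makes an interior reward strictly cheaper for the attacker — the desired contradiction. A purely polyhedral route is also available, using that an extreme point of $P$ in $\mathbb{R}^{|\mathcal{V}^R|+1}$ must carry $|\mathcal{V}^R|+1$ linearly independent tight constraints together with the banded structure of $\Lambda$, but the best‑response formulation keeps the bookkeeping lightest.
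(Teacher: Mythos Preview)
Your argument is correct and takes a genuinely different route from the paper's. The paper stays entirely on the primal side: assuming inequality constraint $l$ is tight but $l+1$ is loose, it subtracts the two rows to show $x_{l+1}>(r_{l+1}-r_l)\|\vect x\|/c_\text{d}$, then explicitly constructs a perturbed point $\hat{\vect x}$ that shifts mass from coordinate $l+1$ to $l+2$ while keeping $\|\hat{\vect x}\|=\|\vect x\|$ and $\min[\Lambda\hat{\vect x}]=\min[\Lambda\vect x]$; strict monotonicity of $\vect\mu$ then gives $\vect\mu'\hat{\vect\beta}<\vect\mu'\vect\beta$, contradicting optimality of $\vect\beta$. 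By recursion this forces the block to be contiguous and to end at index $|\mathcal{V}^R|$. Your proof instead invokes the dual player: you identify tight inequalities with attacker best responses, and use the companion identity $V_{j+1}-V_j=-c_\text{d}\alpha_j+(\mu_j-\mu_{j+1})$ to deduce that any gap in the attacker's best-response set wipes out the defender's weight across it, which in turn makes a strictly cheaper attack available inside the gap. The paper's route is more self-contained (no need to introduce $\vect\alpha$ or argue about existence of a paired NE strategy) and feeds directly into the subsequent purely polyhedral Lemma~\ref{forms_lemma}; your route is more conceptual, exposes the game-theoretic mechanism behind the result, and effectively previews the complementary-slackness coupling that the paper only unpacks later in the proof of Theorem~\ref{aggregate_theorem}. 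One small terminological slip: what you call the ``game value'' is really $z^*=\min_iC_i$ in the auxiliary LP; the value of the equivalent zero-sum game is $z^*-\vect\mu'\vect\beta$. This does not affect the logic, since all you actually use is that tightness of constraint $i$ is equivalent to $C_i=\min_jC_j$.
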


The proof of Lemma~\ref{block_lemma} can be found \techrep{in the Appendix}{in our technical report \cite{techreport}}. We define $s$ as the index of the first (starting) tight inequality constraints. The lemma states that all inequality constraints from $s$ to $|\mathcal{V}^R|$ are tight.
We can now state the result that describes the form of optimal extreme points of the defender's LP. Its proof is \techrep{in the Appendix}{in our technical report \cite{techreport}}.
\begin{lemma}
Any extreme point $\vect x$ of polyhedron $P$ that corresponds to a defender's NE strategies is of one of the following types: 
\begin{align*}
\text{Type I:}~\vect x &\!=\!\left(0, \ldots, 0, x_{s_1} \! \ge \! 0, x_{s_1+1}\! >\! 0, \ldots, x_{|\mathcal{V}^R|}\! >\! 0,0 \right)^{\prime}\! , \\
\text{Type II:}~\vect x& \!=\!\left(0, \ldots,0, x_{s_2+1}\! >\! 0,\ldots, x_{|\mathcal{V}^R|}\! >\! 0 , x_{|\mathcal{V}^R| + 1} \! \ge \! 0\right)^{\prime}\! , 
    \end{align*}
    for some $s_1, s_2 \in \{1, \ldots, |\mathcal{V}^R| \}$, with $x_i = \displaystyle \frac{r_i-r_{i-1}}{c_\text{d}} \| \vect x\|$ for all $i \in \{ s_j+1, \ldots, |\mathcal{V}^R|\}$ ($j\in\{1, 2\}$).
Moreover, there exist at most one extreme point of type I and two adjacent extreme points of type II that correspond to a defender's NE strategies.
\label{forms_lemma}
\end{lemma}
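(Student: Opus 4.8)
The plan is to combine the block structure of Lemma~\ref{block_lemma} with a dimension count and a complementary-slackness argument to pin down the \emph{form}, and then to use feasibility together with a short monotonicity computation to establish the \emph{count}. Write $N := |\mathcal{V}^R|$ and let $\vect x \neq \vect 0$ be an extreme point of $P$ corresponding to a defender's NE strategy $\vect \beta = \vect x/\|\vect x\|$. By Lemma~\ref{block_lemma} the tight inequality constraints form a single contiguous block $\{s,s+1,\ldots,N\}$ ending at the last index, for some start $s$. Subtracting the $(i-1)$-th tight constraint from the $i$-th (both tight for $i\in\{s+1,\ldots,N\}$) telescopes to $x_i = \frac{r_i-r_{i-1}}{c_\text{d}}\|\vect x\|$, and since the rewards are strictly increasing these components are strictly positive, so their positivity constraints are loose; this already yields the displayed formula for $x_i$.

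Since $\vect x$ is extreme for an $(N+1)$-dimensional program, exactly $N+1$ linearly independent tight constraints are needed. The block supplies $N-s+1$, so exactly $s$ tight positivity constraints are required; as $x_{s+1},\ldots,x_N>0$, these must come from the $s+1$ candidate indices $\{1,\ldots,s\}\cup\{N+1\}$, leaving exactly one candidate free to be positive. I would then rule out that the free index is some $j<s$ via mutual best response. The defender's per-column payoff equals a constant plus $g(l)=c_\text{d}\sum_{i\ge l}\alpha_i-\mu_l$, and at a NE every column in the defender's support maximizes $g$. If $\beta_j>0$ with $j<s$, row $j$ is loose, so the attacker's best response forces $\alpha_j=0$ (cf.\ Lemma~\ref{scaled_spam}); hence $g(j+1)-g(j)=\mu_j-\mu_{j+1}=\frac{1-p}{p}c_\text{fa}P_N^R(r_j)>0$, using that $\vect\mu$ is strictly decreasing, contradicting maximality at $j$. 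The free index is therefore $s$ (Type~I, $x_{N+1}=0$) or $N+1$ (Type~II, $x_s=0$), which is the claimed dichotomy.

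For the count I would compute each point explicitly from the tight constraint at index $s$ and the telescoping formula, writing $\gamma_l := r_N-r_l+\epsilon$ for the coefficient of $\|\vect x\|$ in the $l$-th constraint. For Type~I one gets $\|\vect x\|=1/(c_\text{d}+\epsilon)$ and $x_s=\|\vect x\|\,(1-(r_N-r_s)/c_\text{d})$; the loose constraints $l<s$ read $\gamma_l\|\vect x\|\ge1$, i.e.\ $r_N-r_l\ge c_\text{d}$, and together with $x_s\ge0$ this forces $r_{s-1}\le r_N-c_\text{d}\le r_s$, which determines $s$ uniquely and shows that the two candidate block starts collapse to one point at the boundary $r_N-c_\text{d}=r_s$. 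Hence there is at most one Type~I extreme point. For Type~II the loose constraints are automatically satisfied, so many block starts are feasible, and I would instead invoke optimality: since all NE yield the defender the same (maximal) payoff, the Type~II NE points maximize the LP value $W_{\mathrm{II}}(s)=(r_N-r_s+\epsilon)-\frac{1}{c_\text{d}}\sum_{i>s}\mu_i(r_i-r_{i-1})$ over $s$. The one-line difference $W_{\mathrm{II}}(s)-W_{\mathrm{II}}(s+1)=(r_{s+1}-r_s)(c_\text{d}-\mu_{s+1})/c_\text{d}$ has the sign of $c_\text{d}-\mu_{s+1}$, which (as $\vect\mu$ strictly decreases) is negative then positive, so $W_{\mathrm{II}}$ is strictly unimodal and maximized at a single $s$, or at two consecutive values exactly when $c_\text{d}=\mu_{s+1}$. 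Two Type~II points with consecutive block starts share the $N$ tight constraints given by positivity $\{1,\ldots,s\}$ and inequalities $\{s+1,\ldots,N\}$, hence are adjacent in the sense of Definition~\ref{def_mutliple}, giving ``at most two adjacent Type~II.''

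The step I expect to be the main obstacle is the bookkeeping in the last paragraph. One must track feasibility ranges carefully (noting that the loose inequalities bind for Type~I but not Type~II), confirm the explicit solutions are genuinely extreme by checking linear independence of the chosen tight constraints, and dispose of the degenerate boundary cases — in particular verifying that the unimodality tie and the boundary $r_N-c_\text{d}=r_s$ are the only sources of multiplicity, and that Type~I and Type~II points coincide precisely where $x_s=x_{N+1}=0$. The conceptual core (block, telescoping, best-response exclusion) is clean; it is the simultaneous tracking of feasibility and optimality as $s$ varies that requires the most care.
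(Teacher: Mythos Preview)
Your proof is correct and takes a genuinely different route from the paper on both halves of the lemma.

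For the \emph{form}, the paper establishes $x_i=0$ for $i<s$ and that $x_s,x_{|\mathcal V^R|+1}$ cannot both be positive by explicit weight-shifting transformations: it moves mass from $x_i$ to $x_{i+1}$ (or from $x_s$ to $x_{|\mathcal V^R|+1}$ with a rescaling) and shows the new point either has strictly more tight constraints (contradicting extremality) or a strictly larger objective (contradicting optimality). You instead use a dimension count---at most one of the $s{+}1$ candidate positivity constraints can be loose---and then eliminate any free index $j<s$ via complementary slackness ($\alpha_j=0$ because row $j$ is loose) together with the defender's best-response condition on $g(\cdot)$. Both arguments are short; yours imports the primal--dual machinery that the paper only deploys later in the proof of Theorem~\ref{aggregate_theorem}, whereas the paper's transformations remain purely on the primal side.

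For the \emph{count}, the paper argues by direct contradiction: two optimal Type~I points share $\|\vect x\|=1/(c_\text{d}+\epsilon)$ and hence $\min[\Lambda\vect\beta]$, so the one with larger $s$ strictly dominates via $\vect\mu'\vect\beta$; and three adjacent optimal Type~II points force $\mu_{s+1}=\mu_{s+2}$. Your argument is more constructive. For Type~I you observe that \emph{feasibility alone} (the inequalities $l<s$ together with $x_s\ge 0$) pins $s$ to the unique slot with $r_{s-1}\le r_{|\mathcal V^R|}-c_\text{d}\le r_s$---a stronger statement than the paper's, since it does not invoke optimality. For Type~II you compute the defender's value $W_{\mathrm{II}}(s)$ explicitly and show its first difference has the sign of $c_\text{d}-\mu_{s+1}$, so $W_{\mathrm{II}}$ is strictly unimodal with at most one tie at consecutive indices. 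This additionally locates the optimal $s$, which is precisely what Algorithm~\ref{algo.main} exploits, so your approach dovetails nicely with the next section. One small remark worth making explicit in a final write-up: your $W_{\mathrm{II}}$ formula silently uses $\mu_{|\mathcal V^R|+1}=0$, and the Type~II feasibility condition $x_{|\mathcal V^R|+1}\ge 0$ (equivalently $r_{|\mathcal V^R|}-r_s\le c_\text{d}$) restricts the range of $s$ over which the unimodality argument runs; neither affects correctness.
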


\subsection{Form of players' equilibrium strategy}

With Lemma~\ref{forms_lemma} that describes the form of the possible extreme points of polyhedron $P$ defining the constraints of the defender's LP, we can now prove our main result, Theorem~\ref{aggregate_theorem}. The complete proof can be found \techrep{in the Appendix}{in our technical report \cite{techreport}}. It essentially works by enumerating all of the possible combinations of optimal basic feasible solutions allowed by Lemma~\ref{forms_lemma} to get the possible forms of $\vect \beta$ (i.e., all possible solutions of the defender's LP which we consider the primal) and using the complementary slackness condition from Linear Programming to get the possible forms of $\vect \alpha$ (i.e., all possible solutions of the dual LP). 

Theorem~\ref{aggregate_theorem} provides both an algorithm (Algorithm~\ref{algo.main}) that finds all NE and a compact characterization of the restricted number of possible forms that a NE can have. Interestingly, we observe that the defender assigns a weight to a reward $r_i$ that is positive and proportional to the marginal reward increase at that point, on a support that goes until the highest reward $r_{|\mathcal{V}^R|}$. This is somewhat counter-intuitive as it implies that the defender includes at NE with positive weights classifiers that almost never classify as attacker even for a high reward and even if the probability that a non-attacker uses this reward is arbitrarily small. We also recover in Theorem~\ref{aggregate_theorem} the fact that the attackers mimics the non-attacker's distribution (proportionally) on a support that corresponds to the defender's support. 

For readers familiar with Linear Programming, let us finally remark that the result in Theorem 2 (and its proof) is in accordance with the relationship between degeneracy and multiplicity of the primal and the dual optimal solutions~\cite[p. 144]{Sierksma02a} (a degenerate optimal solution is a solution of size $n$ where more than $n$ constraints are tight).
If there exists a unique non-degenerate optimal solution of the primal (defender's LP) then the optimal solution to the dual problem (the attacker's LP) is also unique and non-degenerate (cases ($i$) and ($ii$) of Theorem~\ref{aggregate_theorem}).
If the primal has multiple solutions with at least one non-degenerate then the optimal solution to the dual is unique and degenerate (case $(iv)$ of Theorem~\ref{aggregate_theorem}). 
If the primal optimal solution is unique and degenerate, then the dual has multiple optimal solutions (case $(iv)$ of Theorem~\ref{aggregate_theorem}). 
Another benefit of having the solution of the game corresponding to the solution of an LP is that there are well studied methods for analyzing the sensitivity of an LP to parameters (see for instance~\cite{opac-b1079784}). These methods can be used to study how sensitive the game solution is to parameter perturbations. 
\begin{algorithm}
 \For {$type=1,2$}{
construct $\vect \beta$ for $s \in \{1, \ldots, |\mathcal{V}^R|\}$ using Algorithm 2\\
find $(\vect \beta_{1,2}, s^*_{1,2})$ that maximize defender's payoff $U^D_{1,2}$}

 \If{$U^D_1 >U^D_2$}{
 $\vect \beta \leftarrow \text{compute-}\beta (s^*_1,  1)$; $\vect \alpha \leftarrow \text{compute-}\alpha (s^*_1)$
}
 \If{$U^D_1 < U^D_2$}{
 \If{$s_{2}^*$ is unique}{
 $\vect \beta \leftarrow \text{compute-}\beta (s^*_2, 2)$; $\vect \alpha \leftarrow \text{compute-}\alpha (s^*_{2})$
}
 \Else{  
 \small \tt // denote $s^*_{2a}$ and $s^*_{2b} \! = \! s^*_{2a} \! +\! 1 $ the 2 solutions\\[0.5mm]
 $\vect \beta_{a} \!\leftarrow\! \text{compute-}\beta (s^*_{2a},2)$; $\vect \beta_{b} \!\leftarrow\! \text{compute-}\beta(s^*_{2b},2)$\\
 $\vect \beta$ $\leftarrow$ convex hull of $\vect \beta_{a}, \vect \beta_{b}$\\
 $\vect \alpha \leftarrow \text{compute-}\alpha (s^*_{2b})$

}
}
\If{$U^D_1 = U^D_2$}{

\If{$s_{2}^*$ is unique}{
 \small \tt // $s^*_2 = s^*_1$\\[0.5mm]
 $\vect \beta_{1} \leftarrow \text{compute-}\beta(s^*_{1},1)$; $\vect \beta_{2} \leftarrow \text{compute-}\beta (s^*_{1},2)$\\
 \If{$\vect \beta_1 \neq \vect \beta_2$}{
$\vect \beta$ $\leftarrow$ convex hull of $\vect \beta_{1}, \vect \beta_{2}$
     }
   \Else { 
   $\vect \beta  \leftarrow \vect \beta_1$
   }
 $\vect \alpha \leftarrow \text{compute-}\alpha (s^*_1)$
}
\Else{
 \small \tt // denote $s^*_{2a}$ and $s^*_{2b} \! = \! s^*_{2a} \! +\! 1 $ the 2 solutions\\[0.5mm]
 \small \tt // the type 1 and type 2a solutions are identical\\[0.5mm]
 $\vect \beta_{a} \! \leftarrow \! \text{compute-}\beta(s^*_{2a},1)$; $\vect \beta_{b} \! \leftarrow \! \text{compute-}\beta (s^*_{2b},2)$\\
 $\vect \beta$ $\leftarrow$ convex hull of $\vect \beta_{a}, \vect \beta_{b}$\\
 $\vect \alpha \leftarrow \text{compute-}\alpha (s^*_{2b})$

 }

}
\caption{How to compute the NE $(\vect \alpha, \vect \beta$)}
\label{algo.main}
\end{algorithm}

\begin{algorithm}
\SetKwData{Left}{left}\SetKwData{This}{this}\SetKwData{Up}{up}
\SetKwFunction{Union}{Union}\SetKwFunction{FindCompress}{FindCompress}

\BlankLine
\For{$i =  1$ \KwTo $s-1$}{
$\beta_i \leftarrow 0$
}
\For{$i =  s+1$ \KwTo $|\mathcal{V}^R|$}{
$\beta_i \leftarrow \displaystyle \frac{r_{i} - r_{i-1}}{c_\text{d}}$
}
$remainder \leftarrow  \mathbf{1}_{type =1}  (1 - \sum_{s+1}^{|\mathcal{V}^R|} \beta_i)$\\
\If{type=1} {$\beta_s \leftarrow remainder$, $\beta_{|\mathcal{V}^R|+1} \leftarrow 0 $}
\If{type=2} {$\beta_s \leftarrow 0$, $\beta_{|\mathcal{V}^R|+1} \leftarrow remainder$ }
// \tt $U^D(\vect \beta)$ is the defender's NE payoff 
$U^D_{type} \leftarrow  \min[\Lambda \vect \beta] - \vect \mu' \vect \beta$
\caption{Compute-$\beta(s,type)$}\label{algo_beta}
\end{algorithm}

\begin{algorithm}
// \tt $\vect \beta$ is the \emph{set} of all convex combinations if multiple\\
\For{$i =  1$ \KwTo $k-1$}{
$\alpha_i \leftarrow 0$
}
\For{$i =  k+1$ \KwTo $|\mathcal{V}^R|-1$}{
$\alpha_i \leftarrow \displaystyle  \frac{1-p}{p} \frac{c_\text{fa}}{c_\text{d}} P^R_N(r_i)$
}

\If{$\beta_k > 0$ for some $\vect \beta$ }{
$\alpha_k \leftarrow \displaystyle  \frac{1-p}{p} \frac{c_\text{fa}}{c_\text{d}} P^R_N(r_k)$ \\
$\alpha_{|\mathcal{V}^R|} \leftarrow  1-\sum_{i = k}^{|\mathcal{V}^R|-1}  \alpha_i$
}

\ElseIf{$\beta_{|\mathcal{V}^R|+1} > 0$ for some $\vect \beta$}{
$\alpha_{|\mathcal{V}^R|} \leftarrow \displaystyle  \frac{1-p}{p} \frac{c_\text{da}}{c_\text{d}} P^R_N(r_{|\mathcal{V}^R|})$\\
$\alpha_k \leftarrow  1 - \sum_{i = k+1}^{|\mathcal{V}^R|} \alpha_i $\;
}
\ElseIf {$\beta_k = \beta_{|\mathcal{V}^R|+1} = 0$ }{
\For{$i =  1$ \KwTo $k-1$ and $i =  k+1$ \KwTo $|\mathcal{V}^R|-1$}{
$\alpha^1_i \leftarrow \alpha_i$, $\alpha^2_i \leftarrow \alpha_i$
}
$\alpha^1_k \leftarrow 0$\\
$ \alpha^2_k \leftarrow \min \left(\frac{1\!-\!p}{p} \frac{c_{\text{fa}}}{c_\text{d}} P_N^R(r_k),  1 -\sum_{i = k+1}^{|\mathcal{V}^R|-1}  \alpha_i \right)$\\
$\alpha^{1, 2}_{|\mathcal{V}^R|} \leftarrow  1-\sum_{i = k}^{|\mathcal{V}^R|-1}  \alpha^{1, 2}_i$\\
$\vect \alpha$ $\leftarrow$ convex hull of $\vect \alpha^1, \vect \alpha^2$
}

\caption{Compute-$\alpha(\vect \beta, k)$}
\label{algo_alpha}
\end{algorithm}

\section{Numerical Results}
\label{sec:sims}
In this section, we numerically study several instances of our model and observe the behavior of the players in NE.  
In particular, we explore the  strategic attacker's exploitation of the knowledge he has of the non-attacker's distribution (noise).

\subsection{Single-feature-based classification}
We first consider $\mathcal{G}^{R,T} = (\mathcal{V}^R, \mathcal{C}^T, p, c_\text{d}, c_\text{fa}, P^\text{N})$, in which classification is based on a single feature, i.e.,  there is a single target of interest and the defender observes how often this target is (or attempted to be) compromised. The attacker's strategy space consists of $N+1$ attack rewards $r_0, \ldots, r_N$, with $r_i =  i\cdot c_\text{a}$ representing the attack reward when the target is compromised $i$ times. 
The defender's strategy space $\mathcal{C}^T$ consists of all threshold classifiers on $r_i$. 
The attacker bears a cost of $c_\text{d}$ upon detection. 
The false alarm penalty for the defender
when she mistakenly classifies the non-attacker as an attacker is
expressed by the factor $c_\text{fa}$.

The non-attacker, typically a normal user (or
noise from the point of view of defender looking to do attacker detection), accesses the target $i$ times with binomial distribution: attack reward $r_i$ is the outcome of $i$ successes over $N$ trials with probability of success $\theta_0$. His behavior results in a distribution
\begin{equation} 
P^R_\text{N} = {N\choose k} \theta^k_0 (1-\theta_0)^{N-k} 
\label{eq:noise}.
\end{equation}  

\begin{figure}[h!]
  \centering
  \includegraphics[width=0.5\textwidth]{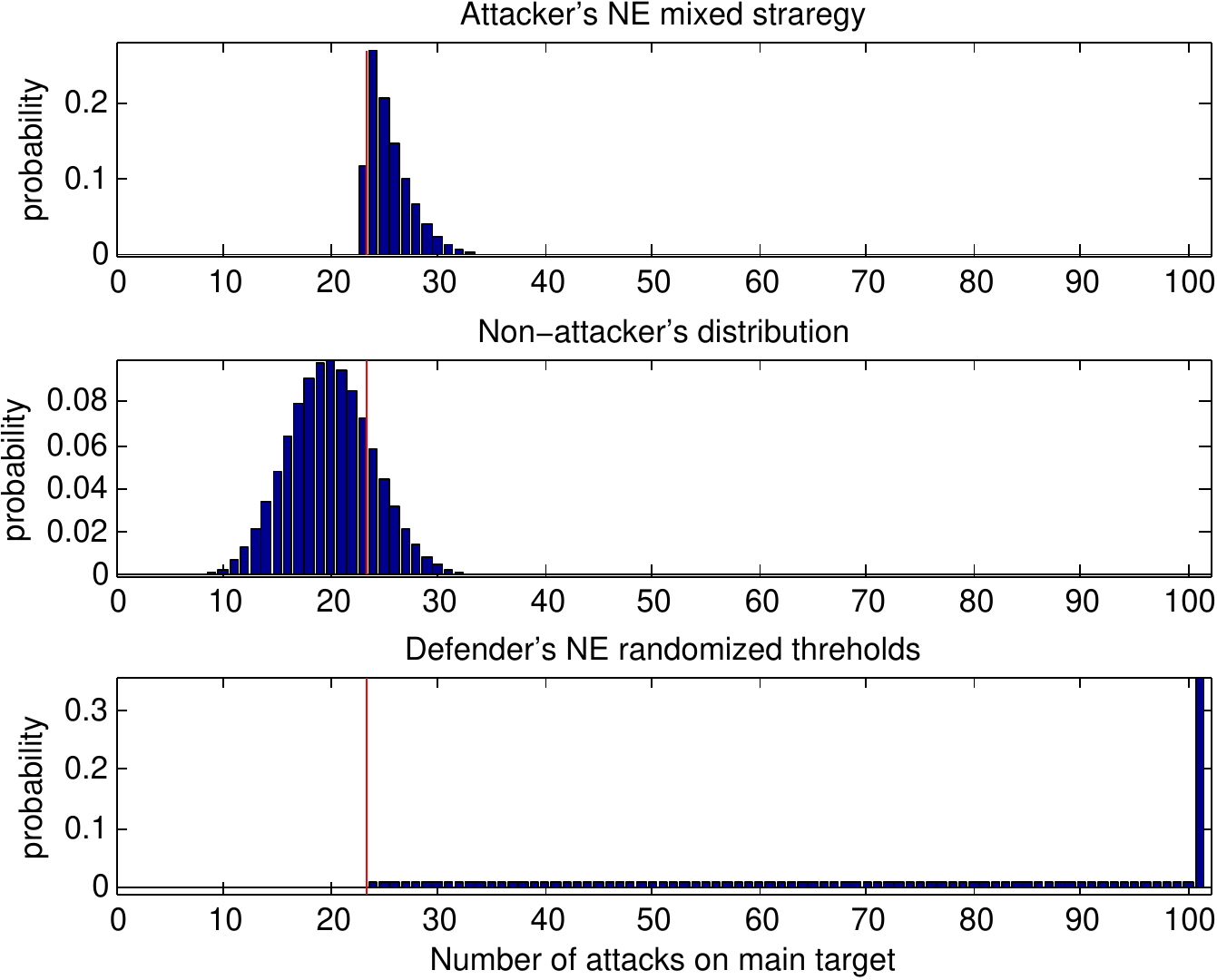}
      \caption{Equilibrium distributions.}
          \label{fig:truncated}
\end{figure}

Figure~\ref{fig:truncated} illustrates the equilibrium mixing distributions of the players, and the probability distribution of non-attackers, for the following particular choice of parameters: $c_\text{a} = 1$, $c_\text{d} = 120$, $p = 0.2$, 
$c_\text{fa} = 140$, $\mathcal{V}^R = \{r_0, \ldots r_i, \ldots,  r_{100}\}$, with $r_i =ic_\text{a}$, and $P^R_\text{N}(r_i)$ given by \eqref{eq:noise} with 
$\theta_0 = 0.2$. 
This example turns out to illustrate all of the major structural findings of the players' equilibrium distributions proved in Theorem~\ref{aggregate_theorem}. In particular: 
\begin{enumerate}
\item  The attacker uses a truncated, scaled version of
the distribution of the non-attacker, but only on
a subset of the support (the interior of the defender's support). Moreover, his strategy space comprises of actions that yield the highest payoffs.
\item The defender uses a set of contiguous strategies (thresholds on the attack reward)  that consist of the most rewarding attack vectors. 
This is in contrast with known algorithms such
as logistic regression which have a predefined shape of the boundary independently of
the attacker's goal.
\item The weight assigned to each threshold is proportional to the marginal reward increase at that point. If the marginal reward increase is constant, the defender randomizes uniformly among strategies in the interior of her support. 
\end{enumerate}
Note that this equilibrium is calculated easily by
using the results from Section~\ref{sec:analysis}, without any need for a complex program. For the defender, the weight given to each strategy
is constant in the interior of the support, equal to $\beta_i = \frac{r_{i}-r_{i-1}}{c_\text{d}} =  \frac{c_\text{a}}{c_\text{d}} =  \frac{ 1}{120}$.

In the following, we see how changes in the parameters affect the players' equilibrium payoffs. 
First we study how the equilibrium payoffs change as $c_\text{a}$, the cost to the defender of a single attack (or conversely reward to the attacker), increases and $c_d$, the cost to attacker of a detection event, is fixed.
For instance, consider the real world problem of online click fraud which is prominent in pay-per-click (PPC) online advertising. Owners of websites that post the ads are paid an amount of money determined by how many visitors to the sites click on the ads. Malicious owners of sites could choose to use low-priced key words to attract less attention from the fraud classifiers of the ad-network, or be more aggressive and pick high-priced keywords. If the expected reward is very high, the attacker might as well attack with full strength (e.g., choose high-priced keywords) to get the revenue generated by the ads, and risk immediate detection.
As we see in Fig.~\ref{fig:ca}, when $c_\text{a}$ rises, the attacker's strategy becomes
more and more concentrated on attacking the target with more frequency, since the expected cost of being detected becomes relatively less important.
In fact, in the extreme case where $c_\text{a}$ is much larger than $c_\text{d}$, the attacker always attacks the maximum number of times even though he is always detected in the process.
Increasing $c_\text{a}$ relative to $c_\text{d}$ gives the attacker more power to earn reward at the defender's cost, without giving the defender any corresponding increase in the ability to ``fight back.'' Thus one should expect that 
as $c_\text{a}$ rises the equilibrium payoff to the attacker should rise and that of the defender should fall.

\begin{figure}[h!]
  \centering \includegraphics[scale = 0.5]{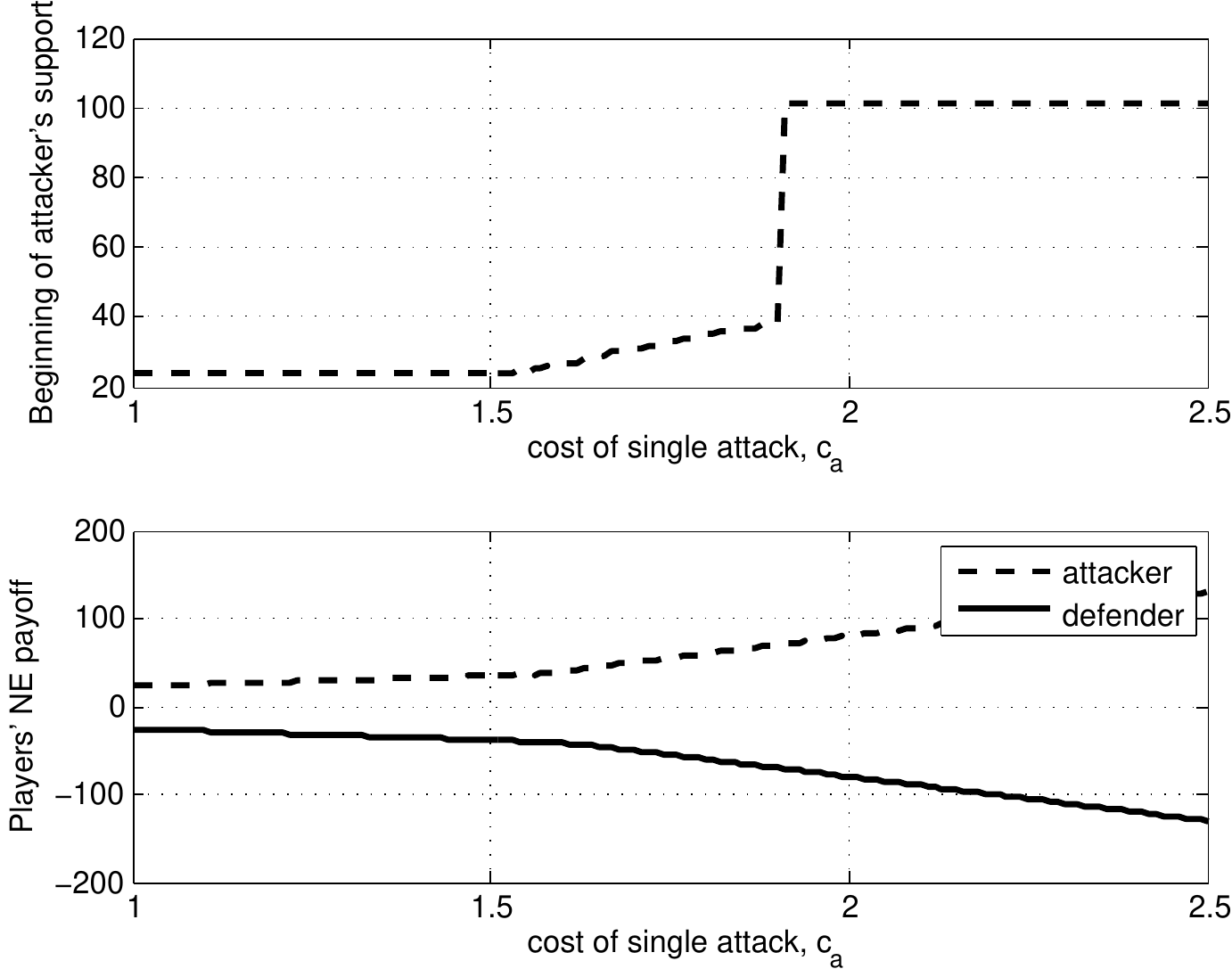}
    \caption{Attacker's support lower bound (top) and equilibrium payoffs of attacker and defender (bottom) as $c_\text{a}$, the cost of a single attack, is varied.}
    \label{fig:ca}
\end{figure}

In Figure~\ref{fig:cf} we observe the impact of the false alarm
penalty parameter. 
In real world scenarios, the false alarm penalty incurred to the defender can differ a lot. For example, credit card companies care more about classifying a real user as fraudster than a mail provider who classifies an email as spam or not spam. As the false alarm penalty increases, the defender reduces false alarms by concentrating her distribution of threshold choice on higher values.
Conversely, the attacker can exploit the higher thresholds by using attack distributions more concentrated on attack strategies that yield higher reward. %
Thus raising $c_\text{fa}$ increases attacker payoff and decreases defender payoff in equilibrium.

\begin{figure}[h!]
  \centering
    \includegraphics[scale=0.5]{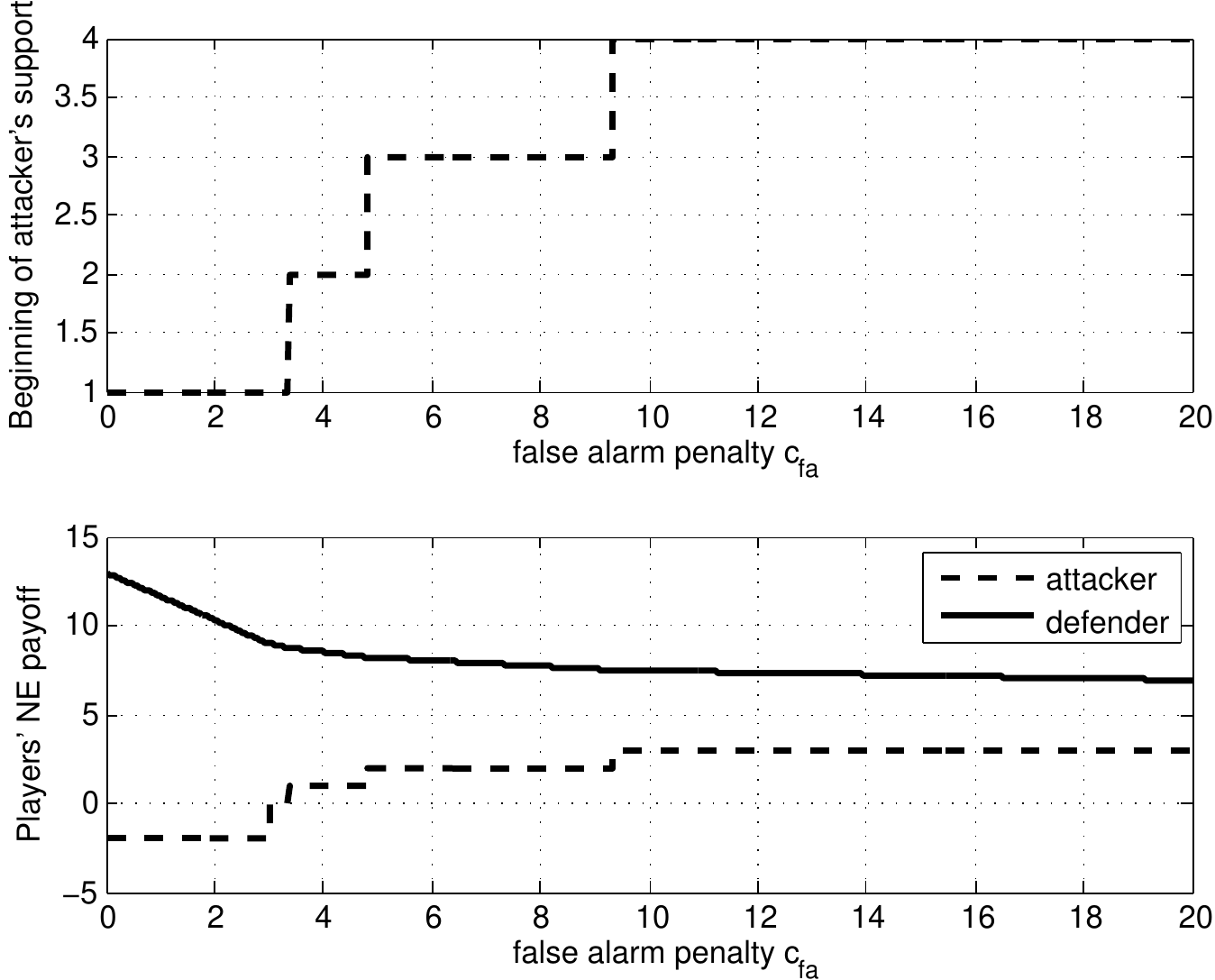}
      \caption{Attacker's support lower bound (top) and equilibrium payoffs of attacker and defender (bottom)  as $c_\text{fa}$, the cost of false alarm, is varied.}
          \label{fig:cf}
\end{figure}

Observe that in Fig.~\ref{fig:cf} the attacker's 
expected payoff has a staircase nature. This is an effect of the discrete optimal values of 
the starting index $s$ of the contiguous block of tight inequality constraints: for some region of parameters, 
the same $s$ is optimal. For the same optimal $s$, the support of the defender's strategy stays the same and the detection benefit, hence the attacker's equilibrium payoff is the same. The defender's staircase nature is slightly affected by the false alarm penalty factor in her payoff function. The above result suggests that in practice, the attacker is not so sensitive to false alarm penalty variations. Thus, even a few different buckets of estimated false alarm penalty costs to the defender would be sufficient for the attacker to compute his optimal strategy. Even if the underlying parameter cost fluctuates by a small amount, the attacker's strategy might remain the same.

\subsection{Multiple Features: optimal investment defender strategies}

In the previous experiments, classification was based on a single feature. In a security environment, one important decision of the defender (or network administrator) is to decide
whether incorporating an additional feature would improve attacker classification, and improve it enough to justify any additional cost in collecting that data.

The feature vector of the defender, upon which the classification is based, may consist of multiple features. For example, the feature vector of Twitter's classification algorithms (which differentiate fraudulent accounts from legitimate ones), might consist of features such as the number of followers, average number of retweets per tweet, number of mentions, country of origin, and others. Before the defender decides to collect more features for her classification purposes, she should be aware that the most successful feature is the one that remains stealthy from the attacker. If she invests a lot into acquiring features that can be easily learned by the attacker, then the window of expected high reward (see Scenario 2 in Fig.~\ref{fig:multiple_features}) might be too small to compensate for the increased expenses of the feature acquisition and maintenance.

In this experiment we suppose that the defender observes which servers in her network a user accesses, and one of these servers is known to all parties to be particularly valuable.  The observation of how many times
this ``valuable server'' is accessed we designate as feature 1. 
A possible second feature the defender can use in classification is inspired by the literature of detecting portscanners~\cite{Jung04a}.
Jung et al. found from empirical data a distinction between benign and malicious portscanners in terms of the proportions of the connections that
were successfully established. In particular they define a  metric, called \textbf{inactive-pct}, as the ratio of the number of hosts to which failed connections are made versus the number of hosts 
to which successful connections are made. 
Jung et al. found that benign users (e.g., web search engines) have a low \textbf{inactive-pct} ($<80\%$), as a larger fraction of connections will be successfully established. 
On the contrary, malicious portscanners often have a high \textbf{inactive-pct} presumably because they initiate a lot of connections to detect vulnerabilities and terminate them immediately.

In our experiment, the attacker decides how many times to access the  ``valuable server'' over a window of $N = 2$ time slots. Moreover, the attacker explores the other ports of the network, in parallel, looking for other
targets of opportunity. He can access these other ports with a low or high \textbf{inactive-pct}. Scanning with a low \textbf{inactive-pct} is less efficient at finding targets, so it is less rewarding for the attacker. In this experiment, the
attacker chooses to attack the ``valuable server'' either 0, 1, or 2 times, and also chooses whether to have a low or high  \textbf{inactive-pct}. Thus there are $3 \times 2 = 6$ attack vectors. The reward for each is set to be $c_\text{a}=1$ times the
number of attacks of the ``valuable server'' plus $c_\text{low}=2$ if \textbf{inactive-pct} is chosen low and $c_\text{high}=4.1$  if \textbf{inactive-pct} is chosen high.
 We also suppose that $p=0.2$ (the prior probability a user is an attacker),  $\theta_0 = 0.3$ (the non-attacker's frequency of access to the ``valuable server''), and $\theta^\text{low} = 0.8$ (the non-attacker's probability to have a low \textbf{inactive-pct}), $c_\text{d} = 1$ (the cost of detection), and $c_\text{fa} = 1$ (the cost of false alarm).
The experiment consists of four scenarios:
\begin{enumerate}[1.]
\item The defender only observes feature 1, the number of times an agent accesses the ``valuable server,'' and the attacker knows only feature 1 is being used in the classifier. Consequently, the attacker only uses the  high  \textbf{inactive-pct} strategy vectors 
since choosing a low  \textbf{inactive-pct} only reduces reward without changing detection probability. 

\item  The attacker continues to play the equilibrium strategy of scenario 1 while the defender now has access to  feature~2,  \textbf{inactive-pct},  and uses this to optimize her classifier. 
The defender assumes, correctly for now, that the attacker keeps the same strategy as in scenario 1. The attacker continues to use the equilibrium strategy of scenario 1 because he does not ``know'' that the defender has access to feature 2.

\item The defender continues to use the equilibrium strategy of scenario 2, but now the attacker knows that the defender changed classifiers to use both features and optimized it against a scenario 1 attacker. 

\item Both features are used by the defender, and it is common knowledge of both players that both features are being used.
\end{enumerate}


As we see in Fig.~\ref{fig:multiple_features}, when the attacker does not know that the defender classifies him based on two features (scenario 2), the defender's payoff increases from scenario 1 in which the defender only classified the agent based on a single feature. When the attacker finds out that he is getting classified on multiple features, then the defender's value decreases from scenario 2, since the attacker is smart enough to realize why he got detected. Comparing scenarios 1 and 4, we see that in this experiment the defender's NE payoff increases when she has access to two features, but the benefit is greatly diminished by the attacker's counter response to the new classifier.

\begin{figure}[t!]
  \centering
  \includegraphics[scale=0.5]{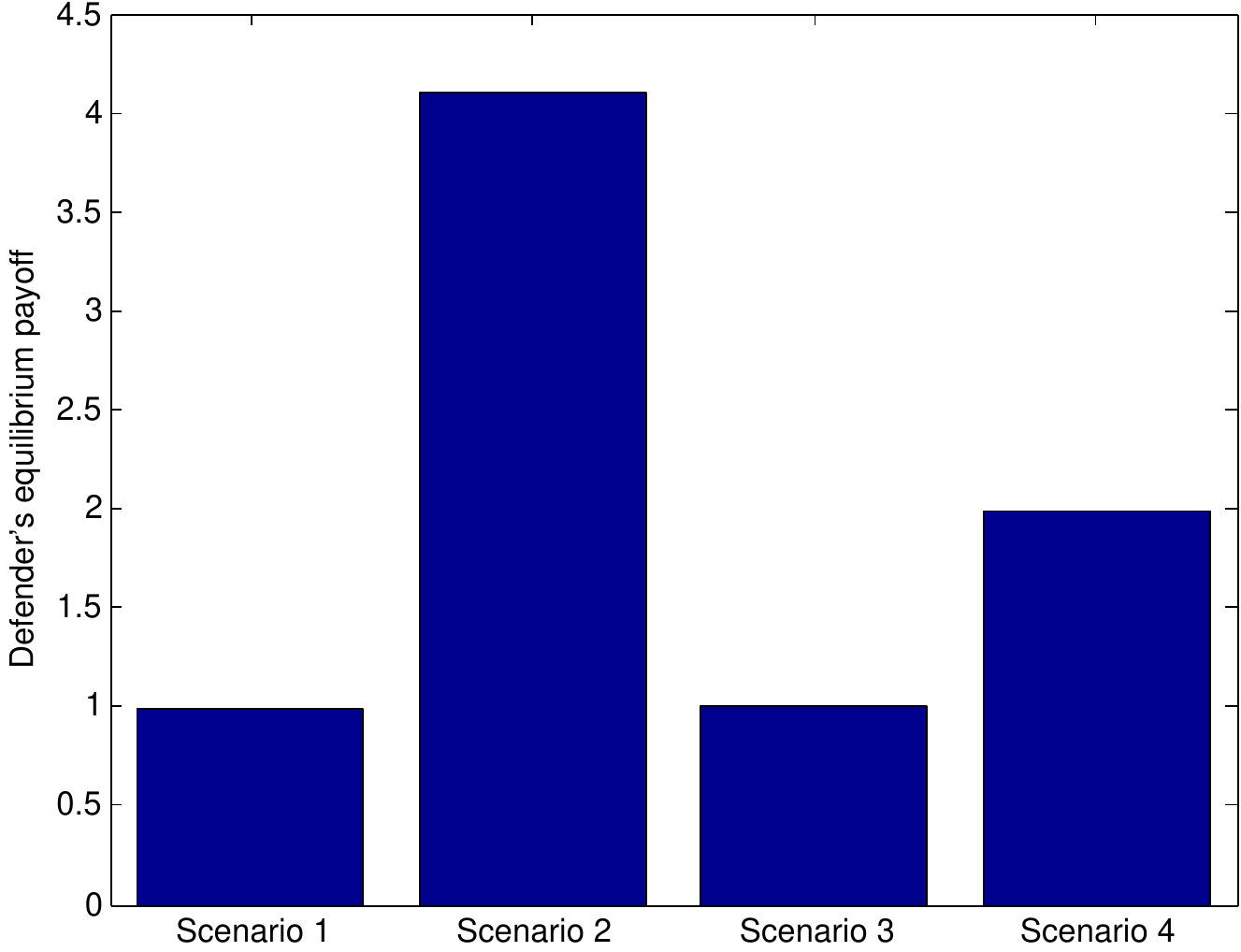}
      \caption{Scenario 1: Defender does not differentiate between low and high ``inactive-pct".
      Scenario 2: Attacker mistakenly believes he is only classified based on a single feature. Scenario 3: Defender mistakenly believes that attacker believes he is classified based on a single feature. Scenario 4: Both players know that two sensors (features) determine classification.} 
       \label{fig:multiple_features}
\end{figure}

\section{Concluding Remarks}
\label{sec:concl}

In this paper, we propose and analyze a new game-theoretic model of adversarial classification. Contrarily to most previous research, our model takes into account the key differences between the objectives of the attacker and defender using a nonzero-sum game, yet it is simple enough to yield analytical results that bring intuitive insights into the structure of the Nash equilibrium and how classification should be performed against this type of adversary. 
In particular we show that, to remain stealthy while maximizing reward, the attacker mixes amongst attack vectors with a distribution proportional to the non-attacker's distribution (i.e., to normal behavior) but on a reduced support of attack vectors with highest rewards. The defender, on the other hand, mixes between classifiers that correspond to applying a threshold on the attacker's reward. This result intuitively shows that, in a strategic setting, the classifier should combine features according to the attacker's reward function (i.e., by looking at the reward a given feature/attack vector gives). Hence, using a standard classifier with a fixed shape of the decision boundary (such as logistic regression with a linear boundary) will necessarily be suboptimal regardless of how the parameters of the model are trained if the reward function does not have the predefined shape. Our results on the defender's equilibrium strategy also show the need for randomization (mixed strategy) and show that the weight assigned to each threshold is mainly proportional to the marginal reward increase at that point.

An important element in the tractability of our model is the special structure of the payoffs that makes our game best-response equivalent to a zero-sum game and hence allows us to use Linear Programming tools to search for all Nash equilibria. Surprisingly, although this equivalence seems straightforward, the literature in security games is largely looking at zero-sum games. We believe that models that better capture realistic scenarios in which the defender and attacker have different tradeoffs in their objective functions but are still computationally equivalent to zero-sum games could be studied using an approach similar to ours. 

A major assumption of most game theoretic models is that players are rational -- simply meaning that each player has preferences over the possible outcomes of the game that can be represented by assigning each outcome a payoff which each player tries to maximize in expectation. In many games such as ours, the solution concept that is most natural is that of a mixed strategy Nash equilibrium. However there are always questions of whether players will actually play these mixed strategies, and indeed this has been an area that the luminaries of game theory have given considerable thought over the years (see Rubinstein \cite{RePEc:ecm:emetrp:v:59:y:1991:i:4:p:909-24} for a review).  As Rubinstein points out, while there are cases in which mixed strategies may be seen as an abstraction of interacting with a population of players, or decision making with private information, there are cases in which players have an incentive to randomize. 
It is the case in our game as both attacker and defender have an incentive to be unpredictable. 
Another domain in which agents are motivated to be unpredictable is sport, where empirical studies have shown players to randomize according to Nash equilibrium distributions~\cite{10.2307/3648639}. The players likely are not calculating the Nash equilibrium analytically but instead are learning about the frequencies in which opponents play certain actions.

Finally, even if one doubts whether the attacker will behave ``rationally,'' the defender has a compelling reason to consider following the mixed strategy prescribed by this work.  The reason is that the defender's equilibrium mixed strategy minimizes the maximum cost the defender can suffer from the attacker, however the attacker chooses to play.  This includes the possibility of an attacker that attacks in some non-strategic way, such as according to an automated script that is not finely tuned to be the optimal attack for a particular defender. On the other hand, a defender might wish to exploit the predictability of some non-strategic attackers by employing a multi-stage approach to first find the attackers who follow a consistent pattern before applying the approach described in this paper. 

Overall, our paper makes a step towards building better attack detection systems using classification techniques that take into account the objective of the attacker to optimize the defense. Our results show that game theory is a valuable tool to tackle adversarial classification problems in settings that are not worst-case as it provides a formal way to justify the need for randomization and to optimize the defense distribution for a given attacker's objective. In future work, we plan to extend our results to more complex adversarial settings. 

\section*{Acknowledgement}
We would like to thank the anonymous reviewers and the associate editor for carefully reading of our manuscript and for providing insightful comments and suggestions.

\bibliographystyle{abbrv}
\bibliography{IEEE_TIFS_2016}
\techrep{
\appendix

\begin{proof}[Lemma~\ref{same_pd}]
For the attacker,  from \eqref{eq:attackerua} we derive
\begin{align*}
U^\text{A}({\vect \alpha, \vect \beta})& = \sum_{v \in \mathcal{V}} \sum_{c \in
  \mathcal{C}} \alpha_v U^\text{A}(v, c) \beta_c \nonumber \\ & =
\sum_{v \in \mathcal{V}} \sum_{c \in \mathcal{C}} \alpha_v \left( R(v)
- c_\text{d} \mathds{1}_ {c(v)=1} \right) \beta_c \nonumber \\ & =
\sum_{v \in \mathcal{V}} \alpha_v \left( R(v) - c_d \sum_{c \in
  \mathcal{C}} \beta_c \mathds{1}_ {c(v)=1} \right) \nonumber \\ & =
\sum_{v \in \mathcal{V}} \Big(\alpha_v R(v) - c_d \alpha_v
\pi^{\vect \beta}_\text{d}(v) \Big),
\end{align*}
where $\pi^{\vect \beta}_\text{d}(v)$ is given by~\eqref{eq:prob_det}.
Similarly, for the defender, \eqref{eq:defenderud} yields

\begin{align*}
U^\text{D}({\vect \alpha, \vect \beta})& = \sum_{v \in \mathcal{V}} \sum_{c \in
  \mathcal{C}} \alpha_v U^\text{D}(v, c) \beta_c \\ 
  & = -\sum_{v \in \mathcal{V}} \sum_{c \in \mathcal{C}} \alpha_v U^\text{A}(v,c)  \beta_c \\
  & ~~~ -   \sum_{c \in \mathcal{C}} \beta_c 
\sum_{v^{\prime} \in \mathcal{V}} \displaystyle \frac{1-p}{p} c_\text{fa} P_\text{N}(v^{\prime})  \mathds{1}_ {c(v^{\prime})=1}\\ 
& = - U^\text{A}({\vect \alpha, \vect \beta}) - \sum_{v^{\prime} \in \mathcal{V}} \displaystyle  \frac{1-p}{p} c_{\text{fa}}
P_\text{N}(v^{\prime}) \pi^{\vect \beta}_\text{d}(v^{\prime}).\qed
\end{align*}
\end{proof}

\begin{proof}[Proposition~\ref{attacker_reduced_prop}]
Let $(\vect \alpha, \vect \beta)$ be a NE of $\mathcal{G}^T$. 

First note that if all attack vectors yield distinct rewards ($|\mathcal{V}| = |\mathcal{V}^R|$), the result holds trivially (and $\vect \alpha^* = \vect \alpha$).

Now assume that there exist $v_1, v_2 \in \mathcal{V} \text{ with } v_1 \neq v_2 \text{ and } R(v_1)= R(v_2)$ and all other vectors yield a distinct reward. Define $r_1 = R(v_1) = R(v_2)$ and $r_i = R(v_i)$ for all $i\in \{3, \cdots, |\mathcal{V}|\}$; and $\vect \alpha^*$ as in Proposition~\ref{attacker_reduced_prop}, that is $\alpha^*_{r_1} = \alpha_{v_1} + \alpha_{v_2}$ and $\alpha^*_{r_i} = \alpha_{v_i}, \forall i\in \{3, \cdots, |\mathcal{V}|\}$. Since $\vect \beta$ is a probability distribution on $\mathcal{C}^T$, the probability of detection is the same for any two attack vectors with the same reward, in particular $\pi_\text{d}(v_1) = \pi_\text{d}(v_2)$. By slight abuse, we denote $\pi_\text{d}(r_1) = \pi_\text{d}(v_1) = \pi_\text{d}(v_2)$ and $\pi_\text{d}(r_i) = \pi_\text{d}(v_i), \forall i\in \{3, \cdots, |\mathcal{V}|\}$. Then, with obvious notation, we have
\begin{align}
\nonumber U^\text{A}(\vect \alpha, \vect \beta) =&   \sum_{i=3}^\mathcal{|V|} (\alpha_{v_i} R(v_i) - c_d \alpha_{v_i} \pi_\text{d}(v_i) ) \\
\nonumber &+ \sum_{i=1}^2 (\alpha_{v_i} R(v_i) - c_d \alpha_{v_i} \pi_\text{d}(v_i)) \\
\nonumber =& \sum_{i=3}^\mathcal{|V|} (\alpha^*_{r_i} r_i - c_d \alpha^*_{r_i} \pi_\text{d}(r_i) ) \\ 
\nonumber &+ \alpha^*_{r_1} r_1 - c_d \alpha^*_{r_1} \pi_\text{d}(r_1) \\
\nonumber =& U^A(\vect \alpha^*, \vect \beta)
\end{align}
and similarly for the defender; and $(\vect \alpha^*, \vect \beta)$ is a NE of $\mathcal{G}^{R, T}$. 

Finally, this reasoning extends straightforwardly to the case for there can be more than two vectors yielding the same reward, hence concluding the proof.
\qed
\end{proof}

\begin{proof}[Lemma \ref{def_lemma}]
By the fundamental theorem of linear programming, if there exists an optimal solution to an LP, it will occur at an extreme point
or a convex combination of extreme points of the polyhedron defined by the constraints. 
 The defender's LP admits an optimal solution, otherwise Nash's theorem of NE existence would be violated. 
 
 Let $[\vect \beta, z]$ be an extreme point of the polyhedron defined by the constraints of
\eqref{def_LP}. We define
$\vect x \triangleq \vect \beta/ z$. Note that $z  = \min[\Lambda \vect \beta]> 0$, since $\Lambda$ is positive, hence 
$\vect x$ is finite. Substituting for $\vect x$, the  constraints in \eqref{def_LP} become
$\Lambda_\text{} \vect x \geq \mathbf{1}_{|\mathcal{V}^R|}$, $\vect x \geq \vect 0$. 
Scaling the objective and the inequality constraints of~\eqref{def_LP} with a positive scalar results in an equivalent
problem which has the same maximizers~\cite[Chapter 4.1.3]{Boyd}.
Thus $\vect x$ is an extreme point of the polyhedron defined by $\Lambda_\text{} \vect x \geq \mathbf{1}_{|\mathcal{V}^R|}$, $\vect x \geq \vect 0$.  This concludes the proof of Lemma~\ref{def_lemma}.
\qed
\end{proof}

\begin{proof}[Lemma~\ref{block_lemma}]
Let $\vect x$ be an extreme point of the polyhedron $P$ defined by \eqref{eq:lp} that corresponds to defender equilibrium strategy $\vect \beta$ with $\vect \beta = \vect x / \| \vect x \|$. Such a point is of dimension $|\mathcal{V}^R| + 1$, thus  $|\mathcal{V}^R| + 1$ constraints are tight at $\vect x$. First note that at least one inequality constraint is tight at $\vect x$. Indeed, otherwise we would have all $|\mathcal{V}^R| + 1$ positivity constraints tight which is impossible since $\vect{0}_{|\mathcal{V}^R| + 1} \notin P$. 
We next show that, if an inequality constraint is tight, then the next one is tight as well. By recursion, that concludes the proof.  

Suppose that, at $\vect x$, inequality constraint $l\in\{1, |\mathcal{V}^R|-1 \}$ is tight and inequality constraint $l+1$ is loose. Subtracting the tight inequality constraint $l$ from the loose inequality constraint $l+1$ yields $x_{l+1}>\displaystyle \frac{r_{l+1} - r_{l}}{c_\text{d}}\|\vect x\|>0$. Hence we can define the transformed vector $\vect{\hat{x}}$ by:
\begin{equation*}
\hat{x}_i = 
\begin{cases}
	 x_i ~~~~~~~~~~~~~\text{for $i \in \{1, \! \cdots\!, l\} \!\cup\! \{l\!+\!3, \cdots\!,   |\mathcal{V}^R|\!+\!1\}$}, \\
         \displaystyle \frac{r_{l+1} - r_l}{c_\text{d}} \|\vect x\|  ~~~~~~~~~~~~~~~~~~~~~~~~~~~~~\text{for $i= l+1$},  \\ 
          x_{l+2} + x_{l+1} -  \displaystyle \frac{r_{l+1} - r_l}{c_\text{d}} \|\vect x\|   ~~~~~~~~~~~\text{for $i= l+2$.}
\end{cases}
\end{equation*}
With this definition, we have $ \|\hat{\vect x}\| = \|\vect x\|$ and, since both $\vect{x}$ and $\vect{\hat{x}}$ belong to $P$ and have (at least) a tight inequality constraint, $\min[\Lambda \vect{x}]=\min[\Lambda \vect{\hat{x}}]=1$. Hence, defining $\vect{\hat{\beta}} = \vect{\hat{x}} / \| \vect{\hat{x}} \|$, we have $\min[\Lambda \vect{\hat{\beta}}]= \min[\Lambda \vect{\beta}]$. Finally, since $\vect \mu$ is strictly decreasing by assumption, we have $\vect \mu' (\vect \beta - \hat{\vect \beta}) = \mu_{l+1} (\beta_{l+1}-\hat{\beta}_{l+1}) + \mu_{l+2} (\beta_{l+2} - \hat{\beta}_{l+2}) >0$.
We conclude that 
$$\min[\Lambda \vect{\hat{\beta}}] - \mu^{\prime} \vect{\hat{\beta}}  > \min[\Lambda \vect{\beta}] - \mu^{\prime} \vect{\beta},$$
which contradicts the fact that $\vect \beta$ is a NE strategy.
\qed
\end{proof}

\begin{proof}[Lemma~\ref{forms_lemma}]
Let $\vect x$ be an extreme point of polyhedron $P$ that corresponds to a defender NE strategy $\vect \beta$. By Lemma~\ref{block_lemma}, inequality constraints $s$ through $|\mathcal{V}^R|$ are tight at $\vect x$.

We first show that $x_i = 0, \forall i \in \{1, \cdots, s-1 \}$. Suppose that there exists $i \in  \{1, \cdots, s-1 \}$ s.t. $x_i > 0$. From $\vect x$, we define a new point $\vect{\hat{x}}$ by the following transformation. Start with $\vect{\hat{x}} = \vect x$, then reduce $\hat{x}_i$ and increase $\hat{x}_{i+1}$ by the same amount until either $\hat{x}_i = 0$ or the inequality constraint $i$ is tight. 
If $x_{i+1}>0$, then the tight constraints at $\vect x$ remain tight at $\vect{\hat{x}}$ but $\vect{\hat{x}}$ has one more tight constraint (either positivity or inequality) which contradicts the fact that $\vect x$ is an extreme point. 
If $x_{i+1}=0$, then $\vect x$ and $\vect{\hat{x}}$ have the same number of tight constraints, but defining $\vect{\hat{\beta}} = \vect{\hat{x}} / \| \vect{\hat{x}} \|$ we have $\min[\Lambda \vect{\hat{\beta}}]= \min[\Lambda \vect{\beta}]$ and $\vect \mu' \vect \beta > \vect \mu'  \vect{\hat{\beta}}$, which contradicts the fact that  $\vect \beta$ is a NE strategy.

Next, we note that for any $i \in \{ s+1, \cdots, |\mathcal{V}^R|\}$, subtracting tight inequality $i-1$ from tight inequality $i$ yields $x_i = \displaystyle \frac{r_i-r_{i-1}}{c_\text{d}} \| \vect x\|>0$.

 We finally show that $x_s$ and $x_{|\mathcal{V}^R|+1}$ cannot be both positive. Suppose that $x_s>0$ and $x_{|\mathcal{V}^R|+1} > 0$. Then, we define $\vect{\hat{x}}$ from $\vect x$ by the following transformation:
\begin{equation}
\hat{x}_i =
\begin{cases}
	 \gamma x_i & \forall i \in \{1, \ldots, s-1\} \cup \{s+1,
         \ldots, |\mathcal{V}^R|\} \\ 0 & \text{for $i= s$} \\ \gamma (x_s + x_i) &
         \text{for $i= |\mathcal{V}^R|+1$,}
\end{cases}
\end{equation} with $\gamma = \displaystyle \frac{1}{1-c_\text{d} x_s} = \displaystyle \frac{1}{\|\vect x\| [ r_{|\mathcal{V}^R|} - r_s + \epsilon]}>1$. The definition of $\gamma$ is such that the $s^\text{th}$ inequality constraint is still tight after the transformation: 
$\gamma \left( c_\text{d} 0 + [r_{|\mathcal{V}^R|} - r_s + \epsilon] \| \vect{x}\|
\right) = 1.$ The loose inequality constraints before $s$ become looser at $\vect{\hat{x}}$ and the previously tight inequality constraints $s+1$ through $|\mathcal{V}^R|$ remain tight at $\vect{\hat{x}}$; but there is one extra tight positivity constraint at $\vect{\hat{x}}$ ($\hat{x}_s=0$), which contradicts the fact that $\vect x$ is an extreme point. 

Combining the above three arguments, we conclude that $\vect x$ has a form given by one of the two types in the lemma.

We now show that there is at most one extreme point $\vect x$ of Type I that corresponds to a NE strategy $\vect \beta$. Suppose there exist two extreme points $\vect x^{1a}, \vect x^{1b}$ of Type I that both correspond to NE defender strategies $\vect \beta^{1a}, \vect \beta^{1b}$. Then,  $$\min[\Lambda \vect \beta^{1a}] - \vect \mu^{\prime} \vect \beta^{1a} = \min[\Lambda \vect \beta^{1b}] - \vect \mu^{\prime} \vect \beta^{1b}.$$ 
Without loss of generality, suppose that $s_{1b} > s_{1a}$.  The inequality constraint with index $|\mathcal{V}^R|$ is tight at both extreme points, therefore $\| \vect x^1 \|= \|\vect x^2 \| = \displaystyle \frac{1}{c_\text{d}+ \epsilon}$ which yields that $ \min[\Lambda \vect \beta^{1a}]  =\min[\Lambda \vect \beta^{1b}] = c_\text{d} + \epsilon$. Then we have 
\begin{align*}
 \min[\Lambda \vect \beta^{1b}] - \vect \mu^{\prime} \vect{\beta}^{1b}  &= \min[\Lambda \vect \beta^{1a}] - \vect \mu^{\prime} \vect \beta^{1b} \\
							&> \min[\Lambda \vect \beta^{1a}] - \vect \mu^{\prime} \vect \beta^{1a},	
							\end{align*}
where the last inequality is a result of the strictly decreasing false alarm penalty vector $\vect \mu$; which is a contradiction. 

Finally, we show that there exist at most two adjacent optimal extreme points of Type II. 
Suppose that there exist three optimal extreme points $\vect x^{1}, \vect x^2, \vect x^3$ of Type II that correspond to defender NE strategies $\vect \beta^{1}, \vect \beta^2, \vect \beta^3$ respectively. Then these are adjacent~\cite{Bertsimas:1997:ILO:548834}.  Without loss of generality, suppose that $s_{2}, s_{2}+1, s_2+2$ are the starting indices of the block of inequality constraints that are tight at $\vect x^{1}, \vect x^2, \vect x^3$ respectively. 
Since all three extreme points are optimal, we have 
\begin{equation}
 \min[\Lambda \vect \beta^{1}]  \! - \! \vect \mu^{\prime} \vect{\beta}^{1}  \! =\!   \min[\Lambda \vect \beta^{2}]\!  - \!  \vect \mu^{\prime} \vect{\beta}^{2} \! =  \! \min[\Lambda \vect \beta^{3}] \! - \!  \vect \mu^{\prime} \vect{\beta}^{3}.
 \label{eq:tie_three}
\end{equation}
The first inequality constraint that is tight for each extreme point yields:
$$
 \min[\Lambda \vect \beta^{i}] = r_{|\mathcal{V}^R|} - r_{s_2 + i-1} + \epsilon, \text{ for } 1\le i \le 3.$$
Subtracting consecutive tight inequalities yield
$$ \beta^i_{j} = \frac{r_{j} - r_{j - 1}}{c_\text{d}},\text{ for } 1\le i \le 3,  ~s_2 + i - 1\le j \le  |\mathcal{V}^R|,$$
and since all three extreme points are of Type II, we have
\begin{equation*} 
\beta^i_{j} = 0,  \text{ for } 1\le i \le 3,  ~0\le j \le s_2 + i -1; 
\end{equation*}
With this, the first and second equality in \eqref{eq:tie_three} bring
\begin{align*}
c_\text{d} = & \mu_{s_2+1} - \mu_{|\mathcal{V}^R|+1}, \\
c_\text{d} = & \mu_{s_2+2} - \mu_{|\mathcal{V}^R|+1}, 
\end{align*}
respectively. This implies $\mu_{s_2+1} = \mu_{s_2+2}$, which contradicts the assumption that $\vect \mu$ is a strictly decreasing vector. 
The proof extends straightforwardly to rule out cases with more than three optimal extreme points. 
\end{proof}

\begin{proof}[Theorem~\ref{aggregate_theorem}]
Since game $\mathcal{G}^{R,T}$ is best-response equivalent to the zero-sum game with payoff matrix $\Lambda^{\textrm{eq}}$ for the defender, finding all Nash equilibria of $\mathcal{G}^{R,T}$ is equivalent to finding all solutions $\vect \beta$ of the defender's LP~\eqref{def_LP} and all solutions $\vect \alpha$ of the attacker's LP which is the dual of \eqref{def_LP}:
\begin{equation}
  \begin{aligned}
    & \underset{\vect \alpha, y}{\text{maximize}} & & y\\ 
    & \text{subject to} & & \vect \alpha^{\prime} \Lambda + y  \mathbf{1}_{|\mathcal{V}^R|+1} \leq  \vect \mu^{\prime}\\ 
    & & & \mathbf{1}^{\prime}_{|\mathcal{V}^R|} \cdot \vect{\alpha} \ge 1, ~\vect \alpha \geq \vect 0.\\
  \end{aligned}
  \label{att_LP}
\end{equation}
To find all the solutions of \eqref{def_LP}, we will use Lemma~\ref{forms_lemma} (recall that $\vect \beta = \vect x/ \| \vect x\|$). From the defender's solution $\vect \beta$, we will be able to give all the attacker's solutions $\vect \alpha$ using the complementary slackness condition of linear programming ~\cite[Chapter 6.2.1]{nash1996linear} and Lemma~\ref{scaled_spam}. 
Before enumerating the possible solutions, let us observe that the $j$-th inequality constraint of the dual \eqref{att_LP} can be written as 
\begin{align*}
c_\text{d} \sum_{i=j}^{|\mathcal{V}^R|} \alpha_i  - \sum_{i=1}^{|\mathcal{V}^R|} r_i \alpha_i + r_{|\mathcal{V}^R|} + \epsilon + y \leq \mu_{j}.
\end{align*}
If two consecutive inequality constraints $j$ and $j+1$ of the dual are tight, subtracting the second from the first gives $\alpha_j = \frac{\mu_j - \mu_{j+1}}{c_{\text{d}}} = \frac{1-p}{p} \frac{c_{\text{fa}}}{c_\text{d}} P_N^R(r_j)$. By complementary slackness, this will happen with $\beta_j>0$ and $\beta_{j+1}>0$, which implies $\pi_\text{d}(r_i) \in (0, 1)$. Hence we recover the result of Lemma~\ref{scaled_spam} using complementary slackness.

From Lemma~\ref{forms_lemma}, there can be at most three optimal basic feasible solutions of \eqref{def_LP}, one of type I and two adjacent of type II. We call $U^D_1$ (resp. $U^D_2$) the largest defender's utility for a basic feasible solution of type I (resp. type II). Each optimal basic feasible solution can be degenerate or non-degenerate: a type I solution is degenerate is $\beta_{s_1}=0$ and a type II solution is degenerate if $\beta_{|\mathcal{V}^R|+1}=0$. To prove Theorem~\ref{aggregate_theorem}, we simply go exhaustively through all possible cases.
\begin{trivlist}

\item[Case 1:] there exists a unique optimal basic feasible solution of type I ($U^D_1>U^D_2$) with index $s_1=s_1^*$. Then, we have $\beta_{s_1^*} \neq 0$ and  $\beta_{s_1^*} \neq k$ (otherwise the solution would also be of type II), $\beta_{s_1^*} \leq \frac{r_k - r_{k-1}}{c_\text{d}}$ (otherwise the $k$-th inequality constraint of the primal is violated), and $\beta_{|\mathcal{V}^R|+1} = 0$. Hence $\vect \beta$ is of the form of Theorem~\ref{aggregate_theorem}-($i$) with $k = s_1^*$. 

Since this solution is unique and non-degenerate, the dual also has a unique and non-degenerate solution. By complementary slackness, we have $\alpha_i = 0, \forall i \in \{1, \ldots, s_1^*-1\}$ and by Lemma~\ref{scaled_spam}, $\alpha_i$ satisfies  \eqref{eq:att_weight} for all $i \in \{ s_1^*, \cdots,  |\mathcal{V}^R| - 1\}$ since $\pi_\text{d}(r_i) \in (0, 1)$. Finally, $\alpha_{|\mathcal{V}^R|}>0$, otherwise the solution would be degenerate. Hence $\vect \alpha$ is also of the form of Theorem~\ref{aggregate_theorem}-($i$) with $k = s_1^*$.

\item[Case 2:] there exists a unique optimal basic feasible solution of type II ($U^D_2 > U^D_1$) with index $s_2=s_2^*$. Then, we have $\beta_{s_2^*} = 0$ and  $\beta_{|\mathcal{V}^R|+1}>0$. Hence $\vect \beta$ is of the form of Theorem~\ref{aggregate_theorem}-($ii$) with $k = s_2^*$. 

Since this solution is unique and non-degenerate, the dual also has a unique and non-degenerate solution. By complementary slackness, we have $\alpha_i = 0, \forall i \in \{1, \ldots, s_2^*-1\}$ and by Lemma~\ref{scaled_spam} or complementary slackness, $\alpha_i$ satisfies  \eqref{eq:att_weight} for all $i \in \{ s_2^*+1, \cdots,  |\mathcal{V}^R|\}$. Finally, $\alpha_{s_2^*}\neq 0$ and $\alpha_{s_2^*} \neq \frac{1-p}{p} \frac{c_{\text{fa}}}{c_\text{d}} P_N^R(r_{s_2^*})$ (otherwise the solution would be degenerate) and $\alpha_{s_2^*} \leq \frac{1-p}{p} \frac{c_{\text{fa}}}{c_\text{d}} P_N^R(r_{s_2^*})$ (otherwise the $s_2^*$-th inequality constraint of the dual would be violated). Hence $\vect \alpha$ is also of the form of Theorem~\ref{aggregate_theorem}-($ii$) with $k = s_2^*$.

\item[Case 3:] there exists a unique degenerate optimal basic feasible solution which is both of type I and II ($U^D_2 = U^D_1$) with index $s_1=s_2=s^*$. Then, we have $\beta_{s^*} = 0$ and $\beta_{|\mathcal{V}^R|+1}=0$. Hence $\vect \beta$ is of the form of Theorem~\ref{aggregate_theorem}-($iii$) with $k = s^*$. 

Since this solution is unique and degenerate, the dual has multiple solutions. By complementary slackness, we have $\alpha_i = 0, \forall i \in \{1, \ldots, s^*-1\}$ and by Lemma~\ref{scaled_spam} or complementary slackness, $\alpha_i$ satisfies  \eqref{eq:att_weight} for all $i \in \{ s^*+1, \cdots,  |\mathcal{V}^R|-1\}$. We also have $\alpha_{s^*} \leq \frac{1-p}{p} \frac{c_{\text{fa}}}{c_\text{d}} P_N^R(r_{s^*})$ (otherwise the $s^*$-th inequality constraint of the dual would be violated). Hence, the only two possible basic feasible solutions for the dual are the one with $\alpha_k=0$ and the one with $\alpha_k = \min (\frac{1-p}{p} \frac{c_{\text{fa}}}{c_\text{d}} P_N^R(r_k),  1-\sum_{i = k+1}^{|\mathcal{V}^R|-1}  \alpha_i)$, and with $\alpha_{|\mathcal{V}^R|} = 1-\sum_{i = k}^{|\mathcal{V}^R|-1}  \alpha_i$ in each case (note that if $\frac{1-p}{p} \frac{c_{\text{fa}}}{c_\text{d}} P_N^R(r_k) =  1-\sum_{i = k+1}^{|\mathcal{V}^R|-1}  \alpha_i$, then the second solution has $\alpha_{|\mathcal{V}^R|} = 0$ and is degenerate but this is no contradiction). Any linear combination of these two basic feasible solutions is optimal. Hence $\vect \alpha$ is also of the form of Theorem~\ref{aggregate_theorem}-($iii$) with $k = s^*$.

\item[Case 4:] there exist two non-degenerate optimal basic feasible solutions of type II ($U^D_2 > U^D_1$) with indices $s_2=s_{2a}^*$ and $s_2=s_{2b}^* = s_{2a}^*+1$. Then, any linear combination of these two solutions is an optimal solution and has $\beta_{s_{2a}^*}=0$, $\beta_{s_{2a}^*+1} \in [0, \frac{r_{s_{2a}^*+1} - r_{s_{2a}^*}}{c_\text{d}}]$ and $\beta_{|\mathcal{V}^R|+1}>0$ (both basic feasible solutions have $\beta_{|\mathcal{V}^R|+1}>0$ since they are non-degenerate). Hence $\vect \beta$ is of the form of Theorem~\ref{aggregate_theorem}-($iv$) with $k = s_{2a}^*+1$. 

Since there are multiple non-degenerate solutions for the primal, there is a unique degenerate solution for the dual. By complementary slackness, we have $\alpha_i = 0, \forall i \in \{1, \ldots, s_{2a}^*\}$ and by Lemma~\ref{scaled_spam} or complementary slackness, $\alpha_i$ satisfies \eqref{eq:att_weight} for all $i \in \{ s_{2a}^*+1, \cdots,  |\mathcal{V}^R|\}$. Hence $\vect \alpha$ is also of the form of Theorem~\ref{aggregate_theorem}-($iv$) with $k = s_{2a}^*+1$.

\item[Case 5:] there exist one non-degenerate optimal basic feasible solution of type I and one non-degenerate optimal basic feasible solution of type II ($U^D_2 = U^D_1$) with indices $s_1=s_{1}^*$ and $s_2=s_{2}^*$. These two solutions must be adjacent, hence $s_{1}^*=s_{2}^*$. For arguments similar to case 1, the type I solution has $\beta_{s_1^*}\in (0, \frac{r_k - r_{k-1}}{c_\text{d}})$ and $\beta_{|\mathcal{V}^R|+1}=0$ whereas for arguments similar to case 2, the type II solution has $\beta_{s_1^*}=0$ and $\beta_{|\mathcal{V}^R|+1}>0$. Any linear combination of these two basic feasible solutions is an optimal solution. Hence $\vect \beta$ is of the form of Theorem~\ref{aggregate_theorem}-($iv$) with $k = s_{1}^*$. 

Since there are multiple non-degenerate solutions for the primal, there is a unique degenerate solution for the dual.  By complementary slackness, we have $\alpha_i = 0, \forall i \in \{1, \ldots, s_{1}^*-1\}$ and by Lemma~\ref{scaled_spam} or complementary slackness, $\alpha_i$ satisfies \eqref{eq:att_weight} for all $i \in \{ s_{1}^*, \cdots,  |\mathcal{V}^R|\}$. Hence $\vect \alpha$ is also of the form of Theorem~\ref{aggregate_theorem}-($iv$) with $k = s_{1}^*$.

\item[Case 6:] there exist one degenerate optimal basic feasible solution of type I and type II with indices $s_1=s_{1}^*=s_2 = s_{2a}^*$ and one non-degenerate optimal basic feasible solution of type II ($U^D_2 = U^D_1$) with index $s_2=s_{2b}^*=s_{2a}^*+1$. The first solution has $\beta_{s_{2a}^*} = 0$ and $\beta_{|\mathcal{V}^R|+1}=0$. The second solution has $\beta_{s_{2a}^*+1} = 0$ and $\beta_{|\mathcal{V}^R|+1}>0$. Any linear combination of these two basic feasible solutions is an optimal solution. Hence $\vect \beta$ is of the form of Theorem~\ref{aggregate_theorem}-($iv$) with $k = s_{2a}^*+1$. 

Since there are multiple solutions for the primal, any solution for the dual is degenerate and since at least one of the primal's solution is non-degenerate, the dual's solution must be unique (otherwise, the dual's dual which is the primal would have all its solutions degenerate). By complementary slackness and Lemma~\ref{scaled_spam} applied to both of the primal's optimal basic feasible solutions, the dual solution must have $\alpha_i = 0, \forall i \in \{1, \ldots, s_{2a}^*\}$ and $\alpha_i$ satisfies \eqref{eq:att_weight} for all $i \in \{ s_{2a}^*+1, \cdots,  |\mathcal{V}^R| -1\}$. Hence $\vect \alpha$ is also of the form of Theorem~\ref{aggregate_theorem}-($iv$) with $k = s_{2a}^*+1$. 

\item[Case 7:] there exist one non-degenerate optimal basic feasible solution of type I and two non-degenerate optimal basic feasible solutions of type II ($U^D_2 = U^D_1$) with indices $s_1=s_{1}^*$, $s_2=s_{2a}^*=s_1^*$ and $s_2=s_{2a}^*+1$. 

This case cannot happen. Indeed, there would be a unique non-degenerate solution for the dual. By complementary slackness or Lemma~\ref{scaled_spam} applied to the type I solution, we would have that $\alpha_{s_{2a}^*}$ satisfies \eqref{eq:att_weight} hence $\alpha_{s_{2a}^*}>0$ and by complementary slackness applied to the second type II solution, we would have $\alpha=0$, which is impossible. 
\qed 
\end{trivlist}
\end{proof}
}{}

\end{document}